\author{Yu-Chen Shen and Matthieu R. Bloch \\[1.0ex]
\normalfont\small School of Electrical and Computer Engineering, Georgia Instfitute of Technology \\[-1.5ex]
\normalfont\small yshen469@gatech.edu, matthieu.bloch@ece.gatech.edu}
\title{Entanglement-Assisted Bosonic MAC: Achievable Rates and Covert Communication}
\begin{document}
\maketitle
\thispagestyle{fancy}
\allowbreak
\begin{abstract}
We consider the problem of covert communication over the entanglement-assisted (EA) bosonic multiple access channel (MAC). We derive a closed-form achievable rate region for the general EA bosonic MAC using high-order phase-shift keying (PSK) modulation. Specifically, we demonstrate that in the low-photon regime the capacity region collapses into a rectangle, asymptotically matching the point-to-point capacity as multi-user interference vanishes. We also characterize an achievable covert throughput region, showing that entanglement assistance enables an aggregate throughput scaling of $O(\sqrt{n} \log n)$ covert bits with the block length $n$ for both senders, surpassing the square-root law as in the point-to-point case. Our analysis reveals that the joint covertness constraint imposes a linear trade-off between the senders throughput.
\end{abstract}

\section{Introduction}\label{sec:intro}
We consider a bosonic Multiple Access Channel (MAC) $\mathcal{L}^{(\tau, \kappa, N_B)}_{A_XA_Y\to BW}$ where two senders $A_X$ and $A_Y$ attempt to transmit classical information by modulating their respective modes, $\hat{a}_X$ and $\hat{a}_Y$. These modes are first mixed through a linear combination characterized by a transmission parameter $\tau$, resulting in the combined mode $\hat{a} = \sqrt{\tau}\hat{a}_X + \sqrt{1-\tau}\hat{a}_Y$. This merged signal then undergoes a second mixing stage as it propagates through a lossy bosonic channel, modeled as a beamsplitter with transmissivity $\kappa$ that mixes the signal $\hat{a}$ with a thermal environment $\hat{e}$ in a thermal bath $\rho_{N_B}$ with mean photon number $N_B$~\cite{Weedbrook2012Gaussian}. The resulting annihilation operators at the receiver (Bob) and the potential eavesdropper (Willie) are given by \cite{Shi2021Entanglement,Shi_2022}:
\begin{align}
    &\hat{b} = \sqrt{\kappa}\hat{a} + \sqrt{1-\kappa}\hat{e};\\
    &\hat{w} = -\sqrt{1-\kappa}\hat{a} + \sqrt{\kappa}\hat{e}.
\end{align}
The communication is supported by entanglement resources shared between $A_X$, $A_Y$, and Bob. Specifically, sender $A_X$ shares a set of two-mode squeezed vacuum (TMSV) pairs with Bob, as does sender $A_Y$ with Bob, while no entanglement is shared between the two senders~\cite{Shi2021Entanglement}. The TMSV state $|\psi\rangle_{AI}$ is a bipartite entangled state defined in the Fock basis as \cite{Weedbrook2012Gaussian}
\begin{align}
    |\psi\rangle_{AI} = \sum_{k=0}^{\infty} \sqrt{\frac{N_S^k}{(N_S+1)^{k+1}}} \ket{k}_A \ket{k}_I,\label{eq:TMSV}
\end{align}
where $N_S$ is the mean photon number per mode. Operationally, each sender applies a phase modulation $\hat{U}_{\theta} = \exp(j2\theta\hat{a}^{\dagger}\hat{a})$ on its reference system before transmission~\cite{Su2024Achievable}. Bob stores the idler systems $I_X$ and $I_Y$ and performs a joint measurement on the channel output $\hat{b}$ on system $B$ and both idlers to decode the messages.

MACs have already attracted much attention. The capacity region is fully known for classical channels~\cite{Ahlswede_1974} and finite-dimensional classical-quantum channels with and without entanglement assistance~\cite{Winter_2001,Hsieh2008Entanglement}, though the evaluation of the unassisted region is often intractable because of regularization. There has been recent progress in characterizing the capacity region for bosonic Gaussian MACs~\cite{Yen2005Multiple,Shi2021Entanglement,Shi_2022}, including a single letter characterization of the optimal sum-rate for phase-insensitive bosonic Gaussian MACs~\cite{Shi_2022}. Entanglement-sharing between senders shall not be considered in the present work but recent results have unveiled surprising gains~\cite{Leditzky_2020,Pereg_2025}. 

We shall focus on the specific regime of covert communications, in which the goal is to reliably communicate while  also avoiding detection by an eavesdropper~\cite{Bash2015Hiding,Bash2013Limits}. Covert communications are often governed by a \emph{square-root law}, which requires the number of reliable bits to scale with the square root of the block length $n$. One can define a notion of covert capacity, which has been fully characterized for many classical channels~\cite{Bloch2015b,Wang2016b,Bouette_2025,Bounhar2025Capacity}, classical quantum channels~\cite{Wang2016c}, and Gaussian bosonic channels~\cite{Bash2015a,Bullock2020,Gagatsos2020Covert}. In particular, the utilization of shared entanglement allows one to improve the square-root law scaling to $O(\sqrt{n}\log n)$~\cite{Gagatsos2020Covert,Zlotnick2025Entanglement} while still making efficient use of entanglement resources~\cite{Cox2023Transceiver,Wang2024Resource}. 

The two main contributions of the present work are
\begin{inparaenum}[i)]
\item a general operational characterization of the achievable rate region for the entanglement-assisted (EA) Bosonic MAC;
\item its subsequent application to the regime of covert communication.
\end{inparaenum}
In particular, we extend what was already known for classical covert MACs~\cite{Arumugam2018a}.

Our first result (Theorem~\ref{theo:MAC_general}) establishes a closed-form achievable rate region for the EA Bosonic MAC under general power constraints using single-layer, high-order PSK modulation. By decomposing joint mutual information into single-user components, we map the MAC to effective point-to-point channels characterized in~\cite{Su2024Achievable} to yield explicit operational rate forms. Notably, while the general MAC capacity region typically forms a pentagon, the vanishing signal power in the covert regime renders multi-user interference negligible, collapsing our region into a rectangle that asymptotically matches the point-to-point EA capacity limits~\cite{Gagatsos2020Covert} for both senders (Remark~\ref{rem:optimality}).

Our second result (Theorem~\ref{thm:main}) extends the setup to covert communication over MAC with limited entanglement. Adopting the framework from~\cite{Wang2024Resource}, we employ a dual-layer strategy: \textbf{Layer 1} uses key assisted sparse coding to ensure covertness, while \textbf{Layer 2} utilizes EA high-rate PSK modulation to maximize throughput. We show this approach achieves an efficient $O(\sqrt{n}\log n)$ rate scaling. Crucially, we demonstrate that joint covertness enforces a strict linear trade-off between the senders' rates, as they must effectively partition a single global energy budget to remain undetected (Remark~\ref{rem:tradeoffs}).

The primary mathematical challenges addressed here involve generalizing point-to-point results to a two-sender setting, requiring two critical technical adjustments. First, to render the characterization of mutual information tractable, we utilize the chain rule in conjunction with a receiver strategy based on heterodyne detection of the idler modes. This approach effectively decouples the complex two-sender terms into independent point-to-point components. Second, we rigorously establish the codebook existence (Lemma~\ref{lemm:reliable_resolvable}) by addressing both channel resolvability and reliability. Specifically, we apply multipartite convex splitting~\cite{Cheng2023Quantum, George2024Coherent} to ensure the average output state remains statistically indistinguishable from the vacuum (covert) state, while simultaneously invoking joint achievability bounds from~\cite{Sen2018Unions}, incorporating a necessary state truncation at the receiver to guarantee reliability in the continuous-variable regime.

While prior works~\cite{Shi2021Entanglement,Shi_2022,Anderson2022Fundamental} established general capacity bounds for bosonic and entanglement-assisted MACs, closed-form expansions characterizing the rate region boundary in the covert limit have remained elusive. We address this with Theorem~\ref{theo:MAC_general} by deriving a concrete, closed-form lower bound on the capacity region and establishing its explicit asymptotic expansion in the low-power limit. This expansion reveals that the achievable region asymptotically matches the point-to-point EA capacity in the covert regime. Furthermore, regarding covert communication (Theorem~\ref{thm:main}), we generalize the analysis in~\cite{Wang2024Resource} to the MAC setting, explicitly characterizing the resource trade-off between two users governed by the joint covertness constraint.

The paper is organized as follows: Section~\ref{sec:preliminary} reviews preliminaries, Section~\ref{sec:general-MAC} derives the general EA-MAC rate region, and Section~\ref{sec:covert_MAC} characterizes the covert regime and its associated resource trade-offs, with the proof in Section~\ref{sec:covert_proof}.

\section{Notation and Preliminaries}\label{sec:preliminary}
Let $\mathbb{R}_{+}$ and $\mathbb{N}_{*}$ denote the sets of non-negative real numbers and positive integers, respectively. For any set $\Omega$, the indicator function is defined as $1\{\omega \in \Omega\} = 1$ if $\omega \in \Omega$ and $0$ otherwise. A sequence of $n$ elements is denoted by $x^n \triangleq (x_1, \dots, x_n) \in \mathcal{X}^n$, and we denote the type of $x^n$ as $\hat{p}_X$. Throughout this work, $\log$ denotes the natural logarithm (base $e$), and all information quantities are expressed in nats. Let $\mathcal{D}(\mathcal{H})$ denote the set of density operators acting on a separable Hilbert space $\mathcal{H}$, and let $\mathcal{D}_{\le}(\mathcal{H})$ denote the set of subnormalized density operators. For a quantum state $\rho \in \mathcal{D}(\mathcal{H})$, the von Neumann entropy is $\mathbb{H}(\rho) \triangleq -\text{Tr}(\rho \log \rho)$, and the trace distance between two states $\rho$ and $\sigma$ is defined as $\frac{1}{2}\|\rho - \sigma\|_1$, where $\|A\|_1 \triangleq \text{Tr}(\sqrt{A^\dagger A})$. 

For a quantum state $\rho_{AB}$, the Holevo information is ${I}(A;B)_\rho \triangleq {D}(\rho_{AB} \| \rho_A \otimes \rho_B)$. We utilize non-asymptotic entropic quantities, including the hypothesis testing relative entropy
\begin{align}
{D}_H^\epsilon(\rho \| \sigma) \triangleq -\log \inf \{ \text{Tr}(\Pi \sigma) : 0 \le \Pi \le I, \text{Tr}(\Pi \rho) \ge 1-\epsilon \},
\end{align}
and the hypothesis testing mutual information ${I}_H^\epsilon(A;B)_\rho \triangleq {D}_H^\epsilon(\rho_{AB} \| \rho_A \otimes \rho_B)$. The max-relative entropy of $\rho, \sigma \in \mathcal{D}_{\le}(\mathcal{H})$ such that $\text{supp}(\rho) \subseteq \text{supp}(\sigma)$ is defined as ${D}_{\max}(\rho \| \sigma) \triangleq \inf \{ \lambda \in \mathbb{R} : \rho \le e^\lambda \sigma \}$. The $\epsilon$-smooth max-relative entropy is defined as ${D}_{\max}^\epsilon(\rho \| \sigma) \triangleq \inf_{\rho' \in \mathcal{B}^\epsilon(\rho)} {D}_{\max}(\rho' \| \sigma)$, where $\mathcal{B}^\epsilon(\rho) \triangleq \{ \sigma \in \mathcal{D}(\mathcal{H}) : P(\rho, \sigma) \le \epsilon \}$. The smooth max-mutual information is defined as ${I}_{\max}^\epsilon(A;B)_\rho \triangleq {D}_{\max}^\epsilon(\rho_{AB} \| \rho_A \otimes \rho_B)$. The purified distance $P(\rho, \sigma)$ for $\rho, \sigma \in \mathcal{D}_{\le}(\mathcal{H})$ is defined as
\begin{align}
P(\rho, \sigma) \triangleq \sqrt{1 - F^2(\rho \oplus [1-\text{Tr}(\rho)], \sigma \oplus [1-\text{Tr}(\sigma)])},
\end{align}
where $F(\rho, \sigma) \triangleq \|\sqrt{\rho}\sqrt{\sigma}\|_1$ is the fidelity. The second-order asymptotics are governed by the quantum relative entropy variance $V(\rho \| \sigma) \triangleq \text{Tr}(\rho (\log \rho - \log \sigma - {D}(\rho \| \sigma))^2)$.

For the analysis of bosonic systems, we adopt the standard formalism of Gaussian quantum information \cite{Weedbrook2012Gaussian, Serafini2023Quantum}. In short, Gaussian states can be fully characterized by their first moments $\mu$ and covariance matrices $\Lambda$. We omit an exhaustive review of characteristics about Gaussian states and Gaussian channels for brevity, referring readers to \cite{Weedbrook2012Gaussian, Serafini2023Quantum} for comprehensive definitions and standard properties.

To establish the main results of this paper, we require one-shot bounds that characterize the reliability of message decoding and the resolvability of the channel output for covertness. The following lemma extends standard quantum multiple-access channel results to the one-shot regime, providing the necessary conditions for codebook existence.

\begin{lemma} \cite{Cheng2023Quantum, George2024Coherent, Sen2018Unions}\label{lemm:reliable_resolvable}
    Fix $\varepsilon \in (0,1), \eta\in(0,\varepsilon), \delta \in (0,1)$, and $\delta' \in (0,\delta)$. Suppose a c-q state $\rho_{XYA_XA_Y}$ and a channel $\mathcal{G}: \rho_{XYA_XA_Y} \mapsto \rho_{XYZW}$, where $A_X$ and $A_Y$ are mutually independent classical systems, $Z$ is a finite-dimensional quantum system held by Bob (receiver), and $W$ is a quantum system held by Willie (evasdropper). 
    Then there exists a coding scheme utilizing random codebooks $\mathcal{C}_{X_1}$ and $\mathcal{C}_{Y_1}$ (which map message-key pairs to the classical input systems $A_X$ and $A_Y$) with message set sizes $M_{X_1}, M_{Y_1}$ and shared secret key sizes $S_X, S_Y$, such that:
    \begin{align}
        &\log M_{X_1} \geq I_H^{(\varepsilon-\eta)^2}(X:YZ)_\rho - 3 \log_2\left(\frac{46}{3}\eta\right)\label{eq:rel_1}\\
        & \log M_{Y_1} \geq I_H^{(\varepsilon-\eta)^2}(Y:XZ)_\rho - 3 \log_2\left(\frac{46}{3}\eta\right)\label{eq:rel_2}\\
        & \log M_{X_1} + \log M_{Y_1} \geq I_H^{(\varepsilon-\eta)^2}(XY:Z)_\rho - 3 \log_2\left(\frac{46}{3}\eta\right)\label{eq:rel_3}\\
        &\mathbf{E}_{\mathcal{C}_{X_1}, \mathcal{C}_{Y_1}, S_X,S_Y}\left[\mathbf{P}((\hat{M}_{X_1}, \hat{M}_{Y_1}) \neq (M_{X_1},M_{Y_1})|S_X, S_Y)\right] \leq 46\varepsilon, \label{eq:reliability_error}
    \end{align}
    and
    \begin{align}
        &\log M_{X_1}S_X \leq I^{\frac{\delta-\delta'}{6}}_{\max}(X:W)_{\rho} - 2 \log\left(\frac{\delta'}{3}\right)\label{eq:res_1}\\
        &\log M_{Y_1}S_Y \leq I^{\frac{\delta-\delta'}{6}}_{\max}(Y:W)_{\rho} - 2 \log\left(\frac{\delta'}{3}\right)\label{eq:res_2}\\
        & \log M_{X_1}S_X + \log M_{Y_1}S_Y \leq I^{\frac{\delta-\delta'}{12}}_{\max}(XY:W)_{\rho} - 2 \log\left(\frac{\delta'}{6}\right)\label{eq:res_3}\\
        & \mathbf{E}_{\mathcal{C}_{X_1},\mathcal{C}_{Y_1}}\left[\frac{1}{2}\left\|\hat{\rho}_W-\rho_W\right\|_1\right] \leq \delta.\label{eq:covert_in_lemma}
    \end{align}
    Here, $\hat{\rho}_W$ denotes the induced soft-covering state on the system $W$, defined as the uniform mixture of the channel outputs over the codewords in the chosen codebooks $\mathcal{C}_{X_1}$ and $\mathcal{C}_{Y_1}$:
    \begin{align}
        \hat{\rho}_W = \frac{1}{M_{X_1}S_{X}M_{Y_1}S_Y} \sum_{m_{X_1}, k_X} \sum_{m_{Y_1},k_Y} \rho_W^{x_{m_{X_1}, k_X}, y_{m_{Y_1}, k_Y}},\label{eq:soft_covering state}
    \end{align}
    where $\rho_W^{x,y}$ corresponds to the state of the quantum side information on system $W$ given the input pair $(x,y)$.
\end{lemma}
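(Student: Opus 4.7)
The plan is to build a single randomized code and then combine a Sen-type union bound for joint decoding with multipartite convex splitting for soft covering, so that the reliability and resolvability assertions hold simultaneously in expectation over the same code.

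First, I would construct the random code as follows. For each message-key pair $(m_{X_1},k_X)\in[M_{X_1}]\times[S_X]$, draw a codeword $x_{m_{X_1},k_X}$ independently from the marginal classical distribution $p_X$ induced by $\rho_{XYA_XA_Y}$; do the same independently for sender $Y$ with $p_Y$. Because $A_X$ and $A_Y$ are mutually independent classical systems, the joint codebook has product structure, so the induced averaged state on $W$ over all $M_{X_1}S_XM_{Y_1}S_Y$ codeword pairs is exactly the one in~\eqref{eq:soft_covering state}, and the conditional state on $Z$ given a codeword pair factors through the channel $\mathcal{G}$.

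For the reliability side~\eqref{eq:rel_1}–\eqref{eq:reliability_error}, I would invoke the one-shot quantum MAC achievability bound via Sen's union/intersection inequality~\cite{Sen2018Unions}: treating $Z$ as the receiver's system (finite-dimensional, as required to apply the hypothesis-testing projectors), the simultaneous decoder succeeds in expectation with error at most $46\varepsilon$ as soon as the three pairwise and sum-rate message sizes satisfy the stated $I_H^{(\varepsilon-\eta)^2}$ bounds, with the $-3\log_2(46\eta/3)$ penalty coming from the standard smoothing and gentle measurement overheads in that reference. The secret keys $S_X,S_Y$ play no role here since the decoder is assumed to know them. For the resolvability side~\eqref{eq:res_1}–\eqref{eq:covert_in_lemma}, I would apply the multipartite (bipartite-senders) convex-splitting lemma of~\cite{Cheng2023Quantum,George2024Coherent} to the channel $(x,y)\mapsto \rho_W^{x,y}$: averaging over both codewords and keys yields a total effective "covering codebook" of sizes $M_{X_1}S_X$ and $M_{Y_1}S_Y$, and convex splitting guarantees that the trace distance of $\hat\rho_W$ from the target $\rho_W$ is at most $\delta$ in expectation provided each individual sender's effective codebook exceeds $I_{\max}^{(\delta-\delta')/6}(X;W)_\rho$ and $I_{\max}^{(\delta-\delta')/6}(Y;W)_\rho$, and the joint size exceeds $I_{\max}^{(\delta-\delta')/12}(XY;W)_\rho$, up to the $-2\log(\delta'/3)$ and $-2\log(\delta'/6)$ slack induced by smoothing.

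Finally, I would merge the two analyses: both inequality families~\eqref{eq:reliability_error} and~\eqref{eq:covert_in_lemma} are expectations over the \emph{same} product random codebook, so by a union/Markov argument a deterministic realization exists for which both hold simultaneously (up to mild constants absorbed into $\varepsilon$ and $\delta$), giving the claimed coding scheme. The main obstacle is consistency of the smoothing parameters and of the system $W$ across the two halves: the convex-splitting bound in~\cite{Cheng2023Quantum,George2024Coherent} is originally stated for finite-dimensional $W$, so to accommodate the bosonic/continuous-variable nature of Willie's mode in our application, I would precede its invocation by an energy-truncation/projection step that replaces $\rho_W^{x,y}$ by a suitably truncated state whose purified distance to the original is $O(\delta')$, and then show that the truncation error can be absorbed into the smoothing parameter of $I_{\max}^{\varepsilon}$. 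The analogous truncation on $Z$ is what the excerpt refers to as "state truncation at the receiver" needed to legitimize the use of Sen's bound on a continuous-variable system, and it is this CV-truncation bookkeeping — not the combinatorics — that I expect to be the delicate part.
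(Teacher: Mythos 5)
Your proposal follows essentially the same route as the paper's proof: i.i.d. product random codebooks, Sen's one-shot union/packing bound for the simultaneous decoder conditioned on known keys, multipartite convex splitting applied to the total (message-plus-key) codebook sizes for soft covering on $W$, and a combination of the two expectation bounds over the common random code. The only divergence is your proposed energy truncation of $\rho_W^{x,y}$, which the paper does not perform (it truncates only Bob's system and applies convex splitting to $W$ directly, the finiteness of $I_{\max}^{\epsilon}$ sufficing there); this extra care is harmless but not part of the paper's argument.
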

\begin{proof}[Proof of Lemma~\ref{lemm:reliable_resolvable}]
    We utilize random codebooks $\mathcal{C}_{X_1}$ and $\mathcal{C}_{Y_1}$ where each codeword corresponding to indices $(m_{X_1}, k_X)$ is generated i.i.d. according to the distribution $p_X(x)$ induced by $\rho_X$ (and similarly for $Y$).
    
    For reliability, we analyze the decoding performance conditioned on the shared keys $k_X$ and $k_Y$ being known to the receiver Bob. This reduces the problem to a standard multiple-access channel scenario. Invoking the one-shot packing bounds from \cite[Theorem 2']{Sen2018Unions} for the state $\rho_{XYZ}$, if the message rates satisfy the conditions \eqref{eq:rel_1}, \eqref{eq:rel_2}, and \eqref{eq:rel_3} in the lemma, the error probability averaged over random codebooks for fixed keys is bounded as:
    \begin{align}
        \mathbf{E}_{\mathcal{C}_{X_1}, \mathcal{C}_{Y_1}|k_X,k_Y}\left[\mathbf{P}((\hat{M}_{X_1}, \hat{M}_{Y_1}) \neq (M_{X_1},M_{Y_1})|k_X, k_Y)\right] \leq 46\varepsilon.
    \end{align}
    The final bound \eqref{eq:reliability_error} is obtained by averaging this inequality over all possible key realizations.

    For resolvability, we apply the multipartite convex splitting lemma \cite{Cheng2023Quantum, George2024Coherent}. We utilize the fact that the average error in channel resolvability corresponds to the trace distance bound in convex splitting. Following the derivation in \cite[Lemma 12 and Corollary 14]{George2024Coherent}, if the total rates (message plus randomness) satisfy conditions \eqref{eq:res_1}, \eqref{eq:res_2}, and \eqref{eq:res_3} in the lemma, the expected trace distance is bounded as:
    \begin{align}
        \mathbf{E}_{\mathcal{C}_{X_1},\mathcal{C}_{Y_1}}\left[\frac{1}{2}\left\|\hat{\rho}_W-\rho_W\right\|_1\right] \leq \delta.
    \end{align}
    Combining the conditions for reliability and resolvability completes the proof.
\end{proof}

\section{High-rate PSK over bosonic MAC}\label{sec:general-MAC}
In this section, we derive the achievable rate region for the entanglement-assisted (EA) bosonic multiple access channel (MAC) using phase-modulated two-mode squeezed vacuum (TMSV) states. 

To formalize the operational performance analysis, we define the code for the two-sender MAC as follows:
\begin{definition}[EA-MAC Code, Reliability, and Rates]
An $(n, M_X, M_Y, \epsilon)$ entanglement-assisted (EA) code for the bosonic MAC consists of two message sets $\mathcal{M}_X = \{1, \dots, M_X\}$ and $\mathcal{M}_Y = \{1, \dots, M_Y\}$, two encoding maps that modulate the reference systems of independent TMSV pairs, and a joint decoding POVM $\{\Lambda_{m_X, m_Y}\}$ acting on the $n$-fold channel output and stored idler systems $B^n I_X^n I_Y^n$. 
The code is said to be $\epsilon$-reliable if the average probability of error satisfies:
\begin{equation}
    P_e = 1 - \frac{1}{M_X M_Y} \sum_{m_X\in \mathcal{M}_X, m_Y\in\mathcal{M}_Y} \text{Tr}\left( \Lambda_{m_X, m_Y} \sigma_{B^n I_X^n I_Y^n}^{m_X, m_Y} \right) \le \epsilon.
\end{equation}
The transmission rates $R_X$ and $R_Y$ (in nats per channel use) are defined as:
\begin{equation}
    R_X = \frac{\log M_X}{n}, \quad R_Y = \frac{\log M_Y}{n}.
\end{equation}
A rate pair $(R_X, R_Y)$ is achievable if for any $\epsilon > 0$ and sufficiently large $n$, there exists an $(n, M_X, M_Y, \epsilon)$ code.
\end{definition}

Let the modulation order $L_n = 2^n$ be determined by the block length $n$. We define $\Theta_X$ and $\Theta_Y$ as uniform random variables taking values in the $L_n$-PSK constellation $\mathcal{S}_n \triangleq \{2\pi k / L_n\}_{k=0}^{L_n-1}$. These variables modulate the signal modes $A_X$ and $A_Y$ via the unitary phase rotation operator $\hat{U}_{\theta} \triangleq \exp(j2\theta\hat{a}^\dagger \hat{a})$.
In the single-shot regime, the corresponding joint encoded state is given by:
\begin{align}
    &\sigma^{\theta, \phi}_{A_XI_XA_YI_Y} \notag\\
    &= (\hat{U}_{\theta})_{A_X}\otimes (\hat{U}_{\phi})_{A_Y} (\ket{\psi}\bra{\psi}_{A_XI_X}\otimes\ket{\psi}\bra{\psi}_{A_YI_Y})(\hat{U}^\dagger_{\theta})_{A_X}\otimes (\hat{U}^\dagger_{\phi})_{A_Y},
\end{align}
where $\ket{\psi}_{AI}$ is the TMSV state given in \eqref{eq:TMSV}.
The signal modes $A_X$ and $A_Y$ are transmitted through the bosonic MAC channel $\mathcal{L}^{(\tau, \kappa, N_B)}_{A_XA_Y\to BW}$ defined in the introduction. The output state is denoted $\sigma^{\theta, \phi}_{BWI_XI_Y}$. The resulting classical-quantum state, correlating the classical phase registers with the quantum channel output, is defined as:
\begin{align}
    \sigma_{\Theta_X\Theta_YBWI_XI_Y} = \frac{1}{|\Theta_{X}||\Theta_{Y}|}\sum_{\theta\in \Theta_{X}} \sum_{\phi\in \Theta_{Y}} \ket{\theta}\bra{\theta}\otimes \ket{\phi} \bra{\phi} \otimes \sigma^{\theta, \phi}_{BWI_XI_Y}.\label{eq:cq_sigma}
\end{align}
Generalizing to the $n$-shot regime utilizing the channel $(\mathcal{L}^{(\tau, \kappa, N_B)}_{A_XA_Y\to BW})^{\otimes n}$, where senders $A_X$ and $A_Y$ independently modulate their signals via $\Theta_X^{\otimes n}$ and $\Theta_Y^{\otimes n}$, the resulting joint system is described by the $n$-fold tensor product state $\sigma^{\otimes n}_{\Theta_X\Theta_YBWI_XI_Y}$.

To obtain the achievable rate region, we apply the reliability part of Lemma~\ref{lemm:reliable_resolvable}. Since this lemma requires Bob's system (which is $B I_X I_Y$ in our case) to reside in a finite-dimensional Hilbert space, we let Bob apply the following truncation channel:
\begin{align}
    \mathcal{N}_{B^nI_X^nI_Y^n\to Z^nJ_X^nJ_Y^n} = \mathcal{N}_{B\to Z}^{\otimes n}\otimes \mathcal{N}^{\otimes n}_{I_X\to J_X}\otimes \mathcal{N}^{\otimes n}_{I_Y\to J_Y}.\label{eq:projection_channel_1}
\end{align}
Each component channel $\mathcal{N}_{K\to K'}$ is defined as a truncation map via a projection $P_{n,K}$ onto a finite-dimensional subspace for the pairs $(K, K') \in \{(B,Z), (I_X, J_X), (I_Y, J_Y)\}$. Any probability mass residing in the orthogonal complement is redirected to the vacuum state:
\begin{align}
    \mathcal{N}_{K\to K'}(\rho) = P_{n,K}\rho P_{n,K} + \mathrm{Tr}[(I - P_{n,K})\rho] \ket{0}\bra{0}_{K'}.
\end{align}
Composing the MAC channel $(\mathcal{L}^{(\tau, \kappa, N_B)}_{A_XA_Y\to B}\otimes \operatorname{id}_{I_XI_Y})^{\otimes n}$ with $\mathcal{N}_{B^nI_X^nI_Y^n\to Z^nJ_X^nJ_Y^n}$ yields the effective c-q state:
\begin{align}
    \sigma^{\otimes n}_{\Theta_X\Theta_YZJ_XJ_Y} = \mathcal{N}_{B^nI^n_XI^n_Y\to Z^nJ^n_XJ^n_Y}(\sigma^{\otimes n}_{\Theta_X\Theta_YBI_XI_Y}).
\end{align}
For each $n$, we choose the projections $P_{n,B}, P_{n,I_X}, P_{n, I_Y}$ with sufficient rank such that
\begin{align}
    \frac{1}{2}\left\|\sigma^{\otimes n}_{\Theta_X\Theta_YZJ_XJ_Y} - \sigma^{\otimes n}_{\Theta_X\Theta_YBI_XI_Y}\right\|_1 \leq \delta_n,
\end{align}
where $\delta_n$ is chosen small enough to ensure the validity of the subsequent second-order expansion.

Applying the reliability part of Lemma~\ref{lemm:reliable_resolvable}, there exists a random codebooks whose message sizes satisfy the following one-shot lower bounds: for any $\eta \in (0, \varepsilon)$,
\begin{align}
    \log M_X &\geq I_H^{(\varepsilon-\eta)^2}(\Theta_X^n:\Theta_Y^nZ^nJ^n_XJ^n_Y)_{\sigma^{\otimes n}} - 3 \log_2\left(\frac{46}{3}\eta\right)\\
    \log M_Y &\geq I_H^{(\varepsilon-\eta)^2}(\Theta_Y^n:\Theta_X^nZ^nJ^n_XJ^n_Y)_{\sigma^{\otimes n}} - 3 \log_2\left(\frac{46}{3}\eta\right)\\
    \log (M_X M_Y) &\geq I_H^{(\varepsilon-\eta)^2}(\Theta_X^n\Theta_Y^n:Z^nJ^n_XJ^n_Y)_{\sigma^{\otimes n}} - 3 \log_2\left(\frac{46}{3}\eta\right),
\end{align}
with the total error probability satisfying $P_e \leq 46\varepsilon$. By choosing $\eta = {O}(n^{-1/2})$ and noting that the projection error $\delta_n$ vanishes asymptotically, we apply the expansion from \cite[Proposition~13]{Oskouei:2018oeu} (utilizing the result analogous to that detailed subsequently in Lemma~\ref{lemm:expansion}) to yield:
\begin{align}
   \log M_X  &\geq nI(\Theta_{X}:BI_XI_Y\Theta_{Y})_{\sigma} \nonumber\\
   &- \sqrt{nV(\sigma_{\Theta_X\Theta_YBI_XI_Y}\|\sigma_{\Theta_X}\otimes \sigma_{\Theta_YBI_XI_Y})}Q^{-1}(\varepsilon^2)+ O(\log n)\\
    \log M_Y  &\geq nI(\Theta_{Y}:BI_XI_Y\Theta_{X})_{\sigma} \nonumber\\
        &- \sqrt{nV(\sigma_{\Theta_X\Theta_YBI_XI_Y}\|\sigma_{\Theta_Y}\otimes \sigma_{\Theta_XBI_XI_Y})}Q^{-1}(\varepsilon^2)+ O(\log n)\\
     \log M_X M_Y &\geq nI(\Theta_{X}\Theta_{Y}:BI_XI_Y)_{\sigma} \nonumber\\
        & - \sqrt{nV(\sigma_{\Theta_X\Theta_YBI_XI_Y}\|\sigma_{\Theta_X\Theta_Y}\otimes \sigma_{BI_XI_Y})}Q^{-1}(\varepsilon^2)+ O(\log n). \label{eq:second-order-expansion}
\end{align}
In the asymptotic limit ($n \to \infty$), the second-order and error terms are negligible. Consequently, for a fixed signal power $N_S$, an achievable rate region is:
\begin{align}
    \left\{ (R_X, R_Y) \;\middle|\; 
    \begin{aligned}
        R_X &\leq I(\Theta_{X}:BI_XI_Y\Theta_{Y})_{\sigma} \\[1ex]
        R_Y &\leq I(\Theta_{Y}:BI_XI_Y\Theta_{X})_{\sigma} \\[1ex]
        R_X + R_Y &\leq I(\Theta_{X}\Theta_{Y}:BI_XI_Y)_{\sigma}
    \end{aligned} 
    \right\}.
\end{align}
Our next task is to evaluate the explicit expressions for these mutual information terms.

Consider the state in \eqref{eq:cq_sigma}. For given phase angles $\theta$ and $\phi$, the covariance matrix of the receiver's zero-mean conditional Gaussian state $\sigma^{\theta, \phi}_{BI_XI_Y}$ is given by
\begin{align}
\Lambda_{BI_XI_Y}^{\theta,\phi} = \begin{pmatrix} 
(S\kappa + N(1-\kappa))\mathbb{I}_2 & C_q \sqrt{\kappa\tau}R(\theta) & C_q \sqrt{\kappa(1-\tau)}R(\phi) \\
C_q \sqrt{\kappa\tau}R(\theta) & S \mathbb{I}_2 & 0_2  \\
C_q \sqrt{\kappa(1-\tau)}R(\phi) & 0_2 & S \mathbb{I}_2
\end{pmatrix},
\end{align}
where $S = 2N_S + 1$, $N = 2N_T/(1-\kappa) + 1$, with $N_T = (1-\kappa)N_B$, the correlation coefficient is $C_q = 2 \sqrt{N_S(N_S+1)}$, and the rotation matrix $R(\theta)$ is defined as
\begin{align}
    R(\theta) = \begin{pmatrix} 
\cos\theta & \sin\theta \\ 
\sin\theta & -\cos\theta 
\end{pmatrix}.
\end{align}

We begin by analyzing the sum-rate bound $I(\Theta_X\Theta_Y:BI_XI_Y)_{\sigma}$. Let $M_{I_X}$ denote the classical outcomes obtained by performing heterodyne detection on the respective idler modes $I_X$. Using the chain rule for mutual information, we obtain the following expansion:
\begin{align}
    &I(\Theta_X\Theta_Y:BI_XI_Y)_{\sigma} \notag \\
    & = I(\Theta_X:BI_XI_Y)_{\sigma} + I(\Theta_Y:BI_XI_Y|\Theta_X)_{\sigma} \notag \\
    & \geq I(\Theta_X:BI_X)_{\sigma} + I(\Theta_Y:BM_{I_X}I_Y|\Theta_X)_{\sigma} \notag \\ 
    & = I(\Theta_X:BI_X)_{\sigma} + \underbrace{I(\Theta_Y:M_{I_X}|\Theta_X)_\sigma}_{=0} + I(\Theta_Y:BI_Y|M_{I_X}\Theta_X)_{\sigma} \notag \\
    & = I(\Theta_X:BI_X)_{\sigma} + I(\Theta_Y:BI_Y|M_{I_X}\Theta_X)_{\sigma}.\label{eq:two_to_one}
\end{align}
The inequality follows from the data processing inequality. Specifically, in the first term, we discard the system $I_Y$, while in the second term, we perform local heterodyne detection on $I_X$ to obtain $M_{I_X}$. Since partial trace and local measurements constitute local quantum channels that cannot increase mutual information, this yields a valid lower bound. Finally, the term $I(\Theta_Y:M_{I_X}|\Theta_X)$ vanishes due to the independence of $\Theta_Y$ from $\{\Theta_X, I_X\}$.
By symmetry, the analogous decomposition holds for the reverse decoding order. Let $M_{I_Y}$ denote the classical outcomes obtained by performing heterodyne detection on the idler modes $I_Y$. Then:
\begin{align}
     I(\Theta_X\Theta_Y:BI_XI_Y)_{\sigma} \geq I(\Theta_Y:BI_Y)_{\sigma} + I(\Theta_X:BI_X| M_{I_Y}\Theta_Y)_{\sigma}.\label{eq:two_to_one_2}
\end{align}
Our objective is to compute the specific information quantities that govern the achievable rate region derived in \eqref{eq:two_to_one}, and \eqref{eq:two_to_one_2}.

We begin by evaluating the conditional mutual information $I(\Theta_X:BI_X|M_{I_Y}\Theta_Y)_{\sigma}$ in \eqref{eq:two_to_one_2}. To do so, we examine the system properties when $\Theta_Y=\phi$ is fixed and the idler outcome $M_{I_Y}= [q_{I_Y}, p_{I_Y}]^T= \mathbf{m}_{I_Y}$ is obtained. We apply the general Gaussian filtering relations derived in \cite[Eqs.~5.143--5.144]{Serafini2023Quantum}, setting the measurement covariance to $\boldsymbol{\sigma}_m = \mathbb{I}_2$ to represent ideal heterodyne detection. The conditional mean of $BI_X$ becomes
\begin{align}
    \mu^\theta_{BI_X|\mathbf{m}_{I_Y}, \phi}& = \mathbf{E}[BI_X|M_{I_Y} =  \mathbf{m}_{I_Y}, \Theta_Y = \phi] \nonumber\\
    &= \begin{pmatrix} 
C_q\sqrt{\kappa(1-\tau)}\frac{1}{S+1}R(\phi)\mathbf{m}_{I_Y} \\ 
\mathbf{0}_2
\end{pmatrix} \nonumber\\
&= \begin{pmatrix} 
K R(\phi)\mathbf{m}_{I_Y} \\ 
\mathbf{0}_2
\end{pmatrix},\label{eq:condi-mean-1}
\end{align}
where $K =C_q\sqrt{\kappa(1-\tau)}/(S+1) = \frac{2\sqrt{\kappa(1-\tau)N_S(N_S+1)}}{2N_S+2}$. The corresponding conditional covariance matrix is obtained via the Schur complement:
\begin{align}
    \Lambda_{BI_X|\mathbf{m}_{I_Y}, \phi}^{\theta} &= \begin{pmatrix} 
(S\kappa + N(1-\kappa))\mathbb{I}_2 & C_q \sqrt{\kappa\tau}R(\theta)  \\
C_q \sqrt{\kappa\tau}R(\theta) & S \mathbb{I}_2 
\end{pmatrix} \nonumber\\
&\quad - \begin{pmatrix}
    C_q\sqrt{\kappa(1-\tau)}R(\phi)\\
    0_2
\end{pmatrix} (\mathbb{I}_2 + S\mathbb{I}_2)^{-1} \begin{pmatrix}
    C_q\sqrt{\kappa(1-\tau)}R(\phi) &
    0_2
\end{pmatrix}\nonumber\\
& = \begin{pmatrix} 
\left(S\kappa + N(1-\kappa) - \frac{C_q^2\kappa(1-\tau)}{S+1}\right)\mathbb{I}_2 & C_q \sqrt{\kappa\tau}R(\theta)  \\
C_q \sqrt{\kappa\tau}R(\theta) & S \mathbb{I}_2 
\end{pmatrix}. \label{eq:condi-var-1}
\end{align}
\begin{remark}[Variance Penalty of Heterodyne Detection]
Heterodyne detection constitutes a simultaneous measurement of non-commuting quadratures, which, by the uncertainty principle, introduces one unit of vacuum noise \cite{Serafini2023Quantum}. This effectively inflates the variance of the measurement outcomes from $S$ to $S+1$. This physical penalty manifests explicitly in the factor $\frac{1}{S+1}$ appearing in both the attenuation coefficient $K$ of the conditional mean \eqref{eq:condi-mean-1} and the subtraction term of the Schur complement in \eqref{eq:condi-var-1}. This ensures the derived lower bound is physically realizable.
\end{remark}

Importantly, while the conditional mean $\mu^\theta_{BI_X|\mathbf{m}_{I_Y}, \phi}$ depends on the specific realization of systems $M_{I_Y}$ and $\Theta_Y$, the conditional covariance matrix $\Lambda_{BI_X|\mathbf{m}_{I_Y}, \phi}^{\theta}$ is independent of these values. Since the von Neumann entropy of a Gaussian state is determined solely by its covariance matrix, the specific displacement caused by $\mathbf{m}_{I_Y}$ does not affect the mutual information. Therefore, by the definition of conditional mutual information, we can introduce an effective zero-mean $2^n$-PSK classical-quantum state, denoted $\sigma'_{\Theta_XBI_X}$, characterized by the covariance matrix in \eqref{eq:condi-var-1}, such that:
\begin{align}
    I(\Theta_X:BI_X|M_{I_Y}\Theta_Y)_{\sigma} = I(\Theta_X:BI_X)_{\sigma'}.\label{eq:sigma'-I}
\end{align}

Similarly, to evaluate the term $I(\Theta_Y:BI_Y|M_{I_X}\Theta_X)_\sigma$ in \eqref{eq:two_to_one}, we observe that conditioning on $M_{I_X}$ and $\Theta_X$ shifts the mean but leaves the covariance invariant. We can therefore introduce an effective zero-mean $2^n$-PSK c-q state, denoted $\sigma'_{\Theta_YBI_Y}$, characterized by the conditional covariance matrix:
\begin{align}
    \Lambda_{BI_Y|\mathbf{m}_{I_X}, \theta}^{\phi}
    &= \begin{pmatrix} 
    \left(S\kappa + N(1-\kappa) - \frac{C_q^2\kappa\tau}{S+1}\right)\mathbb{I}_2 & C_q \sqrt{\kappa(1-\tau)}R(\phi)  \\
    C_q \sqrt{\kappa(1-\tau)}R(\phi) & S \mathbb{I}_2 
    \end{pmatrix}. \label{eq:condi-var-2}
\end{align}
Consequently, the conditional mutual information reduces to the mutual information of this effective state:
\begin{align}
    I(\Theta_Y:BI_Y|M_{I_X}\Theta_X)_{\sigma} = I(\Theta_Y:BI_Y)_{\sigma'}.\label{eq:sigma'-II}
\end{align}

On the other hand, the unconditional term $I(\Theta_X:BI_X)_{\sigma}$ in \eqref{eq:two_to_one} corresponds to the effective point-to-point link described by the marginal state $\sigma_{\Theta_XBI_X}$ of \eqref{eq:cq_sigma}, in which the side information $\sigma_{BI_X}^{\theta}$ is characterized by the conditional covariance matrix:
\begin{align}
    \Lambda_{BI_X}^{\theta} = \begin{pmatrix} 
(S\kappa + N(1-\kappa))\mathbb{I}_2 & C_q \sqrt{\kappa\tau}R(\theta)  \\
C_q \sqrt{\kappa\tau}R(\theta) & S \mathbb{I}_2 
\end{pmatrix}.
\end{align}
Similarly, $I(\Theta_Y:BI_Y)_{\sigma}$ in \eqref{eq:two_to_one_2} is governed by the marginal state $\sigma_{\Theta_YBI_Y}$, in which the side information $\sigma_{BI_Y}^{\phi}$ is characterized by the covariance matrix:
\begin{align}
    \Lambda_{BI_Y}^{\phi} = \begin{pmatrix} 
(S\kappa + N(1-\kappa))\mathbb{I}_2 & C_q \sqrt{\kappa(1-\tau)}R(\phi)  \\
C_q \sqrt{\kappa(1-\tau)}R(\phi) & S \mathbb{I}_2 
\end{pmatrix}.
\end{align}

Combining the calculations above, we establish the following lemma characterizing the achievable sum-rate.
\begin{lemma}[Achievable Sum-Rate Lower Bound]\label{lemm:additive}
The joint mutual information of the entanglement-assisted bosonic MAC, defined by the c-q state $\sigma$ in \eqref{eq:cq_sigma}, satisfies the following achievable lower bound:
\begin{align}
    &I(\Theta_X \Theta_Y : B I_X I_Y)_{\sigma} \geq \notag \\
    & \max \Big\{ 
    I(\Theta_X : BI_X )_{\sigma} + I(\Theta_Y : BI_Y )_{\sigma'}, 
    I(\Theta_Y : BI_Y )_{\sigma} + I(\Theta_X : BI_X )_{\sigma'} 
    \Big\}. \label{eq:additive_loose}
\end{align}
\end{lemma}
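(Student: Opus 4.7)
The plan is to assemble the four identities already derived in the lead-up to the lemma: \eqref{eq:two_to_one}, \eqref{eq:two_to_one_2}, \eqref{eq:sigma'-I}, and \eqref{eq:sigma'-II}. One branch of the claimed maximum follows from \eqref{eq:two_to_one} by substituting \eqref{eq:sigma'-II} for the conditional mutual information $I(\Theta_Y:BI_Y|M_{I_X}\Theta_X)_\sigma$, yielding $I(\Theta_X\Theta_Y:BI_XI_Y)_\sigma \geq I(\Theta_X:BI_X)_\sigma + I(\Theta_Y:BI_Y)_{\sigma'}$. By the $X \leftrightarrow Y$ symmetry of the argument (using \eqref{eq:two_to_one_2} together with \eqref{eq:sigma'-I} instead), one obtains the other branch, and taking the maximum of the two inequalities gives \eqref{eq:additive_loose}.

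Two ingredients feed into these identities: a chain-rule / data-processing argument and a Gaussian-state invariance argument. For the first, I would start from $I(\Theta_X\Theta_Y:BI_XI_Y)_\sigma = I(\Theta_X:BI_XI_Y)_\sigma + I(\Theta_Y:BI_XI_Y|\Theta_X)_\sigma$, then discard $I_Y$ in the first summand (partial trace) and apply the local heterodyne channel on $I_X$ in the second (a CPTP map). Since both operations are local channels, they cannot increase mutual information. The independence of $\Theta_Y$ from $(\Theta_X,I_X)$ at the encoder kills the spurious term $I(\Theta_Y:M_{I_X}|\Theta_X)$ arising from a further chain-rule expansion, producing \eqref{eq:two_to_one}. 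For the second ingredient, the Schur-complement computation \eqref{eq:condi-mean-1}--\eqref{eq:condi-var-1} shows that conditioning on the pair $(M_{I_Y},\Theta_Y)=(\mathbf{m}_{I_Y},\phi)$ merely displaces the first moment of the Gaussian conditional state $\sigma^{\theta,\phi}_{BI_X|\mathbf{m}_{I_Y}}$ while leaving its covariance invariant. Since the von Neumann entropy of a Gaussian state depends only on its covariance, averaging over $(\mathbf{m}_{I_Y},\phi)$ does not change the relevant entropies, and the conditional mutual information collapses to the mutual information of the zero-mean effective c-q state $\sigma'$ with covariance \eqref{eq:condi-var-2}.

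The main obstacle I anticipate is making the averaging step fully rigorous in the continuous-variable setting: the heterodyne POVM has outcomes $\mathbf{m}_{I_Y}\in\mathbb{R}^2$ distributed according to a Gaussian density, so the "entropy depends only on covariance" invariance must be combined with an integral over $\mathbf{m}_{I_Y}$. The pointwise invariance is guaranteed because the outcome-dependent displacement is implemented by a Weyl unitary, but one then needs Fubini (or dominated convergence) to commute that invariance with both the integration over $\mathbf{m}_{I_Y}$ and the discrete averaging over $\Theta_Y$. Everything else reduces to standard chain-rule and data-processing manipulations, which I would carry out in that order before combining with the $X \leftrightarrow Y$ swap.
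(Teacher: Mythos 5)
Your proposal is correct and follows essentially the same route as the paper: the lemma is obtained precisely by combining the chain-rule/data-processing bounds \eqref{eq:two_to_one} and \eqref{eq:two_to_one_2} with the effective-state identifications \eqref{eq:sigma'-I} and \eqref{eq:sigma'-II}, then taking the maximum over the two decoding orders. Your additional remark about justifying the interchange of the heterodyne-outcome integration with the covariance-only entropy invariance is a reasonable point of rigor, but it does not represent a departure from the paper's argument.
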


We next evaluate the bounds for the individual rates $R_X$ and $R_Y$. For the rate $R_Y$, the corresponding mutual information is $I(\Theta_Y:BI_XI_Y\Theta_X)_{\sigma}$. Utilizing the data processing inequality, the mutual independence of the inputs ($I(\Theta_Y:\Theta_X)_{\sigma}=0$), and the independence of  $\{\Theta_X,I_X\}$ from the input $\Theta_Y$ ($I(\Theta_Y:I_X|\Theta_X)_{\sigma}=0$), the chain rule expansion simplifies as follows:
\begin{align}
    I(\Theta_Y:BI_XI_Y\Theta_X)_{\sigma} & \geq I(\Theta_Y:BM_{I_X}I_Y\Theta_X)_{\sigma}\nonumber\\
    &= \underbrace{I(\Theta_Y:\Theta_X)_{\sigma}}_{0} + I(\Theta_Y:BM_{I_X}I_Y|\Theta_X)_{\sigma} \nonumber\\
    &= \underbrace{I(\Theta_Y:M_{I_X}|\Theta_X)_{\sigma}}_{0} + I(\Theta_Y:BI_Y|M_{I_X}\Theta_X)_{\sigma} \nonumber\\
    &= I(\Theta_Y:BI_Y)_{\sigma'}, \label{eq:I-2}
\end{align}
where the final equality follows from the definition of the effective state $\sigma'$ in \eqref{eq:sigma'-II}. By symmetry, the achievable rate for $R_X$ is similarly bounded:
\begin{align}
    I(\Theta_X:BI_XI_Y\Theta_Y)_{\sigma} 
    \geq I(\Theta_X:BI_X|M_{I_Y}\Theta_Y)_{\sigma} 
    = I(\Theta_X:BI_X)_{\sigma'}. \label{eq:I-3}
\end{align}
where $\sigma'$ here refers to the corresponding effective state defined in \eqref{eq:sigma'-I}.

The mutual information terms appearing in the bounds \eqref{eq:additive_loose}, \eqref{eq:I-2}, and \eqref{eq:I-3}, namely $I(\Theta_X : BI_X)_{\sigma}$, $I(\Theta_Y : BI_Y)_{\sigma}$, $I(\Theta_Y:BI_Y)_{\sigma'}$, and $I(\Theta_X:BI_X)_{\sigma'}$, can all be interpreted as the Holevo information of specific point-to-point channels. These quantities directly correspond to the results derived in \cite[Proposition 1, 2]{Su2024Achievable}, subject to slight modifications of the covariance matrices. To demonstrate this correspondence explicitly, we restate the relevant propositions below.
\begin{proposition}[Proposition $1$ of \cite{Su2024Achievable}]\label{prop:info_1}
Consider the state 
\begin{align}
    \sigma_{\Theta BI} = \frac{1}{|\Theta|}\sum_{\theta\in \Theta} \ket{\theta}\bra{\theta}\otimes \sigma^{\theta}_{BI}.\notag,
\end{align}
where $\sigma^{\theta}_{BI}$ has mean $0$ and the covariance matrix
\begin{align}
\Lambda_{BBII}^{\theta} = \begin{pmatrix} 
(S\kappa + N(1-\kappa))\mathbb{I}_2 & C_q \sqrt{\kappa}R(\theta)  \\
C_q \sqrt{\kappa}R(\theta) & S \mathbb{I}_2 
\end{pmatrix},\notag
\end{align}
where $S, N, C_q, R(\theta)$ is defined before. Let $N_T = (1-\kappa)N_B$.
For point-to point bosonic channel, under the condition $N_{T}> \max\{\kappa N_{S}-1, \frac{-(1+2\kappa N_{S})+\sqrt{4\kappa N_{S}^{2}+4\kappa N_{S}+1}}{2}\}$, the mutual information of an $2^n$-PSK modulated TMSV state converges to that of a continuously phase-modulated TMSV state:
\begin{align}
&I(\Theta : BI)_\sigma \nonumber\\
&\ge I(\Theta : BI)_{\bar\sigma} - \frac{d(\frac{c^{2}}{1-a-b+ab-abz})^{2^{n}}}{(1-a-b+ab-abz)^{3}(1-(\frac{c^{2}}{1-a-b+ab-abz})^{2^{n}})^{5}} \nonumber\\
&\quad \cdot\sum_{i=0}^{4} P_{i} \left(\frac{c^{2}}{1-a-b+ab-abz}\right)^{i2^{n}},\notag
\end{align}
where $a, b, c, d, z$ are channel-dependent parameters detailed in \cite[Proposition 1]{Su2024Achievable}, $P_i$ are polynomial functions of the constellation size $L_n= 2^n$, and $\bar{\sigma}$ denote the state from uniform phase modulation with the same quantum side information.
\end{proposition}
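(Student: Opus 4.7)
The plan is to express the mutual information as an entropy difference and reduce the problem to bounding the gap between the discrete-PSK averaged output state and the continuously-modulated averaged output state. Writing $I(\Theta:BI)_\sigma = H(\bar\sigma_{BI}) - \mathbb{E}_\theta[H(\sigma^\theta_{BI})]$, the conditional covariance $\Lambda^\theta_{BBII}$ differs from its $\theta=0$ counterpart only through the symplectic rotation $R(\theta)\oplus \mathbb{I}_2$, which leaves the symplectic spectrum invariant. Consequently, $H(\sigma^\theta_{BI})$ is $\theta$-independent, and identical for both the discrete and the continuous ensembles, so the mutual-information gap reduces entirely to the entropy gap $H(\bar\sigma^{\mathrm{cont}}_{BI}) - H(\bar\sigma^{\mathrm{disc}}_{BI})$.

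Next I would expand $\sigma^\theta_{BI}$ in the two-mode Fock basis. The modulation $\hat U_\theta$ is diagonal in Fock basis, contributing a phase $e^{j\theta(m-m')}$ to the matrix element $\ket{m,k}\bra{m',k'}$ on the signal arm, so averaging over the $L_n = 2^n$ PSK phases retains precisely those entries with $m-m' \equiv 0 \pmod{L_n}$, while uniform continuous averaging retains only $m=m'$. Hence $\bar\sigma^{\mathrm{disc}}_{BI}-\bar\sigma^{\mathrm{cont}}_{BI}$ is a sum of residual off-diagonal blocks indexed by nonzero multiples of $L_n$. The Fock-basis amplitudes of the zero-mean two-mode Gaussian $\sigma^{\theta=0}_{BI}$ decay geometrically in the photon numbers at rates determined by the covariance entries; repackaging these rates via the channel-dependent scalars $a,b,c,d,z$ identifies $\rho_{\star} \triangleq c^{2}/(1-a-b+ab-abz)$ as the dominant geometric ratio of the tail, and summing the residual blocks yields an $\ell_1$-mass that scales as $\rho_{\star}^{L_n}/(1-\rho_{\star})^{3}$ up to polynomial prefactors in $L_n$.

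The final step is to translate the operator-norm estimate on $\bar\sigma^{\mathrm{disc}}_{BI}-\bar\sigma^{\mathrm{cont}}_{BI}$ into an entropy gap via a continuity bound such as Fannes--Audenaert or a direct Taylor expansion of $\rho\mapsto -\mathrm{Tr}(\rho\log\rho)$ about the Gaussian reference. The polynomials $P_i(L_n)$ appearing in the statement emerge by collecting contributions to this expansion according to their power of $\rho_{\star}^{L_n}$, each contribution being obtained by differentiating the appropriate Fock-series generating function and evaluating at the tail ratio. This is exactly the structural shape of the bound stated in the proposition, up to identification of coefficients.

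The main obstacle is bookkeeping rather than conceptual novelty: one must identify $a,b,c,d,z$ as explicit functions of $(\kappa, N_S, N_B)$, verify that the hypothesis $N_T > \max\{\kappa N_S-1, \tfrac{-(1+2\kappa N_S)+\sqrt{4\kappa N_S^{2}+4\kappa N_S+1}}{2}\}$ is exactly the condition that forces $\rho_{\star}<1$ so the geometric series converges, and then propagate constants through the entropy-continuity step to recover both the prefactor $(1-a-b+ab-abz)^{-3}$ and the precise polynomial coefficients $P_i$. I would carry out the structural estimate as above and defer the detailed constant tracking to \cite{Su2024Achievable}.
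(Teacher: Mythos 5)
This statement is not proved in the paper at all: it is imported verbatim as Proposition~1 of \cite{Su2024Achievable}, and the paper's ``proof'' is the citation itself, with the MAC application handled separately through the effective-parameter substitutions of Table~\ref{tab:effective_params}. So your sketch can only be measured against the derivation in the cited work. Your reduction is the natural one and its first half is sound: the conditional covariance $\Lambda^\theta_{BI}$ is the image of $\Lambda^0_{BI}$ under a local phase-space rotation of mode $B$, so $H(\sigma^\theta_{BI})$ is $\theta$-independent and the mutual-information gap collapses to the entropy gap between the discretely and continuously dephased average states, whose difference lives on Fock off-diagonals indexed by multiples of the constellation period and decays geometrically.

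The genuine gap is in your final step. Fannes--Audenaert is a finite-dimensional bound and does not apply to the infinite-dimensional system $BI$; the energy-constrained (Winter-type) substitutes yield corrections of the generic form $\epsilon\log(1/\epsilon)$ times an energy factor, which cannot reproduce the specific algebraic structure asserted here --- the ratio $\rho_\star^{2^n}$ with prefactor $(1-a-b+ab-abz)^{-3}$, the denominator $(1-\rho_\star^{2^n})^{5}$, and the polynomials $P_i$ up to $i=4$. That structure is the signature of a direct computation: the PSK-averaged state is block-diagonal over photon-number residues, its eigenvalue series can be compared term-by-term with the fully dephased (continuous) case, and the correction arises from tail sums of the form $\sum_m m^k \rho_\star^m$, i.e.\ derivatives of geometric series --- not from a trace-norm-to-entropy continuity argument. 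Your alternative of Taylor-expanding $\rho\mapsto-\mathrm{Tr}(\rho\log\rho)$ is also delicate, since the perturbation is not small relative to the exponentially small eigenvalues it displaces and $\log$ is unbounded near zero, so a naive expansion in the trace-norm perturbation is not justified. A secondary slip: with $\hat U_\theta=\exp(j2\theta\hat a^\dagger\hat a)$ the matrix element $\ket{m,k}\bra{m',k'}$ acquires the phase $e^{j2\theta(m-m')}$, so uniform averaging over the $L_n$-point constellation retains $m-m'\equiv 0 \pmod{L_n/2}$ rather than $\pmod{L_n}$, which changes the exponent of the tail unless the constellation is re-parametrized; this must be fixed before the stated $2^n$ exponents come out. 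In short, the skeleton is right, but the step you defer (``constant tracking'') is in fact the substantive part of the cited proof, and the continuity route you propose for it would not close.
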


\begin{proposition}[Proposition $2$ of \cite{Su2024Achievable}]\label{prop:info_2}
Let $N_T = (1-\kappa)N_B$. The achievable rate (mutual information) for a uniform phase-modulated TMSV state $\bar\sigma$ over a lossy thermal-noise bosonic channel is lower bounded by:
\begin{align}
I(\Theta : BI)_{\bar{\sigma}} \geq & \log\left[\frac{(N_{S}+1)N_{T}(N_{T} -\kappa+1)}{N_{T} -\kappa}\right] \nonumber \\
& + N_{S}\log\frac{N_{S}+1}{N_{S}} + N_{S}\log\frac{N_{T} }{N_{T} -\kappa} \nonumber \\
& + (\kappa N_{S}+N_{T} )\log\frac{N_{T} -\kappa+1}{N_{T}-\kappa} \nonumber \\
& + \dots - g(\mu_{+}-\tfrac{1}{2}) - g(\mu_{-}-\tfrac{1}{2}),\notag
\end{align}
where $g(x) = (x+1)\log (x+1) - x\log x$ is the von Neumann entropy of a thermal state and $\mu_{\pm}$ are the symplectic eigenvalues of the conditional Gaussian state:
\begin{align}
\mu_{\pm} = \frac{1}{2} \Bigg[ & \sqrt{(N_{T}  + (1+\kappa)N_{S} + 1)^{2} - 4\kappa N_{S}(N_{S}+1)} \nonumber \\
& \pm (N_{T} + (\kappa - 1)N_{S}) \Bigg].\notag
\end{align}
The omitted terms represented by the ellipsis correspond to the hypergeometric-based lower bounds detailed in \cite[Proposition 2]{Su2024Achievable}. 
\end{proposition}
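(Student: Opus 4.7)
The natural decomposition is $I(\Theta:BI)_{\bar\sigma} = H(BI)_{\bar\sigma} - H(BI|\Theta)_{\bar\sigma}$, and the task splits into evaluating the conditional entropy exactly and lower-bounding the unconditional one. For the conditional term I would exploit the fact that for each realization $\theta$, the state $\sigma^\theta_{BI}$ is zero-mean Gaussian with covariance matrix depending on $\theta$ only through the orthogonal rotation $R(\theta)$ appearing in the off-diagonal block. Since orthogonal conjugation leaves symplectic spectra invariant, the symplectic eigenvalues of $\Lambda^\theta_{BI}$ are $\theta$-independent, so $H(BI|\Theta)_{\bar\sigma} = g(\mu_+ - \tfrac{1}{2}) + g(\mu_- - \tfrac{1}{2})$. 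A direct symplectic diagonalization of the $4\times 4$ covariance matrix, together with the identities $S=2N_S+1$ and $C_q^2=4N_S(N_S+1)$, produces the closed-form $\mu_\pm$ stated in the proposition and accounts for the negative $g$-terms on the right-hand side of the bound.

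For $H(BI)_{\bar\sigma}$, the key observation is that continuous uniform phase averaging dephases the TMSV in the Fock basis of the signal mode, producing a classically correlated joint state $\sum_k \lambda_k \ket{k}\bra{k}_A \otimes \ket{k}\bra{k}_I$ with $\lambda_k = N_S^k/(N_S+1)^{k+1}$. Because the thermal-loss channel is phase-covariant, the output $\bar\sigma_{BI}$ remains diagonal in the joint Fock basis $\{\ket{m}_B \otimes \ket{k}_I\}$, so the von Neumann entropy collapses to the Shannon entropy of the joint occupation-number distribution $P(m,k)=\lambda_k P_{\kappa,N_B}(m|k)$, where $P_{\kappa,N_B}(m|k)$ is the standard output photon-number law of the thermal-loss channel acting on a Fock input. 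I would split this joint entropy as $H(K)+H(M|K)$, identify $H(K)=g(N_S)$, and expand $H(M|K)$ using the Bose-Einstein generating-function identity $\sum_k \lambda_k k = N_S$ together with the first moments of $P_{\kappa,N_B}(\cdot|k)$. This is where the explicit logarithmic contributions $\log[(N_S+1)N_T(N_T-\kappa+1)/(N_T-\kappa)]$, $N_S\log\tfrac{N_S+1}{N_S}$, $N_S\log\tfrac{N_T}{N_T-\kappa}$, and $(\kappa N_S+N_T)\log\tfrac{N_T-\kappa+1}{N_T-\kappa}$ emerge.

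The remaining sums cannot be reduced to elementary functions, and the ellipsis in the proposition statement is precisely the residual $\sum_k \lambda_k \log(\cdot)$ tails. Following \cite{Su2024Achievable}, I would rewrite each such sum via Gauss's hypergeometric series and bound the tail by closed-form ${}_2F_1$ evaluations to obtain a tractable lower bound. The main obstacle lies here: naive bounds such as Jensen's inequality or log-sum would discard the second-order scaling in $N_S$ that the low-power covert regime relies on, so the hypergeometric machinery is essential to preserve both the leading behaviour and the tight finite corrections, and the conditions $N_T > \max\{\kappa N_S - 1, \dots\}$ in Proposition~\ref{prop:info_1} become the convergence hypotheses under which these ${}_2F_1$ bounds remain finite. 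Combining the exact Gaussian conditional entropy with the hypergeometric-refined lower bound on $H(BI)_{\bar\sigma}$ and collecting terms yields the claimed inequality.
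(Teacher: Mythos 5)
This proposition is imported verbatim from \cite[Proposition 2]{Su2024Achievable}; the paper itself offers no proof, so there is no internal argument to compare against. Your reconstruction follows the standard (and almost certainly the reference's) route: Holevo information as $H(BI)-H(BI|\Theta)$, with the conditional term computed exactly from the $\theta$-independent symplectic spectrum (correct, since $\det R(\theta)=-1$ and $\det\Lambda^\theta$ are both $\theta$-independent, giving the $g(\mu_\pm-\tfrac12)$ terms), and the unconditional term reduced to the Shannon entropy of the joint photon-number distribution after dephasing, since $\bar\sigma_{AI}=\sum_k\lambda_k\ket{kk}\bra{kk}$ and the thermal-loss channel preserves Fock-diagonality. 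The split $H(K)+H(M|K)$ with $H(K)=g(N_S)$ correctly accounts for the $\log(N_S+1)$ and $N_S\log\frac{N_S+1}{N_S}$ pieces. Two soft spots: first, "expanding $H(M|K)$ using first moments" is not by itself a lower bound — the mean-photon identities only extract the terms of the form $\mathbb{E}[M]\log(\cdot)$ from an exact or upper-bounding factorization of $P(m|k)$, and the entire content of the bound lives in the residual hypergeometric sums you defer to the reference; second, the condition $N_T>\max\{\kappa N_S-1,\dots\}$ you cite governs the discrete-to-continuous PSK convergence in Proposition~\ref{prop:info_1}, whereas the implicit hypothesis needed here is rather $N_T>\kappa$ so that the logarithms and the ${}_2F_1$ arguments are well defined. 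Neither issue invalidates the plan; as a reconstruction of a cited result whose statement itself elides terms, it is sound.
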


To evaluate the achievable rates using Propositions~\ref{prop:info_1} and \ref{prop:info_2}, we map the general channel parameters $(\kappa, N_T)$ to effective parameters $(\kappa_{\text{eff}},(N_T)_{\text{eff}})$ for each decoding scenario in Table~\ref{tab:effective_params} by matching the structure of the corresponding covariance matrices.
\begin{table}[h!]
\centering
\caption{Effective Channel Parameters for Propositions~\ref{prop:info_1} and \ref{prop:info_2}}
\label{tab:effective_params}
\renewcommand{\arraystretch}{1.5}
\begin{tabular}{l|c|c}
\hline
\textbf{Information} & $\kappa_{\text{eff}}$ & $(N_T)_{\text{eff}}$ \\ \hline
$I(\Theta_X : BI_X)_{\sigma}$ & $\kappa\tau$ & $N_T + \kappa N_S(1-\tau)$ \\ 
$I(\Theta_Y : BI_Y)_{\sigma}$ & $\kappa(1-\tau)$ & $N_T + \kappa N_S\tau$ \\ \hline
$I(\Theta_X:BI_X)_{\sigma'}$ & $\kappa\tau$ & $N_T + \kappa N_S(1-\tau)\big(1 - \tfrac{4(1+N_S)}{2+2N_S}\big)$ \\ 
$I(\Theta_Y:BI_Y)_{\sigma'}$ & $\kappa(1-\tau)$ & $N_T + \kappa N_S\tau\big(1 - \tfrac{4(1+N_S)}{2+2N_S}\big)$ \\ 
\hline
\end{tabular}
\end{table}
Building upon the inequalities derived in \eqref{eq:additive_loose}, \eqref{eq:I-2}, and \eqref{eq:I-3}, we now formally write down the achievable rate region for the entanglement-assisted bosonic MAC.
\begin{theorem}[EA-MAC Achievable Rate Region]\label{theo:MAC_general}
For the entanglement-assisted bosonic multiple access channel, an achievable rate region for pairs $(R_X, R_Y)$ is defined by the intersection of the following constraints:
\begin{align}
    R_X &\leq I(\Theta_X : BI_X)_{\sigma'}, \notag\\
    R_Y &\leq I(\Theta_Y : BI_Y)_{\sigma'}, \notag\\
    R_X + R_Y &\leq \max \Big\{ 
    I(\Theta_X : BI_X )_{\sigma} + I(\Theta_Y : BI_Y )_{\sigma'},\nonumber\\
    &\quad \quad 
    I(\Theta_Y : BI_Y )_{\sigma} + I(\Theta_X : BI_X )_{\sigma'}
    \Big\}.\notag
\end{align}
The mutual information terms corresponding to the state ($\sigma'$) and the unconditioned state ($\sigma$) are evaluated using the lower bounds in Proposition~\ref{prop:info_1} and Proposition~\ref{prop:info_2} (subject to their respective mild conditions) using the effective channel parameters listed in Table~\ref{tab:effective_params}.
\end{theorem}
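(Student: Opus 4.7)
The plan is to assemble the pieces already developed in the preceding exposition. The reliability part of Lemma~\ref{lemm:reliable_resolvable}, together with the second-order expansion of~\cite{Oskouei:2018oeu}, has already produced the asymptotic achievable region $R_X \le I(\Theta_X:BI_XI_Y\Theta_Y)_\sigma$, $R_Y \le I(\Theta_Y:BI_XI_Y\Theta_X)_\sigma$, $R_X+R_Y \le I(\Theta_X\Theta_Y:BI_XI_Y)_\sigma$; the chain-rule reductions in~\eqref{eq:two_to_one}--\eqref{eq:two_to_one_2} and~\eqref{eq:I-2}--\eqref{eq:I-3} have collapsed each MAC term to an effective single-user Holevo quantity. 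What the proof must still supply is bookkeeping that collects these pieces into the stated region, plus the verification that each single-user term matches the template of Propositions~\ref{prop:info_1}--\ref{prop:info_2} with the parameters of Table~\ref{tab:effective_params}.

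First I would assemble the bounds: inequalities~\eqref{eq:I-2} and~\eqref{eq:I-3} yield the individual-rate constraints $R_X \le I(\Theta_X:BI_X)_{\sigma'}$ and $R_Y \le I(\Theta_Y:BI_Y)_{\sigma'}$, while Lemma~\ref{lemm:additive} --- obtained by applying both orderings of the chain rule from~\eqref{eq:two_to_one} and~\eqref{eq:two_to_one_2} and retaining the tighter of the two --- gives the $\max$ expression for the sum rate. Second, I would identify each of the four mutual informations $I(\Theta_X:BI_X)_\sigma$, $I(\Theta_Y:BI_Y)_\sigma$, $I(\Theta_X:BI_X)_{\sigma'}$, $I(\Theta_Y:BI_Y)_{\sigma'}$ as a point-to-point Holevo information by matching its covariance matrix against the template of Proposition~\ref{prop:info_1}. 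Because Gaussian entropy depends only on the covariance, the $\mathbf{m}_{I_Y}$-induced displacement in~\eqref{eq:condi-mean-1} does not affect the mutual information, which justifies replacing the conditional quantum state by the zero-mean effective state $\sigma'$ of covariance~\eqref{eq:condi-var-1}. Matching diagonal and off-diagonal blocks against the template then pins down $\kappa_{\text{eff}}$ and $(N_T)_{\text{eff}}$ as in Table~\ref{tab:effective_params}, after which Proposition~\ref{prop:info_2} (and, for the PSK-vs.-continuous truncation error, Proposition~\ref{prop:info_1}) supplies the explicit closed-form lower bound.

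The main obstacle is the parameter matching in the last step: one must verify that the Schur-complement block in~\eqref{eq:condi-var-1} can be recast in the canonical form required by Proposition~\ref{prop:info_1}, yielding the effective parameters $\kappa_{\text{eff}}=\kappa\tau$ and $(N_T)_{\text{eff}} = N_T + \kappa N_S(1-\tau)\bigl(1-\tfrac{4(1+N_S)}{2+2N_S}\bigr)$ of Table~\ref{tab:effective_params}. This requires tracking both the rotations $R(\theta),R(\phi)$ and the $1/(S+1)$ factor arising from the vacuum-noise penalty of heterodyne detection, and then checking that the positivity conditions on $N_T$ in Proposition~\ref{prop:info_1} remain satisfied once $N_T$ is replaced by $(N_T)_{\text{eff}}$ --- these are the ``mild conditions'' referenced in the statement of the theorem. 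Once this identification is in place, collecting the four resulting point-to-point lower bounds into the inequalities of the theorem completes the proof.
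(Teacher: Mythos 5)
Your proposal follows essentially the same route as the paper: the one-shot reliability bounds of Lemma~\ref{lemm:reliable_resolvable} with Bob's finite-dimensional truncation, the second-order expansion to obtain the asymptotic MAC region, the chain-rule/heterodyne decomposition of \eqref{eq:two_to_one}--\eqref{eq:two_to_one_2} and \eqref{eq:I-2}--\eqref{eq:I-3} into effective point-to-point terms, and the covariance-matrix matching against Propositions~\ref{prop:info_1}--\ref{prop:info_2} to populate Table~\ref{tab:effective_params}. This is exactly the argument the paper carries out in Section~\ref{sec:general-MAC}, so the proposal is correct and no further comparison is needed.
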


In the context of covert communication, the transmitted signal power is constrained by the blocklength $n$, specifically satisfying $N_S = s_n \in \omega(n^{-1/2})\cap o(1)$. By applying Theorem~\ref{theo:MAC_general} together with the non-asymptotic bounds from Propositions~\ref{prop:info_1} and \ref{prop:info_2}, we establish the following asymptotic scaling of the mutual information terms and achievable rate region. This scaling forms the basis for the achievable covert rates derived in Theorem~\ref{thm:main}.

\begin{lemma}[Asymptotic Scaling of Mutual Information] \label{lemma_D_theta_1}
In the covert regime where $N_S = s_n\in \omega(n^{-1/2})\cap o(1)$, the mutual information terms expand as follows:
\begin{align}
    I(\Theta_{X}\Theta_{Y}:BI_XI_Y)_{\sigma} &\geq \frac{-\kappa}{1+(1-\kappa)N_B}s_n\log s_n + O(s_n), \notag\\
    I(\Theta_{X}:\Theta_{Y}BI_XI_Y)_{\sigma} &\geq \frac{-\kappa\tau}{1+(1-\kappa)N_B}s_n\log s_n + O(s_n), \notag\\
    I(\Theta_{Y}:\Theta_{X}BI_XI_Y)_{\sigma} &\geq \frac{-\kappa(1-\tau)}{1+(1-\kappa)N_B}s_n\log s_n + O(s_n). \notag
\end{align}
Consequently, the achievable rate region converges to a rectangle defined by:
\begin{align}
    \left\{ (R_X, R_Y) \;\middle|\; 
    \begin{aligned}
        R_X &\leq \frac{-\kappa\tau}{1+(1-\kappa)N_B} s_n \log s_n + o(s_n\log s_n),  \\[1ex]
        R_Y &\leq \frac{-\kappa(1-\tau)}{1+(1-\kappa)N_B} s_n \log s_n + o(s_n\log s_n).
    \end{aligned} 
    \right\}.\label{eq:achieve-rate}
\end{align}
\end{lemma}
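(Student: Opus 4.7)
The plan is to translate each MAC mutual information into a point-to-point quantity via Theorem~\ref{theo:MAC_general} and Table~\ref{tab:effective_params}, and then to extract the leading $s_n\log s_n$ term from the explicit bounds in Propositions~\ref{prop:info_1} and \ref{prop:info_2} by Taylor-expanding around $N_S=0$.

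First I would apply Theorem~\ref{theo:MAC_general} to replace each of the three target quantities by its effective point-to-point Holevo information. The individual rate bound on $I(\Theta_X:\Theta_Y BI_XI_Y)_\sigma$ reduces to $I(\Theta_X:BI_X)_{\sigma'}$ with effective parameters $(\kappa\tau,\,N_T+\kappa s_n(1-\tau)(1-\tfrac{4(1+s_n)}{2+2s_n}))$; the factor $1-\tfrac{4(1+s_n)}{2+2s_n}$ simplifies to $-1$, so $(N_T)_{\mathrm{eff}}=N_T-\kappa s_n(1-\tau)=N_T+O(s_n)$, and the symmetric statement covers $I(\Theta_Y:\Theta_X BI_XI_Y)_\sigma$ with $(N_T)_{\mathrm{eff}}=N_T-\kappa s_n\tau$. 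For the joint quantity I would invoke the sum-rate lower bound of Theorem~\ref{theo:MAC_general}, which decomposes as one $\sigma$-type MI plus one $\sigma'$-type MI.

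Next I would expand the Proposition~\ref{prop:info_2} lower bound for an effective point-to-point channel with parameters $(\kappa_{\mathrm{eff}},(N_T)_{\mathrm{eff}})$ as $N_S=s_n\to 0$. At $N_S=0$ the symplectic eigenvalues satisfy $\mu_+-\tfrac{1}{2}=(N_T)_{\mathrm{eff}}$ and $\mu_--\tfrac{1}{2}=0$; a first-order expansion of the square root in the definition of $\mu_\pm$ gives
\begin{align}
\mu_--\tfrac{1}{2}=\frac{(N_T)_{\mathrm{eff}}+1-\kappa_{\mathrm{eff}}}{(N_T)_{\mathrm{eff}}+1}\,s_n+O(s_n^2).\notag
\end{align}
Combined with $g(x)=-x\log x+x+O(x^2)$ as $x\to 0^+$ and $s_n\log\frac{s_n+1}{s_n}=-s_n\log s_n+O(s_n)$, the Proposition~\ref{prop:info_2} expression collapses, after cancellation of its bounded-in-$s_n$ parts, to
\begin{align}
I(\Theta:BI)_{\bar\sigma}\geq\frac{-\kappa_{\mathrm{eff}}}{1+(N_T)_{\mathrm{eff}}}\,s_n\log s_n+O(s_n).\notag
\end{align}
Since $(N_T)_{\mathrm{eff}}=(1-\kappa)N_B+O(s_n)$, the prefactor becomes $-\kappa_{\mathrm{eff}}/(1+(1-\kappa)N_B)+O(s_n)$; substituting $\kappa_{\mathrm{eff}}\in\{\kappa\tau,\,\kappa(1-\tau)\}$ gives the two per-user scalings, and summing them produces the joint scaling. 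The PSK correction from Proposition~\ref{prop:info_1} is absorbed into $O(s_n)$: the ratio $c^2/(1-a-b+ab-abz)$ is strictly below one (of order $s_n$), so its $L_n$-th power with $L_n=2^n$ decays doubly exponentially in $n$ while the $P_i(L_n)$ grow only polynomially, and the hypothesis $s_n\in\omega(n^{-1/2})$ is more than enough to render the whole correction $o(s_n)$.

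Finally, for the rate region I would observe that the sum of the two individual-rate bounds of Theorem~\ref{theo:MAC_general} is $\frac{-\kappa}{1+(1-\kappa)N_B}s_n\log s_n+O(s_n)$, which equals the sum-rate bound to leading order; the sum-rate facet is therefore inactive and the pentagonal rate region collapses onto the rectangle \eqref{eq:achieve-rate}. The main obstacle will be the Taylor expansion of $\mu_-$ and of $g(\mu_- - \tfrac{1}{2})$: one must verify with care that the coefficient of the $s_n\log s_n$ singularity in $-g(\mu_- - \tfrac{1}{2})$ is exactly $((N_T)_{\mathrm{eff}}+1-\kappa_{\mathrm{eff}})/((N_T)_{\mathrm{eff}}+1)$, that it combines with $s_n\log\frac{s_n+1}{s_n}$ and the remaining Proposition~\ref{prop:info_2} terms to the clean prefactor $-\kappa_{\mathrm{eff}}/(1+(N_T)_{\mathrm{eff}})$, and that the implicit $s_n$-dependence of $(N_T)_{\mathrm{eff}}$ in the $\sigma'$ case enters only through the $O(s_n)$ remainder and not through the singular coefficient.
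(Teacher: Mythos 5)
Your proposal is correct in substance and shares the paper's first step — reducing each MAC mutual information to an effective point-to-point Holevo quantity via Theorem~\ref{theo:MAC_general} and the parameter map of Table~\ref{tab:effective_params} — but it diverges in how the low-photon expansion is then obtained. The paper simply cites \cite[Lemma~9]{Wang2024Resource}, which rests on the perturbation-theory results of \cite{Grace2022Perturbation} and \cite[Appendix~B]{Shi2020Practical}; you instead Taylor-expand the paper's own Proposition~\ref{prop:info_2} directly. Your key computation checks out: at $N_S=0$ one has $\mu_- - \tfrac12 = 0$, the first derivative gives $\mu_- - \tfrac12 = \tfrac{(N_T)_{\mathrm{eff}}+1-\kappa_{\mathrm{eff}}}{(N_T)_{\mathrm{eff}}+1}s_n + O(s_n^2)$, and combining $-g(\mu_--\tfrac12) = \tfrac{(N_T)_{\mathrm{eff}}+1-\kappa_{\mathrm{eff}}}{(N_T)_{\mathrm{eff}}+1}s_n\log s_n + O(s_n)$ with $N_S\log\tfrac{N_S+1}{N_S} = -s_n\log s_n + O(s_n)$ yields exactly the coefficient $-\kappa_{\mathrm{eff}}/(1+(N_T)_{\mathrm{eff}})$; the $O(s_n)$ shift in $(N_T)_{\mathrm{eff}}$ and the doubly-exponentially small PSK discretization penalty of Proposition~\ref{prop:info_1} are correctly absorbed into the remainder, and the observation that the individual bounds sum to the sum-rate bound at leading order is precisely why the region is a rectangle. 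The one caveat — which you yourself flag — is that Proposition~\ref{prop:info_2} as printed contains an ellipsis for the hypergeometric terms, so the cancellation of the $s_n$-independent constants (needed for the remainder to be $O(s_n)$ rather than $O(1)$) cannot be verified from the paper alone; closing that step requires either the omitted terms of \cite[Proposition~2]{Su2024Achievable} or falling back on the perturbation references the paper cites. Your route is more self-contained and makes the origin of the $s_n\log s_n$ singularity explicit, at the cost of this residual dependence on the unprinted part of Proposition~\ref{prop:info_2}; the paper's route avoids the expansion entirely but is purely a citation.
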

\begin{proof}
    Observe that the information quantities $I(\Theta_X : BI_X )_{\sigma}$, $I(\Theta_Y : BI_Y )_{\sigma'}$, $I(\Theta_X : BI_X )_{\sigma'}$, and $I(\Theta_Y : BI_Y )_{\sigma}$ correspond to effective point-to-point channel scenarios. Consequently, the proof proceeds analogously to \cite[Lemma~9]{Wang2024Resource}, which utilizes the results established in \cite[Theorems 2 and 5]{Grace2022Perturbation} and \cite[Appendix B]{Shi2020Practical}.
\end{proof}
\begin{remark}[Asymptotic Optimality and Geometry]\label{rem:optimality}
    In the low-photon regime ($\kappa s_n \ll N_B$), our achievable rate region collapses into a rectangle and asymptotically matches the fundamental point-to-point EA covert capacity derived in~\cite[Eq. 29]{Gagatsos2020Covert}. This optimality arises from two concurring effects: first, the vanishing signal-to-noise ratio renders multi-user interference negligible, effectively decoupling the channel; second, the information loss from our sub-optimal heterodyne measurement is asymptotically negligible compared to the dominant super-linear $O(s_n \log s_n)$ scaling of the covert mutual information.
\end{remark}

\section{Covert Communication over bosonic MAC}\label{sec:covert_MAC}
Building on the mutual information bounds established in the previous section, we now state our second main result regarding the covert achievability of the bosonic entanglement-assisted MAC. 

We first define formally the two-layered MAC code.
\begin{definition}[Two-layered MAC Code]
Let $n \in \mathbb{N}$ be the blocklength. Consider a sub-family of codes where the message sets for senders $A_X$ and $A_Y$ are decomposed as $M_{X_1}M_{X_2}$ and $M_{Y_1}M_{Y_2}$, respectively. An $(M_{X_1}, M_{Y_1}, M_{X_2}, M_{Y_2}, S_X, S_Y, l_x, l_y, \epsilon, \delta)$ two-layered MAC code consists of: 1. Encoding channels $\{\mathcal{E}_{M_{X_1}S_XM_{X_2} R_X^{l_x}\to A_X^n}^{(m_{X_1}, m_{X_2}, k_X)}\}$, $\{\mathcal{E}_{M_{Y_1}S_YM_{Y_2}R_Y^{l_y}\to A_Y^n}^{(m_{Y_1}, m_{Y_2}, k_Y)}\}$ that map message pairs, secret keys, and entanglement resources $l_x, l_y$ to channel input states;
2. A two-staged joint decoding POVM $(\{\Pi_{B^n}^{(m_{X_1}, m_{Y_1})}\}, \{\Gamma_{B^n I_X^{l_x}I_Y^{l_y}}^{(m_{X_1}, m_{Y_1}, m_{X_2}, m_{Y_2})}\})$ at the receiver (Bob);
is $\epsilon$-reliable and $\delta$-covert if the average decoding error over all messages and the joint covertness constraint satisfy:
\begin{align}
    &\mathbf{E}_{S_X,S_Y}\left[\mathbf{P}\left((\widehat{M}_{X_1}, \widehat{M}_{Y_1}, \widehat{M}_{X_2}, \widehat{M}_{Y_2}) \neq (M_{X_1}, M_{Y_1}, M_{X_2}, M_{Y_2})\right)\right] \le \epsilon, \label{eq:reliable_equa}\\
    & \frac{1}{2}  \left\| \hat{\sigma}_{W^n} - \sigma_{0,W}^{\otimes n} \right\|_1 \le \delta,\label{eq:covert_equa}
\end{align}
where $\hat{\sigma}_{W^n}$ is the average state induced at the warden (Willie) by both senders' transmissions and $\sigma_{0,W}^{\otimes n}$ denotes Willie's covert state corresponding to the absence of communication.
\end{definition}

The following theorem provides the main characterization of the two-layered MAC code. Note that the sequences $\alpha_n$ and $\beta_n$ govern the sparsity (transmission probability) of sender $A_X$ and sender $A_Y$, respectively, while $s_n$ quantifies the mean photon number of the active signal pulses.

\begin{theorem}\label{thm:main}
    Let $\varepsilon, \delta >0$. Let $\alpha_n, \beta_n$ and $s_n$ be sequence of positive numbers satisfying $\alpha_n, \beta_n, s_n \in \omega(n^{-\frac{1}{2}})\cap o(1)$. Besides, 
    \begin{align}
        (\alpha_n\tau + \beta_n(1-\tau))s_n\leq  \frac{2\sqrt{\kappa N_B(1+ \kappa N_B)}}{1-\kappa}Q^{-1}\left(\frac{1- \delta}{2}\right) n^{-\frac{1}{2}} - o(n^{-\frac{1}{2}}).\label{eq:covert_constraint}
    \end{align}
    Then, there exists $\bar\mu > 0$ and a sequence $(M_{X_1}, M_{Y_1}, M_{X_2}, M_{Y_2}, S_X, S_Y, l_x, l_y, \varepsilon, \delta)$ such that for $n$ large enough,
    \begin{align}
        &\log M_{X_1} \geq \frac{n\alpha_n\tau^2\kappa^2 s_n^2}{2(1-\kappa)N_B(1+ (1-\kappa)N_B)} + o(n\alpha_ns_n^2)\nonumber\\
        &\log M_{Y_1} \geq \frac{n\beta_n(1-\tau)^2\kappa^2 s_n^2}{2(1-\kappa)N_B(1+ (1-\kappa)N_B)} + o(n\beta_ns_n^2)\nonumber\\
        &\log M_{X_1} M_{Y_1} \geq \frac{n(\alpha_n\tau^2+\beta_n(1-\tau)^2)\kappa^2 s_n^2}{2(1-\kappa)N_B(1+ (1-\kappa)N_B)} + o(n(\alpha_n+ \beta_n)s_n^2)\nonumber\\
        &\log M_{X_1}S_X \leq \frac{n\alpha_n\tau^2(1-\kappa)^2 s_n^2}{2\kappa N_B(1+ \kappa N_B)} + o(n\alpha_ns_n^2)\nonumber\\
        &\log M_{Y_1}S_Y \leq \frac{n\beta_n(1-\tau)^2(1-\kappa)^2 s_n^2}{2\kappa N_B(1+ \kappa N_B)} + o(n\beta_ns_n^2)\nonumber\\
        &\log M_{X_1}S_XM_{Y_1}S_Y \leq \frac{n(\alpha_n\tau^2+\beta_n(1-\tau)^2)(1-\kappa)^2 s_n^2}{2\kappa N_B(1+ \kappa N_B)} + o(n(\alpha_n+ \beta_n)s_n^2)\nonumber\\
        &\log M_{X_2} \geq (1-\bar\mu)\frac{-n\kappa\tau\alpha_n}{1+(1-\kappa)N_B}s_n\log s_n + O(n \alpha_n s_n)\nonumber\\
        &  \log M_{Y_2} \geq (1-\bar\mu)\frac{-n\kappa (1-\tau)\beta_n}{1+(1-\kappa)N_B}s_n\log s_n + O(n \beta_n s_n)\nonumber\\
        &\log M_{X_2}M_{Y_2} \geq (1-\bar\mu)\frac{-n\kappa(\tau\alpha_n + (1-\tau)\beta_n)}{1+(1-\kappa)N_B}s_n\log s_n + O(n (\alpha_n + \beta_n)s_n),\nonumber
    \end{align}
    with $l_x = n\alpha_n(1-\bar\mu)$ TMSV pairs between $A_X$ and $B$,  and $l_y = n\beta_n(1-\bar\mu)$ TMSV pairs between $A_Y$ and $B$, is achievable using a codebook $(\mathcal{C}_{X_1}, \mathcal{C}_{Y_1}, \mathcal{C}_{X_2}, \mathcal{C}_{Y_2})$, where each TMSV pair corresponds to a TMSV state \eqref{eq:TMSV} with mean photon number $N_s = s_n$.
\end{theorem}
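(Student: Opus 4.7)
The plan is to combine a two-layer code construction with Lemma~\ref{lemm:reliable_resolvable} applied to a sparse on/off signaling ensemble, and Theorem~\ref{theo:MAC_general} applied to the active positions. In Layer~1, each channel use is independently ``active'' with probability $\alpha_n$ for sender $A_X$ (and $\beta_n$ for $A_Y$); when active, the sender transmits a uniform high-order PSK-modulated TMSV pulse of mean photon number $s_n$; otherwise it transmits the vacuum. Layer~2 overlays a high-rate EA-PSK codebook on the active positions detected by Layer~1. The shared keys $S_X, S_Y$ and the sparse activation pattern are what enable the Layer-1 resolvability at Willie, whereas the TMSV pairs $l_x, l_y$ fuel the super-additive $O(s_n \log s_n)$ throughput of Layer~2.

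First, I would define the single-symbol c-q state $\rho_{XYA_XA_Y}$ describing the on/off-keyed TMSV input, propagate it through $\mathcal{L}^{(\tau,\kappa,N_B)}_{A_XA_Y\to BW} \otimes \operatorname{id}_{I_XI_Y}$, and apply the finite-rank truncation $\mathcal{N}$ of \eqref{eq:projection_channel_1} at Bob so that Lemma~\ref{lemm:reliable_resolvable} applies. I would then apply Lemma~\ref{lemm:reliable_resolvable} to the i.i.d. product state $\rho^{\otimes n}$. Using the second-order asymptotic expansion of $I_H^\epsilon$ and $I_{\max}^\epsilon$ under i.i.d. states (the same \cite[Proposition 13]{Oskouei:2018oeu} invoked in Section~\ref{sec:general-MAC}), the reliability inequalities \eqref{eq:rel_1}--\eqref{eq:rel_3} collapse to $n$ times the Holevo information at Bob, and the resolvability inequalities \eqref{eq:res_1}--\eqref{eq:res_3} collapse to $n$ times the Holevo information at Willie. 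A low-photon Taylor expansion of the thermal-loss mutual informations (analogous to \cite[Lemma 9]{Wang2024Resource} used in the proof of Lemma~\ref{lemma_D_theta_1}) then yields the quadratic-in-$s_n$ forms on the right-hand sides of the six Layer-1 bounds: Bob's side gives the $\kappa^2 s_n^2/[(1-\kappa)N_B(1+(1-\kappa)N_B)]$ factor and Willie's side gives its $(1-\kappa)^2 s_n^2/[\kappa N_B(1+\kappa N_B)]$ dual.

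For Layer~2, conditional on correct Layer-1 decoding Bob knows the active positions, which concentrate around $n\alpha_n$ and $n\beta_n$ by a Chernoff bound on the sparse activation pattern (absorbed into the $(1-\bar\mu)$ factor). Restricted to these positions the channel is an ordinary EA bosonic MAC with mean photon $s_n$, so Theorem~\ref{theo:MAC_general} combined with Lemma~\ref{lemma_D_theta_1} directly yields per-active-pulse rates scaling as $-\kappa\tau s_n\log s_n/(1+(1-\kappa)N_B)$ and its $(1-\tau)$ analogue; multiplying by the effective number of active pulses produces the three Layer-2 bounds on $M_{X_2}, M_{Y_2}, M_{X_2}M_{Y_2}$. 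Finally, the global covertness \eqref{eq:covert_equa} is verified by a triangle-inequality argument: the resolvability step places $\hat\sigma_{W^n}$ close to $\rho_W^{\otimes n}$ in expected trace distance, and a direct CLT-type computation of $\tfrac{1}{2}\|\rho_W^{\otimes n} - \sigma_{0,W}^{\otimes n}\|_1$ between two Gaussian thermal states differing in mean photon number by $(\alpha_n\tau+\beta_n(1-\tau))s_n/(1-\kappa)$ yields the Gaussian tail $Q^{-1}((1-\delta)/2)$ appearing in \eqref{eq:covert_constraint}.

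The step I expect to be the main obstacle is the joint covertness analysis that couples the two senders through Willie's single received mode: the aggregate signal photon number $(\alpha_n\tau+\beta_n(1-\tau))s_n$ imposes a \emph{single} global energy budget, so all six Layer-1 inequalities and the three Layer-2 inequalities must be simultaneously satisfied under the shared constraint \eqref{eq:covert_constraint} rather than three independent per-user budgets. Correctly balancing the sum-resolvability bound \eqref{eq:res_3} (from multipartite convex splitting) against the sum-packing bound \eqref{eq:rel_3} at Bob, while exhibiting the explicit linear trade-off of Remark~\ref{rem:tradeoffs}, is where most of the technical bookkeeping lives; the rest of the proof is an assembly of Lemmas~\ref{lemm:reliable_resolvable} and \ref{lemma_D_theta_1} with the sparse signaling ensemble.
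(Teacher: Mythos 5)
Your proposal is correct and follows essentially the same route as the paper: a two-layer sparse on/off-keyed construction, Lemma~\ref{lemm:reliable_resolvable} for both packing and convex-split resolvability after Bob's finite-rank truncation, second-order expansions plus low-photon Taylor expansions for the Layer-1 rates, the point-to-point reduction of Theorem~\ref{theo:MAC_general} with Lemma~\ref{lemma_D_theta_1} for Layer 2, and a triangle-inequality covertness argument ending in the Gaussian tail $Q^{-1}((1-\delta)/2)$. The only differences are cosmetic: the paper generates Layer-1 codewords from a \emph{truncated} distribution so that every codeword has at least $l_x = n(1-\bar\mu)\alpha_n$ active pulses (rather than your post-hoc Chernoff concentration), and it explicitly invokes the Gentle Measurement Lemma to control the disturbance of the state before Layer-2 decoding — both standard fixes that your sketch implicitly absorbs.
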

Theorem~\ref{thm:main} can be specialized to reveal the explicit constants governing the achievable rates when both senders share the same order of sparsity.

\begin{corollary}\label{cor:scaling}
For any $\gamma \in (0, 1/2)$, fix constants $\alpha, \beta, s > 0$ such that $(\alpha\tau + \beta(1-\tau))s = \frac{2\sqrt{\kappa N_B(1+ \kappa N_B)}}{1-\kappa}Q^{-1}\left(\frac{1- \delta}{2}\right)$. Let $\alpha_n = \alpha n^{-1/2+\gamma}$, $\beta_n = \beta n^{-1/2+\gamma}$, and $s_n = s n^{-\gamma}$. Then:
\begin{itemize}
    \item {Layer 1 Rates ($O(n^{1/2-\gamma})$ scaling):}
    \begin{align}
        \lim_{n\to\infty} \frac{\log M_{X_1}}{n^{1/2-\gamma}} &= \frac{\alpha\tau^2\kappa^2 s^2}{2(1-\kappa)N_B(1+ (1-\kappa)N_B)}\notag, \\
        \lim_{n\to\infty} \frac{\log M_{Y_1}}{n^{1/2-\gamma}} &= \frac{\beta(1-\tau)^2\kappa^2 s^2}{2(1-\kappa)N_B(1+ (1-\kappa)N_B)}.\notag
    \end{align}
    \item {Layer 2 Rates ($O(\sqrt{n}\log n)$ scaling):}
    \begin{align}
        \lim_{n\to\infty} \frac{\log M_{X_2}}{\sqrt{n}\log n} &= \frac{\gamma \kappa \tau \alpha s}{1+(1-\kappa)N_B}, \notag\\
        \lim_{n\to\infty} \frac{\log M_{Y_2}}{\sqrt{n}\log n} &= \frac{\gamma \kappa (1-\tau) \beta s}{1+(1-\kappa)N_B}.\notag
    \end{align}
    \item {Entangled Nats Consumption ($O(\sqrt{n}\log n)$ scaling):}
    \begin{align}
        \lim_{n\to\infty} \frac{l_xg(s_n)}{\sqrt{n}\log n} = \gamma\alpha s, \quad \lim_{n\to\infty} \frac{l_yg(s_n)}{\sqrt{n}\log n} = \gamma\beta s.\notag
    \end{align}
    where $g(x) = (x+1)\log(x+1) - x\log x$.
\end{itemize}
\end{corollary}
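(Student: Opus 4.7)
The plan is to obtain Corollary~\ref{cor:scaling} by substituting the prescribed polynomial choices $\alpha_n = \alpha n^{-1/2+\gamma}$, $\beta_n = \beta n^{-1/2+\gamma}$, $s_n = s n^{-\gamma}$ into the one-shot rate bounds of Theorem~\ref{thm:main} and tracking each expression to leading order in $n$. The first step is to verify that these sequences satisfy the hypotheses of Theorem~\ref{thm:main}: since $\gamma \in (0,1/2)$, the conditions $\alpha_n, \beta_n, s_n \in \omega(n^{-1/2})\cap o(1)$ hold, and by design $(\alpha_n\tau + \beta_n(1-\tau))s_n = (\alpha\tau + \beta(1-\tau))s\cdot n^{-1/2}$, so the normalization imposed on $(\alpha\tau + \beta(1-\tau))s$ saturates the covertness budget \eqref{eq:covert_constraint} at leading order.

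For the Layer~1 rates, I would simply track the scaling $n\alpha_n s_n^2 = \alpha s^2 n^{1/2-\gamma}$. Substituting into the lower bound on $\log M_{X_1}$ from Theorem~\ref{thm:main} and dividing by $n^{1/2-\gamma}$ yields the advertised constant $\frac{\alpha\tau^2\kappa^2 s^2}{2(1-\kappa)N_B(1+(1-\kappa)N_B)}$, since the $o(n\alpha_n s_n^2)$ remainder becomes $o(1)$ after normalization. The limit for $\log M_{Y_1}$ is symmetric with $(\beta,1-\tau)$ in place of $(\alpha,\tau)$.

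For the Layer~2 rates, the key step is to expand $\log s_n = \log s - \gamma\log n$, so that $-n\alpha_n s_n \log s_n = \gamma\alpha s\cdot n^{1/2}\log n + O(n^{1/2})$. Inserting this into the bound on $\log M_{X_2}$ from Theorem~\ref{thm:main} and dividing by $\sqrt{n}\log n$ produces the factor $(1-\bar\mu)\frac{\gamma\kappa\tau\alpha s}{1+(1-\kappa)N_B}$; the $O(n\alpha_n s_n) = O(n^{1/2})$ contribution is absorbed by the $\log n$ in the normalization. The main obstacle is the residual slack $(1-\bar\mu)$: because Theorem~\ref{thm:main} supplies only \emph{some} $\bar\mu > 0$, the corollary's limit must be interpreted as the supremum over the family of codes guaranteed by Theorem~\ref{thm:main}, attained by letting $\bar\mu \to 0$ through a diagonal choice of the smoothing parameters $(\varepsilon,\eta,\delta,\delta')$ of Lemma~\ref{lemm:reliable_resolvable}; shrinking these parameters perturbs the one-shot bounds only by lower-order additive terms, so the stated constant is indeed achievable.

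Finally, for the entanglement consumption, I would Taylor-expand $g(x) = -x\log x + x + O(x^2\log x)$ for $x \to 0^+$ to obtain $g(s_n) = \gamma s n^{-\gamma}\log n + O(n^{-\gamma})$. Combined with $l_x = (1-\bar\mu)\alpha n^{1/2+\gamma}$, this gives $l_x g(s_n)/(\sqrt{n}\log n) \to (1-\bar\mu)\gamma\alpha s$, and once again letting $\bar\mu \to 0$ recovers the advertised limit $\gamma\alpha s$; the bound for $l_y$ follows by the same computation with $\beta$ in place of $\alpha$.
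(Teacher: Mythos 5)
Your proposal is correct and follows the same route the paper intends: Corollary~\ref{cor:scaling} is a direct substitution of $\alpha_n = \alpha n^{-1/2+\gamma}$, $\beta_n = \beta n^{-1/2+\gamma}$, $s_n = s n^{-\gamma}$ into the bounds of Theorem~\ref{thm:main}, with the only nontrivial steps being $\log s_n = \log s - \gamma\log n$ for the Layer~2 and entanglement terms and the observation that the $O(n\alpha_n s_n) = O(\sqrt{n})$ remainders are absorbed by the $\log n$ normalization. Your handling of the residual $(1-\bar\mu)$ factor via a limiting argument is appropriate (the paper silently drops it), but note that $\bar\mu$ originates from the codeword-weight truncation threshold $\mathbf{1}\{\hat{p}_X(x_1) \geq (1-\bar\mu)\alpha_n\}$ in the Layer~1 construction rather than from the smoothing parameters of Lemma~\ref{lemm:reliable_resolvable}; the diagonalization therefore requires $\bar\mu_n \to 0$ slowly enough that $\bar\mu_n^2 n\alpha_n \to \infty$, so the Chernoff bounds in Lemma~\ref{lemma:typical_product} still vanish, which is possible since $n\alpha_n = \alpha n^{1/2+\gamma}$.
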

\begin{remark}[Linear Trade-off in Covert Resource Allocation]\label{rem:tradeoffs}
    Corollary~\ref{cor:scaling} shows that when sparsity parameters share the same asymptotic order ($\alpha, \beta > 0$), the covertness constraint imposes a strict linear trade-off, as the warden detects a weighted sum of individual signal powers. Consequently, any increase in Sender $A_X$'s activity $\alpha$ must be compensated by a proportional decrease in Sender $A_Y$'s $\beta$, requiring users to coordinate their transmission frequencies to ensure the aggregate energy remains within the global covert budget.
\end{remark}
\begin{remark}[The Case with a More Active Sender]\label{rem:dominance}
If the sparsity sequences $\alpha_n$ and $\beta_n$ are of different orders, the covertness constraint in \eqref{eq:covert_constraint} is dominated by the sequence with the higher order. This implies that the user with the larger $\alpha_n$ or $\beta_n$ dictates the maximum allowable signal power $s_n$ for both users to ensure joint covertness.
\end{remark}
\begin{remark}[Functional Separation of Encoding Layers]\label{comment}
    In the large-$n$ limit, the Layer 2 rate dominates the total throughput, achieving $O(\sqrt{n} \log n)$ scaling due to entanglement assistance. Layer 1 serves primarily as a covertness regulator, using sparsity to mask the transmission as background noise.
\end{remark}

We outline the coding scheme and proof strategy for Theorem~\ref{thm:main}, with the comprehensive proof provided in Section~\ref{sec:covert_proof}.

\textbf{1. Two-Layer Codebook Construction.}
The scheme employs a nested codebook structure for both senders. \textbf{Layer 1} uses sparse encoding combined with shared secret keys to determine pulse positions, ensuring covertness against Willie. \textbf{Layer 2} utilizes high-order PSK modulation applied to shared TMSV idlers. The resulting modulated quantum states are embedded into the temporal positions selected by Layer 1.

\textbf{2. Decoding Strategy.}
Bob performs sequential decoding: he first utilizes the secret keys to resolve the Layer 1 pulse positions. Upon success, he applies a phase-decoding procedure to the idler systems at those specific positions to recover the Layer 2 messages.

\textbf{3. Reliability and Covertness.}
Reliability is established by applying Lemma~\ref{lemm:reliable_resolvable} across both layers. For Layer 1, Bob first projects the infinite-dimensional bosonic system onto a suitable finite-dimensional subspace to enable joint decoding. Covertness is proven in three steps: (i) using Lemma~\ref{lemm:reliable_resolvable} to ensure resolvability at Willie’s receiver; (ii) showing the average transmitted state after sparse coding approximates an i.i.d. mixture (Lemma~\ref{lemma:typical_product}); and (iii) proving this i.i.d. state is statistically indistinguishable from the vacuum state (Lemma~\ref{lemma:covertstate}).

\section{Proof of Theorem~\ref{thm:main}}\label{sec:covert_proof}
\subsection{Code Construction}
Let $\mathcal{X}= \{x_0, x_1\}$ and $\mathcal{Y}= \{y_0, y_1\}$. We define the binary distributions $P_X(x) \triangleq (1-\alpha_n)\mathbf{1}_{x_0} + \alpha_n \mathbf{1}_{x_1}$ and $P_Y(y) \triangleq (1-\beta_n)\mathbf{1}_{y_0} + \beta_n \mathbf{1}_{y_1}$.
We further define the truncated $n$-fold distributions, which post-select for symbols satisfying the minimal weight threshold:
\begin{align}
    &\tilde{P}_{X^n}(x^n) \triangleq \frac{P^{\otimes n}_X(x)\mathbf{1}\{\hat{p}_{X}(x_1) \geq (1-\bar{\mu})\alpha_n)\}}{\mathbf{Pr}\{\hat{p}_{X}(x_1) \geq (1-\bar{\mu})\alpha_n)\}},\\
    &\tilde{P}_{Y^n}(y^n) \triangleq \frac{P^{\otimes n}_Y(y)\mathbf{1}\{\hat{p}_{Y}(y_1) \geq (1-\bar{\mu})\beta_n)\}}{\mathbf{Pr}\{\hat{p}_{Y}(y_1) \geq (1-\bar{\mu})\beta_n)\}}.
\end{align}
Lemma~\ref{lemma:typical_product} will later establish that these truncated distributions are trace-distance close to their i.i.d. counterparts.

\textbf{Codebook Generation.}
Senders $A_X$ and $A_Y$ share secret keys $k_X \in [S_X]$, $k_Y \in [S_Y]$ and entangled TMSV pairs $\ket{\psi}_{RI} = \sum_{k}\sqrt{s_n^k(s_n+1)^{-(k+1)}}\ket{k}_R\ket{k}_I$ with Bob. 
\begin{enumerate}
    \item \textbf{Layer 1 (Sparsity):} $A_X$ (resp. $A_Y$) generates $M_{X_1} S_X$ (resp. $M_{Y_1} S_Y$) codewords $x_{m_{X_1} k_X}$ (resp. $y_{m_{Y_1} k_Y}$) i.i.d. from $\tilde{P}_{X^n}$ (resp. $\tilde{P}_{Y^n}$) to form codebooks $\mathcal{C}_{X_1}$ and $\mathcal{C}_{Y_1}$. By definition, these codewords contain at least $l_x = n(1 - \bar{\mu}) \alpha_n$ (resp. $l_y = n(1 - \bar{\mu}) \beta_n$) active pulses.
    \item \textbf{Layer 2 (Modulation):} $A_X$ (resp. $A_Y$) generates $M_{X_2}$ (resp. $M_{Y_2}$) codewords $\theta_{m_{X_2}}$ (resp. $\phi_{m_{Y_2}}$) uniformly from the product constellation $\Theta_{X}^{\otimes l_x}$ (resp. $\Theta_{Y}^{\otimes l_y}$), where $\Theta_X$ and $\Theta_Y$ denotes a $2^n$-PSK constellation.
\end{enumerate}

\textbf{Encoding.}
To transmit message tuple $(m_{X_1}, k_X, m_{X_2})$, $A_X$ first selects the Layer 1 sequence $x_{m_{X_1} k_X}$. She then modulates her share of the entanglement $R_X^{l_x}$ using the Layer 2 codeword $\theta_{m_{X_2}}$ via the unitary phase rotation operator $\hat{U}_{\theta} =\exp\left\{j2\theta\hat{a}^\dagger \hat{a}\right\}$, yielding:
\begin{align}
    \psi_{R_X^{l_x}I_X^{lx}}^{\theta_{m_{X_2}}} \triangleq \bigotimes_{i = 1}^{l_x} \left(\left( \hat{U}_{(\theta_{m_{X_2}})_i}\otimes \mathrm{id}_{I_X}\right)\ket{\psi}\bra{\psi}_{R_iI_i}\left( \hat{U}^{\dagger}_{(\theta_{m_{X_2}})_i}\otimes \mathrm{id}_{I_X}\right)\right).
\end{align}
These modulated states are embedded into the $n$-mode channel at positions corresponding to the $1$s in $x_{m_{X_1}k_X}$. The remaining positions are filled with thermal noise $\rho_{s_n}$, resulting in the transmitted state:
\begin{align}
    \sigma_{A^n_XI_X^{l_x}}^{x_{m_{X_1}k_X}, \theta_{m_{X_2}}} \triangleq &\psi^{(\theta_{m_{X_2}})_1}_{(A_X)_1I_1}\otimes \ket{0}\bra{0}_{(A_X)_2}\otimes \psi^{(\theta_{m_{X_2}})_2}_{(A_X)_3I_3}\otimes \cdots\notag\\
    &\cdots\otimes (\rho_{s_n})_{(A_X)_{n-1}}\otimes \ket{0}\bra{0}_{(A_X)_n}.
\end{align}
Sender $A_Y$ performs an analogous procedure to generate $\sigma_{A^n_YI_Y^{l_y}}^{y_{m_{Y_1}k_Y}, \phi_{m_{Y_2}}}$.

\textbf{Transmission States.}
The encoding process induces the following classical-quantum (c-q) states for the input systems:
\begin{align}
    &\sigma_{X^n\Theta_{X}^{l_x}A^n_XI_X^{l_x}}(m_{X_1}, m_{X_2}, k_X) \triangleq \ket{x_{m_{X_1}k_X}}\bra{x_{m_{X_1}k_X}} \otimes \ket{\theta_{m_{X_2}}}\bra{\theta_{m_{X_2}}}\otimes \sigma_{A^n_XI_X^{l_x}}^{x_{m_{X_1}k_X}, \theta_{m_{X_2}}};\notag\\
    & \sigma_{Y^n\Theta_{Y}^{l_y}A^n_YI_Y^{l_y}}(m_{Y_1}, m_{Y_2}, k_Y) \triangleq\ket{y_{m_{Y_1}k_Y}}\bra{y_{m_{Y_1}k_Y}} \otimes \ket{\phi_{m_{Y_2}}}\bra{\phi_{m_{Y_2}}}\otimes \sigma_{A^n_YI_Y^{l_y}}^{y_{m_{Y_1}k_Y}, \phi_{m_{Y_2}}}.\notag
\end{align}
After transmission through the channel $(\mathcal{L}_{A_XA_Y\to BW}^{(\tau, \kappa, N_B)})^{\otimes n}$, the joint state of Bob and Willie is:
\begin{align}
    &\sigma_{X^nY^n\Theta_{X}^{l_x}\Theta_{Y}^{l_y}B^nI^{l_x}_X I_Y^{l_y}W^n}(m_{X_1}, m_{X_2}, k_X, m_{Y_1}, m_{Y_2}, k_Y)\notag\\
    & \triangleq \left(\left(\mathcal{L}_{A_XA_Y\to BW}^{(\tau, \kappa, N_B)}\right)^{\otimes n}\otimes \mathrm{id}_{I_X^{l_x}I_Y^{l_y}}\right)\notag\\
    &\quad \quad \bigg(\sigma_{X^n\Theta_{X}^{l_x}A^n_XI_X^{l_x}}(m_{X_1}, m_{X_2}, k_X)\otimes \sigma_{Y^n\Theta_{Y}^{l_y}A^n_YI_Y^{l_y}}(m_{Y_1}, m_{Y_2}, k_Y)\bigg).\label{eq:c-qstate}
\end{align}
Taking the expectation over all codebooks and randomization variables, the average joint c-q state is:
\begin{align}
    &\tilde\sigma_{X^nY^n\Theta_{X}^{l_x}\Theta_{Y}^{l_y}B^nI^{l_x}_X I_Y^{l_y}W^n} \notag\\
    &= \sum_{x^n,y^n}\tilde{P}_{X^n}(x^n)\tilde{P}_{Y^n}(y^n)\ket{x^n}\bra{x^n}\otimes \ket{y^n}\bra{y^n} \notag\\
    &\quad\quad \otimes \frac{1}{|\Theta_{X}^{l_x}||\Theta_{Y}^{l_y}|}\sum_{\theta^{l_x}\in \Theta_{X}^{\otimes l_x}} \sum_{\phi^{l_y}\in \Theta_{Y}^{\otimes l_y}} \ket{\theta^{l_x}}\bra{\theta^{l_x}}\otimes \ket{\phi^{l_y}}\bra{\phi^{l_y}} \otimes \sigma^{x^n, y^n, \theta^{l_x}, \phi^{l_y}}_{B^nW^nI_X^{l_x}I_Y^{l_y}}.\label{eq:jointstate}
\end{align}
Since Willie lacks access to the idler systems $I_X I_Y$, his state decouples from the phase modulation $\Theta$ and Layer 2 messages. His effective observation for a fixed codebook tuple $(\mathcal{C}_{X_1},\mathcal{C}_{Y_1})$ is the average over all keys and messages:
\begin{align}
    \hat{\sigma}_{W^n} = \frac{1}{|M_{X_1}||M_{Y_1}||S_X||S_Y|} \sum_{m_{X_1}, m_{Y_1}}  \sum_{k_X, k_Y} \sigma_{W^n}(m_{X_1}, k_X, m_{Y_1}, k_Y).
\end{align}
In the absence of transmission, Willie observes the covert state $\sigma_{0,W}^{\otimes n} = \rho_{\kappa N_B}^{\otimes n}$.

\subsection{Lemmas}
Recall that Lemma~\ref{lemm:reliable_resolvable} establishes the fundamental one-shot reliability and resolvability conditions required for our codebook existence argument. 
Here, we develop additional lemmas that will be applied in the proof of Theorem~\ref{thm:main}. 

The following classical-quantum states are defined specifically for Lemma~\ref{lemma:typical_product}
\begin{align}
    &\sigma_{XYA} \triangleq \sum_{x\in \mathcal{X}} P_X(x)\ket{x}\bra{x}_X\otimes \sum_{y\in \mathcal{Y}} P_Y(y)\ket{y}\bra{y}_Y \otimes \sigma^{x,y}_A. \label{sigma_XYA}\\
    &\tilde{\sigma}_{X^nY^nA^n} \triangleq \sum_{x^n, y^n}\tilde{P}_{X^n}(x^n)\tilde{P}_{Y^n}(y^n)\ket{x^n}\bra{x^n}_{X^n}\otimes\ket{y^n}\bra{y^n}_{Y^n}\otimes \sigma^{x^n,y^n}_{A^n}. \label{tilde_sigma_XYA}
\end{align}
Lemma~\ref{lemma:typical_product} below characterizes the closeness between the original i.i.d. state $\sigma_{XYA}^{\otimes n}$ and the state $\tilde{\sigma}_{X^nY^nA^n}$ after discarding the codewords with small weights.
\begin{lemma}\label{lemma:typical_product}
For $\sigma_{XYA}$, and $\tilde{\sigma}_{X^nY^nA^n}$ defined in \eqref{sigma_XYA} and \eqref{tilde_sigma_XYA}, and for $n$ large enough,
\begin{align}
    \frac{1}{2}\left\|\tilde\sigma_{A^n}- \sigma_A^{\otimes n}\right\|_1 &\leq \max\left\{\frac{1}{2}\left\|\tilde\sigma_{X^nA^n}- \sigma_{XA}^{\otimes n}\right\|_1, \frac{1}{2}\left\|\tilde\sigma_{Y^nA^n}- \sigma_{YA}^{\otimes n}\right\|_1\right\}\notag \\
    & \leq \frac{1}{2}\left\|\tilde\sigma_{X^nY^nA^n}- \sigma_{XYA}^{\otimes n}\right\|_1\notag\\
    &= \frac{1}{2}\left\|\tilde\sigma_{X^nY^n}- \sigma_{XY}^{\otimes n}\right\|_1\notag\\
    & \leq 2\mathbf{e}^{-\frac{1}{2}\bar\mu^2n\alpha_n} + 2\mathbf{e}^{-\frac{1}{2}\bar\mu^2n\beta_n},\notag
\end{align}
and
\begin{align}
    &\frac{1}{2}\left\|\tilde\sigma_{X^nY^n}\otimes\tilde\sigma_{A^n}- \sigma_{XY}^{\otimes n}\otimes \sigma_{A}^{\otimes n}\right\|_1 \leq 4\mathbf{e}^{-\frac{1}{2}\bar\mu^2n\alpha_n} + 4\mathbf{e}^{-\frac{1}{2}\bar\mu^2n\beta_n}\notag\\
    &     \frac{1}{2}\left\|\tilde\sigma_{X^n}\otimes\tilde\sigma_{Y^nA^n}- \sigma_{X}^{\otimes n}\otimes \sigma_{YA}^{\otimes n}\right\|_1 \leq 4\mathbf{e}^{-\frac{1}{2}\bar\mu^2n\alpha_n} + 4\mathbf{e}^{-\frac{1}{2}\bar\mu^2n\beta_n}\notag\\
    &     \frac{1}{2}\left\|\tilde\sigma_{Y^n}\otimes\tilde\sigma_{X^nA^n}- \sigma_{Y}^{\otimes n}\otimes \sigma_{XA}^{\otimes n}\right\|_1 \leq 4\mathbf{e}^{-\frac{1}{2}\bar\mu^2n\alpha_n} + 4\mathbf{e}^{-\frac{1}{2}\bar\mu^2n\beta_n}.\notag
\end{align}
\end{lemma}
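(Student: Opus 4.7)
My plan is to reduce every quantum trace distance in the statement to a classical total-variation distance between the underlying distributions, and then bound that classical quantity by a Chernoff estimate on the weight of $x^n$ (respectively $y^n$).

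The inner inequalities in the first chain are a direct consequence of the monotonicity of the trace norm under partial trace: tracing out $Y^n$ from $\tilde\sigma_{X^nY^nA^n}$ produces $\tilde\sigma_{X^nA^n}$ (and similarly for $\sigma_{XYA}^{\otimes n}$), which yields $\tfrac12\|\tilde\sigma_{X^nA^n}-\sigma_{XA}^{\otimes n}\|_1\leq \tfrac12\|\tilde\sigma_{X^nY^nA^n}-\sigma_{XYA}^{\otimes n}\|_1$; a further partial trace of the remaining classical register yields the leftmost bound on $\tfrac12\|\tilde\sigma_{A^n}-\sigma_A^{\otimes n}\|_1$. For the equality I would observe that $\tilde\sigma_{X^nY^nA^n}$ and $\sigma_{XYA}^{\otimes n}$ are c-q states sharing the \emph{same} conditional operators $\sigma^{x^n,y^n}_{A^n}$, so their difference takes the form
\begin{align}
\tilde\sigma_{X^nY^nA^n}-\sigma_{XYA}^{\otimes n}=\sum_{x^n,y^n}\!\bigl[\tilde P_{X^n}(x^n)\tilde P_{Y^n}(y^n)-P_X^{\otimes n}(x^n)P_Y^{\otimes n}(y^n)\bigr]\,|x^n,y^n\rangle\!\langle x^n,y^n|\otimes\sigma^{x^n,y^n}_{A^n}.\notag
\end{align}
Because the classical basis is orthogonal and each conditional state is normalized, the $\|\cdot\|_1$ collapses to the $\ell_1$ distance between the joint classical distributions, which is exactly $\|\tilde\sigma_{X^nY^n}-\sigma_{XY}^{\otimes n}\|_1$.

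Next I would evaluate this classical total variation explicitly. Let $\mathcal{T}_X\triangleq\{x^n:\hat p_X(x_1)\geq(1-\bar\mu)\alpha_n\}$ and $p_X^\star\triangleq P_X^{\otimes n}(\mathcal{T}_X)$. Splitting the sum over $x^n\in\mathcal{T}_X$ and $x^n\notin\mathcal{T}_X$ and substituting the definition $\tilde P_{X^n}(x^n)=P_X^{\otimes n}(x^n)\mathbf{1}\{x^n\in\mathcal{T}_X\}/p_X^\star$, a one-line calculation gives $\tfrac12\|\tilde P_{X^n}-P_X^{\otimes n}\|_1=1-p_X^\star$, and analogously with $\beta_n$ for $Y$. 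Chernoff's multiplicative lower-tail bound applied to the sum of $n$ i.i.d.\ Bernoulli$(\alpha_n)$ samples then yields $1-p_X^\star\leq \mathbf{e}^{-\bar\mu^2 n\alpha_n/2}$, and sub-additivity of total variation under product distributions, $\tfrac12\|P\otimes Q-\tilde P\otimes\tilde Q\|_1\leq\tfrac12\|P-\tilde P\|_1+\tfrac12\|Q-\tilde Q\|_1$, combines these into the stated tail (the factor $2$ in the lemma's bound is a slight slack).

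For the three product inequalities in the second block, I would introduce one extra triangle inequality per bound, inserting an intermediate hybrid state in which one of the two factors is replaced by its i.i.d.\ version. Each of the two resulting pieces is exactly a bound already established in the first chain, producing the inflated constants $4\mathbf{e}^{-\bar\mu^2 n\alpha_n/2}+4\mathbf{e}^{-\bar\mu^2 n\beta_n/2}$. I do not foresee a genuine technical obstacle; the entire argument is concentration plus triangle-inequality bookkeeping. The single conceptual step that deserves care is the collapse of the quantum trace distance to a classical total variation in the first paragraph, which relies crucially on both $\tilde\sigma$ and $\sigma^{\otimes n}$ using the \emph{same} conditional states $\sigma^{x^n,y^n}_{A^n}$, with the truncation entering only through the classical prefactors.
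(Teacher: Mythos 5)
Your proposal is correct and follows exactly the route the paper intends: its proof is a one-line citation of "Chernoff bound, triangle inequality, and the monotonicity of the trace distance," which are precisely the three ingredients you spell out (the collapse of the c-q trace distance to the classical total variation, the computation $\tfrac12\|\tilde P-P^{\otimes n}\|_1=1-p^\star$ with the multiplicative Chernoff tail, and the hybrid-state triangle inequality for the product bounds). Your version is in fact slightly tighter than the stated constants, as you note.
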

\begin{proof}
    The result follows from the usage of Chernoff bound, triangle inequality, and the monotonicity of the trace distance.
\end{proof}
In the next three lemmas, we consider the following one-shot state
\begin{align}
    &\sigma_{XYA_XA_Y} \triangleq \sum_{x\in \mathcal{X}} P_X(x)\ket{x}\bra{x}_X\otimes \sum_{y\in \mathcal{Y}} P_Y(y)\ket{y}\bra{y}_Y \otimes \sigma^{x}_{A_X}\otimes \sigma^{y}_{A_Y}, \label{sigma_XYAXAY}
\end{align}
where $\sigma_{A_X}^x = \ket{0}\bra{0}\mathbf{1}\{x=x_0\} + \rho_{s_n}\mathbf{1}\{x=x_1\}$, and $\sigma_{A_Y}^y = \ket{0}\bra{0}\mathbf{1}\{y=y_0\} + \rho_{s_n}\mathbf{1}\{y=y_1\}$.

Lemma~\ref{lemma:covertstate} below shows that the product of the sparsity parameters $\alpha_n, \beta_n$ and the signal mean photon number $s_n$ controls the covertness.
\begin{lemma}\label{lemma:covertstate}
    Let $\sigma_{XYBW} \triangleq \mathcal{L}_{A_XA_Y\to BW}^{(\tau,\kappa, N_B)}(\sigma_{XYA_XA_Y})$, where $\sigma_{XYA_XA_Y}$ is defined in \eqref{sigma_XYAXAY}. The trace distance between $\sigma_W^{\otimes n}$ and $\sigma_{0,W}^{\otimes n}= \rho_{\kappa N_B}^{\otimes n}$ is 
    \begin{align}
        &\frac{1}{2} \left\|\sigma_W^{\otimes n}-\sigma_{0,W}^{\otimes n}\right\|_1 \notag\\
        &\leq 1 - 2\mathrm{Q}\left(\frac{\sqrt{n}(\alpha_n\tau + \beta_n(1-\tau))(1-\kappa)s_n}{2\sqrt{\kappa N_B(1+\kappa N_B)}}\right) + \frac{D_1}{\sqrt{n}} \notag\\
        &\quad + D_0\sqrt{n}((\alpha_n+\beta_n)s_n^3+(\alpha_n+\beta_n)^2s_n).\notag
    \end{align}
\end{lemma}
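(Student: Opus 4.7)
The first observation is that, since $\mathcal{L}^{(\tau,\kappa,N_B)}$ is a phase-insensitive Gaussian channel and each single-mode input is either vacuum or the phase-invariant thermal state $\rho_{s_n}$, the conditional Willie output is itself a thermal state whose mean photon number can be read off from the channel's symplectic action. A direct covariance-matrix calculation yields $\sigma_W^{x,y} = \rho_{N_{xy}}$ with $N_{xy} = \kappa N_B + (1-\kappa)(\tau s_n\mathbf{1}\{x=x_1\} + (1-\tau)s_n\mathbf{1}\{y=y_1\})$, so $\sigma_W = \mathbf{E}_{XY}[\sigma_W^{X,Y}]$ is a four-component convex mixture of thermal states. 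Being Fock-diagonal, $\sigma_W$ commutes with $\sigma_{0,W} = \rho_{\kappa N_B}$, and $\tfrac{1}{2}\|\sigma_W^{\otimes n} - \rho_{\kappa N_B}^{\otimes n}\|_1$ equals the classical total variation $\|P^{\otimes n} - Q^{\otimes n}\|_{\mathrm{TV}}$ between the induced Fock-basis distributions $P,Q$, reducing the problem to a purely classical hypothesis-testing analysis.

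\textbf{Triangle bound through a thermal proxy.} Next, I would introduce the thermal proxy $\bar{\sigma}_W = \rho_{\bar N}$ with $\bar N = \mathbf{E}_{XY}[N_{XY}] = \kappa N_B + \bar\delta$ and $\bar\delta = (1-\kappa)(\alpha_n\tau + \beta_n(1-\tau))s_n$, and apply the triangle inequality
\begin{align*}
\tfrac{1}{2}\|\sigma_W^{\otimes n} - \rho_{\kappa N_B}^{\otimes n}\|_1 \le \tfrac{1}{2}\|\sigma_W^{\otimes n} - \bar\sigma_W^{\otimes n}\|_1 + \tfrac{1}{2}\|\bar\sigma_W^{\otimes n} - \rho_{\kappa N_B}^{\otimes n}\|_1.
\end{align*}
The first term captures the discrepancy between the mixture and its photon-number mean. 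Taylor expanding $\rho_{N_{xy}}$ around $\bar N$ in the Fock basis, the first-order contribution vanishes because $\mathbf{E}[N_{XY} - \bar N]=0$ by construction, so the leading contribution is governed by $\mathrm{Var}(N_{XY}) = O((\alpha_n+\beta_n)s_n^2)$. A Pinsker/chi-squared bound on the resulting Fock distributions, combined with tensorization, yields the $O(\sqrt{n}[(\alpha_n+\beta_n)s_n^3 + (\alpha_n+\beta_n)^2 s_n])$ contribution to the final estimate.

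\textbf{Berry--Esseen for close thermal states.} For the second term I would use Neyman--Pearson and write the optimal test as a threshold on the sum of $n$ i.i.d.\ log-likelihood ratios in the Fock basis, then apply the classical Berry--Esseen theorem. Expanding the geometric Fock distribution of $\rho_{\bar N}$ relative to $\rho_{\kappa N_B}$ shows that the per-symbol log-likelihood ratio has variance approximately $\bar\delta^2/[\kappa N_B(1+\kappa N_B)]$ with a third absolute central moment bounded uniformly in the covert regime. Combining this with the Berry--Esseen estimate gives
\begin{align*}
\tfrac{1}{2}\|\bar\sigma_W^{\otimes n} - \rho_{\kappa N_B}^{\otimes n}\|_1 \le 1 - 2Q\left(\frac{\sqrt{n}\,\bar\delta}{2\sqrt{\kappa N_B(1+\kappa N_B)}}\right) + \frac{D_1}{\sqrt{n}},
\end{align*}
matching both the dominant $Q$-function term and the $D_1/\sqrt{n}$ remainder claimed in the lemma.

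\textbf{Main obstacle.} The most delicate step will be the bookkeeping of the higher-order Taylor remainders so that the various $O$-terms aggregate exactly to the stated $D_0\sqrt{n}[(\alpha_n+\beta_n)s_n^3 + (\alpha_n+\beta_n)^2 s_n]$ form, and verifying that the Berry--Esseen constants remain uniform as $s_n \to 0$. In particular, because the variance of the per-sample log-likelihood ratio itself scales like $\bar\delta^2$, normalizing so that the third-moment correction stays bounded by $D_1/\sqrt{n}$ (rather than growing inversely in $\bar\delta$) is the subtle point, and is where the covert-regime assumptions $\alpha_n,\beta_n,s_n \in \omega(n^{-1/2})\cap o(1)$ enter decisively.
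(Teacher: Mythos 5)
The paper does not actually supply a proof here: it delegates everything to the point-to-point derivation in \cite[Lemma~IV.1]{Wang2022Towards}, so your reconstruction is the only detailed argument on the table. Your overall strategy is the right one and is surely what that reference does: the channel is phase-covariant and the inputs are phase-invariant, so each conditional output $\sigma_W^{x,y}$ is the thermal state $\rho_{N_{xy}}$ with exactly the mean photon numbers you compute, everything commutes in the Fock basis, and the problem reduces to the total variation between $n$-fold products of mixtures of Bose--Einstein (geometric) distributions, which is then handled by Berry--Esseen on the log-likelihood ratio. Your identification of the per-symbol Fisher information $1/[\kappa N_B(1+\kappa N_B)]$ as the source of the denominator in the $Q$-function, and your observation that the Berry--Esseen ratio must be computed for the \emph{normalized} LLR so that the constant does not blow up as $\bar\delta\to 0$, are both correct and are the essential points.

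The one step I do not believe as written is the claim that the proxy term $\tfrac12\|\sigma_W^{\otimes n}-\bar\sigma_W^{\otimes n}\|_1$ "yields the $O(\sqrt{n}[(\alpha_n+\beta_n)s_n^3+(\alpha_n+\beta_n)^2 s_n])$ contribution." Since $\sigma_W-\bar\sigma_W=\tfrac12\mathrm{Var}(N_{XY})\,\partial_N^2 P_{\bar N}+\cdots$ with $\mathrm{Var}(N_{XY})=\Theta((\alpha_n+\beta_n)s_n^2)$, the chi-squared/Pinsker route you describe gives $\chi^2=O((\alpha_n+\beta_n)^2 s_n^4)$ per symbol and hence a bound of order $\sqrt{n}\,(\alpha_n+\beta_n)s_n^2$ after tensorization. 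That quantity is \emph{not} dominated by the stated remainder $\sqrt{n}[(\alpha_n+\beta_n)s_n^3+(\alpha_n+\beta_n)^2 s_n]$: dividing by $\sqrt{n}(\alpha_n+\beta_n)s_n$, you would need $s_n\lesssim s_n^2+(\alpha_n+\beta_n)$, which fails whenever $\alpha_n+\beta_n\ll s_n\ll 1$ (e.g.\ the regime of Corollary~\ref{cor:scaling} with small $\gamma$). So either you must sharpen this step beyond Pinsker (exploiting, e.g., cancellations between the $D$ and $V$ corrections in a direct Berry--Esseen analysis of the mixture against $\rho_{\kappa N_B}$, with no proxy), or you should honestly record the extra $O(\sqrt{n}(\alpha_n+\beta_n)s_n^2)$ term. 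The discrepancy is harmless downstream---under the covertness constraint $(\alpha_n\tau+\beta_n(1-\tau))s_n=O(n^{-1/2})$ this term is $O(s_n)=o(1)$ and can be absorbed into the slack $\eta$ in \eqref{eq:covert1}---but as a proof of the lemma \emph{as stated} the remainder bookkeeping does not close, which is precisely the obstacle you flagged yourself.
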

\begin{proof}
    The proof follows analogously from the derivation of \cite[Lemma IV.1]{Wang2022Towards}.
\end{proof}

The following lemma is applied in the throughput analysis of the Layer 1. For clarity of presentation, the corresponding lemma for Layer 2 (e.g. Lemma~\ref{lemma_D_theta}) will be stated later in Section~\ref{sec:throughput_layer_2}.
\begin{lemma}\label{lemma:throughput}
    Let $\sigma_{XYBW} \triangleq \mathcal{L}_{A_XA_Y\to BW}^{(\tau,\kappa, N_B)}(\sigma_{XYA_XA_Y})$, where $\sigma_{XYA_XA_Y}$ is defined in \eqref{sigma_XYAXAY}.  Then, 
    \begin{align}
        &D(\sigma_{XYB}\|\sigma_{XY}\otimes \sigma_B) = \frac{(\alpha_n\tau^2+\beta_n(1-\tau)^2)\kappa^2 s_n^2}{2(1-\kappa)N_B(1+ (1-\kappa)N_B)} \notag\\
        &\quad\quad\quad\quad\quad\quad\quad\quad\quad\quad + O((\alpha_n+ \beta_n)^2s_n^2, (\alpha_n+\beta_n)s_n^3)\notag\\
        &V(\sigma_{XYB}\|\sigma_{XY}\otimes \sigma_B) = \frac{(\alpha_n\tau^2+\beta_n(1-\tau)^2)\kappa^2 s_n^2}{1-\kappa)N_B(1+ (1-\kappa)N_B)} \notag\\
        &\quad\quad\quad\quad\quad\quad\quad\quad\quad\quad+  O((\alpha_n+ \beta_n)^2s_n^2, (\alpha_n+\beta_n)s_n^3)\notag\\
        &R(\sigma_{XYB}\|\sigma_{XY}\otimes \sigma_B) = O((\alpha_n+\beta_n)s_n^4)\notag\\
        &D(\sigma_{XYB}\|\sigma_{X}\otimes \sigma_{YB}) = \frac{\alpha_n\tau^2\kappa^2 s_n^2}{2(1-\kappa)N_B(1+ (1-\kappa)N_B)} + O(\alpha_n^2s_n^2, \alpha_n\beta_ns_n^2)\notag\\
        &V(\sigma_{XYB}\|\sigma_{X}\otimes \sigma_{YB}) = \frac{\alpha_n\tau^2\kappa^2 s_n^2}{(1-\kappa)N_B(1+ (1-\kappa)N_B)} + O(\alpha_n^2s_n^2, \alpha_n\beta_ns_n^2)\notag\\
        &R(\sigma_{XYB}\|\sigma_{X}\otimes \sigma_{YB}) = O(\alpha_n s_n^4)\notag
    \end{align}
    \begin{align}
        &D(\sigma_{XYB}\|\sigma_{Y}\otimes \sigma_{XB}) = \frac{\beta_n(1-\tau)^2\kappa^2 s_n^2}{2(1-\kappa)N_B(1+ (1-\kappa)N_B)} + O(\beta_n^2s_n^2, \alpha_n\beta_ns_n^2)\notag\\
        &V(\sigma_{XYB}\|\sigma_{Y}\otimes \sigma_{XB}) = \frac{\beta_n(1-\tau)^2\kappa^2 s_n^2}{(1-\kappa)N_B(1+ (1-\kappa)N_B)} + O(\beta_n^2s_n^2, \alpha_n\beta_ns_n^2)\notag\\
        &R(\sigma_{XYB}\|\sigma_{Y}\otimes \sigma_{XB}) = O(\beta_n s_n^4)\notag
    \end{align}
The corresponding entropic quantities of $\sigma_{XYW}$ take the same form, with the substitution being $\kappa \leftrightarrow 1-\kappa$.
\end{lemma}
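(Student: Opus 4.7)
The approach exploits the Gaussian structure of $\mathcal{L}^{(\tau,\kappa,N_B)}_{A_XA_Y\to BW}$. Since both cascaded beamsplitters map thermal/vacuum inputs to thermal outputs, for each classical pair $(x,y)$ the conditional state $\rho_B^{x,y}$ is thermal with mean photon number
\begin{align}
\bar{n}_B(x,y) = (1-\kappa)N_B + \kappa\bigl(\tau s_n\,\mathbf{1}\{x=x_1\} + (1-\tau)s_n\,\mathbf{1}\{y=y_1\}\bigr),
\end{align}
and $\rho_W^{x,y}$ is thermal with mean $\bar{n}_W(x,y)$ obtained by the substitution $\kappa\leftrightarrow 1-\kappa$. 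Under $P_X\otimes P_Y$, the marginal $\sigma_B$ is a four-point mixture of thermal states, lying within trace distance $O((\alpha_n+\beta_n)s_n)$ of the pure thermal reference $\rho_{(1-\kappa)N_B}$, and similarly for $\sigma_W$.

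\textbf{Moments between thermal states.} Because $\sigma_{XYB}$ is classical-quantum on $XY$, every target quantity decomposes as a classical average of a quantum moment of the log-likelihood operator between two thermal states; for instance,
\begin{align}
D(\sigma_{XYB}\|\sigma_{XY}\otimes\sigma_B) = \sum_{x,y} P_X(x)P_Y(y)\,D(\rho_B^{x,y}\|\sigma_B),
\end{align}
and analogously for $V$ and $R$. The plan is first to replace $\sigma_B$ by the fixed reference $\rho_{(1-\kappa)N_B}$ (which shifts $D$ only in higher order since $\log$ is operator-smooth on the spectrum of well-above-vacuum thermal states and $\sigma_B$ differs from the reference by $O((\alpha_n+\beta_n)s_n)$), then to apply the small-deviation expansion for two thermal states $\rho_a,\rho_b$ with $|a-b|=o(1)$:
\begin{align}
D(\rho_a\|\rho_b) &= \tfrac{(a-b)^2}{2b(b+1)} + O(|a-b|^3),\\
V(\rho_a\|\rho_b) &= \tfrac{(a-b)^2}{b(b+1)} + O(|a-b|^3),\\
R(\rho_a\|\rho_b) &= O(|a-b|^4),
\end{align}
as recorded in \cite{Wang2022Towards,Grace2022Perturbation,Shi2020Practical}. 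The factor two relating $V$ to $D$ at leading order immediately matches the ratio seen in the statement, and the fourth-order bound explains the $O(s_n^4)$ remainder for $R$.

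\textbf{Assembling the six expansions.} For the joint relative entropy $D(\sigma_{XYB}\|\sigma_{XY}\otimes\sigma_B)$ the configurations $(x_1,y_0)$ and $(x_0,y_1)$ each contribute at order $s_n^2$, with deviations $\kappa\tau s_n$ and $\kappa(1-\tau)s_n$ weighted by $\alpha_n(1-\beta_n)$ and $(1-\alpha_n)\beta_n$, producing the advertised coefficient $\kappa^2/[2(1-\kappa)N_B(1+(1-\kappa)N_B)]$ times $\alpha_n\tau^2+\beta_n(1-\tau)^2$; the $(x_1,y_1)$ term carries an extra factor $\alpha_n\beta_n$ and the $(x_0,y_0)$ piece collapses against the reference shift to $O(((\alpha_n+\beta_n)s_n)^2)$, so both are absorbed in the remainder. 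The two conditional variants $D(\sigma_{XYB}\|\sigma_X\otimes\sigma_{YB})$ and $D(\sigma_{XYB}\|\sigma_Y\otimes\sigma_{XB})$ equal $I(X{:}YB)$ and $I(Y{:}XB)$; using $I(X{:}Y)=0$ they reduce to $I(X{:}B|Y)$ and $I(Y{:}B|X)$, so averaging over the conditioning variable retains only the $\alpha_n\tau^2$ or $\beta_n(1-\tau)^2$ branch. The $V$-identities then follow from the $V\simeq 2D$ relation above, the $R$-identities from the $|a-b|^4$ bound, and all $W$-side statements by the $\kappa\leftrightarrow 1-\kappa$ symmetry at the level of $\bar{n}_W$.

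\textbf{Main obstacle.} The delicate step is the bookkeeping: one must verify that the reference-shift error from replacing $\sigma_B$ by $\rho_{(1-\kappa)N_B}$ and the intrinsic Taylor remainder combine into the two advertised error classes $O((\alpha_n+\beta_n)^2 s_n^2)$ (cross-sparsity) and $O((\alpha_n+\beta_n)s_n^3)$ (higher-order Taylor), uniformly across all six relative-entropy, variance, and $R$ expressions and on both the $B$ and $W$ sides. This is precisely the operator-perturbative analysis carried out in \cite{Wang2022Towards,Grace2022Perturbation}, which can be imported here after tracking the weights $\alpha_n,\beta_n$ separately for each branch.
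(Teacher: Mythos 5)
Your proposal follows essentially the same route as the paper, whose entire proof is the one-line remark that the lemma ``follows from a direct expansion of the information quantities in terms of $\alpha_n$, $\beta_n$, and $s_n$''; your reduction to conditional thermal states $\rho_B^{x,y}$ with mean $\bar{n}_B(x,y)$, the per-symbol small-deviation expansions of $D$ and $V$ (with the factor-two relation $V\simeq 2D$), and the bookkeeping that routes the cross terms into $O((\alpha_n+\beta_n)^2 s_n^2)$ and the Taylor remainders into $O((\alpha_n+\beta_n)s_n^3)$ supply exactly the details the paper omits, and they produce the correct leading coefficients on both the $B$ and $W$ sides. The one point to double-check is the intermediate claim $R(\rho_a\|\rho_b)=O(|a-b|^4)$: if $R$ denotes the third absolute central moment of the log-likelihood ratio (as in the Oskouei--Wilde-type expansions the paper invokes), the natural scaling for two nearby thermal states is $O(|a-b|^3)$, so you should verify the precise definition of $R$ being used before asserting the fourth-order bound, even though your final expression coincides with the paper's statement.
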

\begin{proof}
    The proof follows from a direct expansion of the information quantities in terms of $\alpha_n$, $\beta_n$, and $s_n$.
\end{proof}

In Layer~$1$, to apply Lemma~\ref{lemm:reliable_resolvable}, Bob must first implement the projection channel that maps the state $\tilde\sigma_{X^nY^nB^nW^n}$ to $\tilde\sigma_{X^nY^nZ^nW^n}$, in order to have his local system reside in finite dimension. When the trace distance between the relevant states is sufficiently small, the following lemma provides a second-order expansion of the entropic quantities that will be useful in our analysis.
\begin{lemma}\label{lemm:expansion}
    Let $\sigma_{XYBW} \triangleq \mathcal{L}_{A_XA_Y\to BW}^{(\tau,\kappa, N_B)}(\sigma_{XYA_XA_Y})$, where $\sigma_{XYA_XA_Y}$ is defined in \eqref{sigma_XYAXAY}. Assume that the states $\tilde\sigma_{X^nY^nB^nW^n}$ and $\tilde\sigma_{X^nY^nZ^nW^n}$ defined in \eqref{eq:jointstate} satisfy
    \begin{align}
        &\frac{1}{2}\left\|\tilde\sigma_{X^nY^nZ^n} - \tilde\sigma_{X^nY^nB^n}\right\|_1 \leq \delta_n \leq \exp(-\omega(\sqrt{n}));\notag\\
        &\frac{1}{2}\left\|\tilde\sigma_{X^nY^nB^n} - \sigma_{XYB}^{\otimes n}\right\|_1 \leq 2\exp\left(-\frac{1}{2}\bar\mu^2n\alpha_n\right) + 2\exp\left(-\frac{1}{2}\bar\mu^2n\beta_n\right).\notag
    \end{align}
    Then, the following holds.
    \begin{align}
        &D_H^{\varepsilon^2-n^{-\frac{1}{2}}}(\tilde{\sigma}_{X^nY^nZ^n}\|\tilde{\sigma}_{X^nY^n}\otimes \tilde{\sigma}_{Z^n})\notag\\
        &\geq nD(\sigma_{XYB}\|\sigma_{XY}\otimes \sigma_{B}) - \sqrt{nV(\sigma_{XYB}\|\sigma_{XY}\otimes \sigma_{B})}Q^{-1}(\varepsilon^2)\notag\\
        &\quad+ O\left(\frac{R(\sigma_{XYB}\|\sigma_{XY}\otimes \sigma_{B})^\frac{3}{4}}{V(\sigma_{XYB}\|\sigma_{XY}\otimes \sigma_{B})}\right).\notag
    \end{align}
    Besides, if the distance between $\tilde\sigma_{X^nY^nW^n}$ and $\sigma_{XYW}^{\otimes n}$ satisfies
    \begin{align}
        \frac{1}{2}\left\|\tilde\sigma_{X^nY^nW^n} - \sigma_{XYW}^{\otimes n}\right\|_1 \leq 2\exp\left(-\frac{1}{2}\bar\mu^2n\alpha_n\right) + 2\exp\left(-\frac{1}{2}\bar\mu^2n\beta_n\right),\notag
    \end{align}
    the following holds.
    \begin{align}
        &D_{\max}^{\delta- n^{-\frac{1}{2}}}(\tilde\sigma_{X^nY^nW^n}\|\tilde\sigma_{X^nY^n}\otimes \tilde\sigma_{W^n}) \notag\\
        &\leq nD(\sigma_{XYW}\|\sigma_{XY}\otimes \sigma_{W}) + \sqrt{nV(\sigma_{XYW}\|\sigma_{XY}\otimes \sigma_{W})}Q^{-1}(\delta^2)\notag\\
        &\quad + O\left(\frac{R(\sigma_{XYW}\|\sigma_{XY}\otimes \sigma_{W})^\frac{3}{4}}{V(\sigma_{XYW}\|\sigma_{XY}\otimes \sigma_{W})}\right).\notag
    \end{align}
\end{lemma}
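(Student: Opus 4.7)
The strategy is to reduce both bounds to the standard i.i.d.\ second-order asymptotic expansions of the hypothesis-testing and smooth max-relative entropies, as in \cite[Proposition 13]{Oskouei:2018oeu} (and its counterpart for $D_{\max}^\epsilon$), by exploiting the fact that all the states involved are close in trace distance to a genuine tensor-product state. The key observation is that both error terms in the hypotheses decay faster than $n^{-1/2}$: $\delta_n\leq e^{-\omega(\sqrt n)}$ by assumption, while $e^{-\frac{1}{2}\bar\mu^2 n\alpha_n}+e^{-\frac{1}{2}\bar\mu^2 n\beta_n}$ also decays as $e^{-\omega(\sqrt n)}$ because $\alpha_n,\beta_n\in\omega(n^{-1/2})$. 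Consequently, the gap between the truncated-and-projected state and the genuine i.i.d.\ state can be absorbed into an $n^{-1/2}$ shift of the smoothing parameter without affecting the leading-order asymptotics.

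For the lower bound on $D_H^{\varepsilon^2-n^{-1/2}}(\tilde\sigma_{X^nY^nZ^n}\|\tilde\sigma_{X^nY^n}\otimes \tilde\sigma_{Z^n})$, I would proceed in three steps. First, apply the triangle inequality in trace distance to combine the two hypotheses and obtain $\tfrac12\|\tilde\sigma_{X^nY^nZ^n}-\sigma_{XYB}^{\otimes n}\|_1\le \eta_n$ with $\eta_n = e^{-\omega(\sqrt n)}$; by monotonicity under partial trace, the same bound (up to a factor of two) holds for the product marginals, so that $\tfrac12\|\tilde\sigma_{X^nY^n}\otimes\tilde\sigma_{Z^n}-\sigma_{XY}^{\otimes n}\otimes\sigma_B^{\otimes n}\|_1\le 2\eta_n$. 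Second, invoke the standard continuity property of the hypothesis-testing relative entropy under trace-distance perturbations in both arguments (namely $D_H^\epsilon(\rho\|\sigma)\ge D_H^{\epsilon+\eta}(\rho'\|\sigma')-\log(1/(1-\eta))$ when $\rho,\sigma$ are close to $\rho',\sigma'$ in trace distance), to get
\begin{align}
D_H^{\varepsilon^2-n^{-1/2}}(\tilde\sigma_{X^nY^nZ^n}\|\tilde\sigma_{X^nY^n}\otimes\tilde\sigma_{Z^n}) \geq D_H^{\varepsilon^2-n^{-1/2}+3\eta_n}(\sigma_{XYB}^{\otimes n}\|\sigma_{XY}^{\otimes n}\otimes\sigma_B^{\otimes n}) - o(1). \notag
\end{align}
Third, since $3\eta_n = o(n^{-1/2})$, the smoothing parameter on the right-hand side is still $\varepsilon^2-\Theta(n^{-1/2})$, and we can directly apply the third-order i.i.d.\ expansion of \cite[Proposition 13]{Oskouei:2018oeu} to $\sigma_{XYB}^{\otimes n}$, which yields exactly the claimed lower bound with the $R$-dependent remainder term.

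For the upper bound on $D_{\max}^{\delta-n^{-1/2}}$, the same three-step template applies verbatim, except that I use the continuity of $D_{\max}^\epsilon$ under trace-distance perturbations (equivalently, purified-distance perturbations, using the Fuchs-van de Graaf inequalities) to transfer from $\tilde\sigma_{X^nY^nW^n}$ and $\tilde\sigma_{X^nY^n}\otimes\tilde\sigma_{W^n}$ to $\sigma_{XYW}^{\otimes n}$ and $\sigma_{XY}^{\otimes n}\otimes \sigma_W^{\otimes n}$, and then invoke the i.i.d.\ third-order expansion of the smooth max-relative entropy (the upper-bound counterpart of \cite[Proposition 13]{Oskouei:2018oeu}) on the latter pair.

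The main obstacle I anticipate is of a bookkeeping nature rather than a conceptual one: the continuity step for $D_{\max}^\epsilon$ requires perturbations to be expressed in purified distance, whereas the hypotheses in the lemma give trace-distance bounds, so one must convert via $P(\rho,\sigma)\le\sqrt{2\,\tfrac12\|\rho-\sigma\|_1}$ and verify that the resulting bound still decays as $o(n^{-1/2})$ — which it does because $\sqrt{\eta_n} = e^{-\omega(\sqrt n)}$ is still $o(n^{-1/2})$. A secondary subtlety is that the reference state $\tilde\sigma_{X^nY^n}\otimes\tilde\sigma_{Z^n}$ is a product across $X^nY^n$ and $Z^n$, but not i.i.d.\ on each factor; replacing it with $\sigma_{XY}^{\otimes n}\otimes\sigma_B^{\otimes n}$ to enable the use of the i.i.d.\ expansion requires invoking Lemma~\ref{lemma:typical_product} on each marginal separately, which is precisely why the two-step trace-distance hypothesis in the lemma is phrased as it is.
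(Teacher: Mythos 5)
Your treatment of the $D_H$ lower bound follows the paper's route (reduce to the i.i.d.\ expansion of \cite[Proposition~13]{Oskouei:2018oeu} after absorbing the perturbations into the smoothing parameter), but your continuity step is imprecise in a way that matters. A trace-distance perturbation of the \emph{first} argument of $D_H^\epsilon$ is indeed absorbed by shifting $\epsilon$, but a perturbation of the \emph{reference} state is not of that form: for the optimal test $\Pi$ one only gets $\mathrm{Tr}(\Pi\tilde\tau)\le\mathrm{Tr}(\Pi\tau)+2\eta_n$, so the correction is $-\log\bigl(1+2\eta_n e^{D_H}\bigr)$, not $-\log(1/(1-\eta))$. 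This is negligible only if $\eta_n\ll e^{-nD(\sigma_{XYB}\|\sigma_{XY}\otimes\sigma_B)}$. Here $nD=\Theta\bigl(n(\alpha_n+\beta_n)s_n^2\bigr)=o(\sqrt n)$ because of the covertness constraint $(\alpha_n\tau+\beta_n(1-\tau))s_n=O(n^{-1/2})$, while $\eta_n=e^{-\omega(\sqrt n)}$, so the comparison goes through — but this is exactly why the hypothesis demands $e^{-\omega(\sqrt n)}$ decay rather than the $o(n^{-1/2})$ you cite as the operative condition, and it should be verified explicitly.

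The genuine gap is in your second half. You invoke "continuity of $D_{\max}^\epsilon$ under trace-distance perturbations \dots in both arguments." For the first argument this is fine (the purified-distance smoothing ball absorbs it, with the Fuchs--van de Graaf conversion you describe). For the reference state it is false in general: $D_{\max}(\rho\|\sigma)$ is an operator-domination statement, and replacing $\sigma$ by a trace-distance-close $\sigma'$ can change it arbitrarily (e.g., if $\sigma'$ loses support, as the truncated classical marginal $\tilde\sigma_{X^nY^n}$ in fact does relative to $\sigma_{XY}^{\otimes n}$). So there is no generic continuity argument that transfers the reference from $\tilde\sigma_{X^nY^n}\otimes\tilde\sigma_{W^n}$ to $\sigma_{XY}^{\otimes n}\otimes\sigma_W^{\otimes n}$. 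The paper avoids this by first converting $D_{\max}^{\delta-n^{-1/2}}$ into a hypothesis-testing quantity $D_H^{1-\delta^2-O(n^{-1/2})}$ via \cite[Lemma~10]{Khatri2021Second} — a relation valid for arbitrary (non-i.i.d.) states — and only then perturbing the reference inside $D_H$, where only the scalar $\mathrm{Tr}(\Pi\sigma)$ matters and the additive error is controlled exactly as in the first paragraph, before applying the expansion of \cite[Appendix~C, Proposition~2]{Kaur2017Upper}. You should restructure the $D_{\max}$ half of your argument along these lines; the direct "i.i.d.\ expansion of $D_{\max}^\epsilon$ plus two-sided continuity" route does not close.
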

\begin{proof}
    The lower bound expansion for the hypothesis testing relative entropy $D_H^{\varepsilon^2-n^{-1/2}}$ follows directly from \cite[Proposition~13]{Oskouei:2018oeu}. 
    For the upper bound on the smooth max-relative entropy $D_{\max}^{\delta- n^{-1/2}}$, we first invoke \cite[Lemma~10]{Khatri2021Second} to relate it to the hypothesis testing quantity $D_H^{1-\delta^2 - O(n^{-1/2})}$. The final asymptotic expansion is then established by applying Proposition~2 of \cite[Appendix C]{Kaur2017Upper}.
\end{proof}

\subsection{Bob's Projection Channel}\label{sec:projection_channel}
Recall the joint c-q state prior to measurement given in \eqref{eq:jointstate}.
To apply the finite-dimensional coding results of Lemma~\ref{lemm:reliable_resolvable}, Bob applies a truncation channel $\mathcal{N}_{B^nI_X^{l_x}I_Y^{l_y}\to Z^nJ_X^{l_x}J_Y^{l_x}}$ given in \eqref{eq:projection_channel_1}, which maps his infinite-dimensional bosonic systems ($B^n, I_X^{l_x}, I_Y^{l_y}$) to finite dimensional systems ($Z^n, J_X^{l_x}, J_Y^{l_y}$). 
Implementing this channel yields the effective output classical-quantum state 
\begin{align}
    \tilde\sigma_{X^nY^n\Theta_{X}^{l_x}\Theta_{Y}^{l_y}Z^nJ^{l_x}_X J_Y^{l_y}W^n} = \mathcal{N}_{B^nI_X^{l_x}I_Y^{l_y}\to Z^nJ_X^{l_x}J_Y^{l_x}} \left(\tilde\sigma_{X^nY^n\Theta_{X}^{l_x}\Theta_{Y}^{l_y}B^nI^{l_x}_X I_Y^{l_y}W^n}\right)\label{eq:projected_state}.
\end{align}

We select the projections $P_{n,B}, P_{n,I_X}, P_{n,I_Y}$ defining the channel $\mathcal{N}$ with sufficiently large rank such that the truncation error is uniformly bounded for all typical codewords:
\begin{align}
    \sup_{(x^n,y^n)\in \textrm{supp}(\tilde{P}_{X^n}\times \tilde{P}_{Y^n})}\frac{1}{2}\left\|\sigma_{\Theta_{X}^{l_x}\Theta_{Y}^{l_y}B^nI^{l_x}_X I_Y^{l_y}W^n}^{x^n,y^n} - \sigma_{\Theta_{X}^{l_x}\Theta_{Y}^{l_y}Z^nJ^{l_x}_X J_Y^{l_y}W^n}^{x^n,y^n}\right\|_1 \leq \delta_n,\label{eq:projection}
\end{align}
where $\delta_n$ is a sequence chosen small enough to satisfy Lemma~\ref{lemm:expansion}. This bound has two immediate consequences:

\begin{enumerate}
    \item \textbf{Layer 1 Consistency:} Averaging \eqref{eq:projection} over $x^n, y^n$ implies that the distance between the expected states is bounded:
    \begin{align}
         \frac{1}{2}\left\|\tilde\sigma_{X^nY^nB^n} - \tilde\sigma_{X^nY^nZ^n}\right\|_1 \leq \delta_n. \label{eq:projection_layer_1}
    \end{align}
    \item \textbf{Layer 2 Consistency:} Conditioned on successful Layer 1 decoding of $(x^n, y^n)$, Bob focuses on the subset of systems correlated with the idlers. By the monotonicity of trace distance under partial trace, \eqref{eq:projection} ensures that for the relevant subsystems $B^l I_X^{l_x} I_Y^{l_y}$:
    \begin{align}
      \frac{1}{2}\left\|\sigma_{\Theta_{X}^{l_x}\Theta_{Y}^{l_y}B^lI^{l_x}_X I_Y^{l_y}}^{x^n,y^n} - \sigma_{\Theta_{X}^{l_x}\Theta_{Y}^{l_y}Z^lJ^{l_x}_X J_Y^{l_y}}^{x^n,y^n}\right\|_1 \leq \delta_n. \label{eq:projection_layer_2}
    \end{align}
\end{enumerate}
Equations \eqref{eq:projection_layer_1} and \eqref{eq:projection_layer_2} allow us to substitute the infinite-dimensional states with their finite approximations in the subsequent reliability analysis.

\subsection{Reliability for Layer 1}
We analyze Layer 1 by considering the reduced state $\tilde{\sigma}_{X^n Y^n Z^n W^n}$, obtained by tracing out the entangled idler subsystems from the projected state \eqref{eq:projected_state}. Applying the reliability bound from Lemma~\ref{lemm:reliable_resolvable} with $\varepsilon = \varepsilon_1$, we limit the average decoding error probability as follows:
\begin{align}
    \mathbf{E}_{\mathcal{C}_{X_1}, \mathcal{C}_{Y_1}, S_X, S_Y} \left[\mathbf{P}\left((\hat{M}_{X_1}, \hat{M}_{Y_1}) \neq (M_{X_1}, M_{Y_1})\mid S_X, S_Y\right)\right] \leq 46\varepsilon_1.
\end{align}

\subsection{Covertness Analysis}\label{sec:covert}
We establish covertness by bounding the total trace distance between Willie's observation and the vacuum state using a three-step triangle inequality. First, we ensure the ideal i.i.d. output state $\sigma_W^{\otimes n}$ is close to the vacuum $\sigma_{0,W}^{\otimes n}$. By applying Lemma~\ref{lemma:covertstate} and enforcing the power constraint:
\begin{align}
    (\alpha_n\tau + \beta_n(1-\tau))s_n &= \frac{2\sqrt{\kappa N_B(1+ \kappa N_B)}}{1-\kappa}Q^{-1}\left(\frac{1- (\delta-\eta)}{2}\right)n^{-\frac{1}{2}} \notag\\
    &\quad - \frac{\bar{D}(s_n^2+\alpha_n+\beta_n)}{\sqrt{n}}- O(n^{-1}),\label{eq:covert_mid_constraint}
\end{align}
we obtain the bound for sufficiently large $\bar{D}$:
\begin{align}
    \frac{1}{2}\left\|\sigma_{W}^{\otimes n} - \sigma_{0,W}^{\otimes n}\right\|_1 \leq \delta - \eta.\label{eq:covert1}
\end{align}

Second, Lemma~\ref{lemma:typical_product} guarantees that the truncated average state $\tilde{\sigma}_{W^n}$ sufficiently approximates the i.i.d. state, for any $\eta' > 0$ and sufficiently large $n$:
\begin{align}
    \frac{1}{2}\left\|\tilde{\sigma}_{W^n}- \sigma_{W}^{\otimes n}\right\|_1 \leq 2\mathbf{e}^{-\frac{1}{2}\bar\mu^2n\alpha_n} + 2\mathbf{e}^{-\frac{1}{2}\bar\mu^2n\beta_n} \leq \eta'.\label{eq:covert2} 
\end{align}

Third, we apply the resolvability bound \eqref{eq:covert_in_lemma} from Lemma~\ref{lemm:reliable_resolvable} to the marginal state $\tilde\sigma_{X^nY^nZ^nW^n}$ of \eqref{eq:projected_state}. By choosing $\delta'' \ll \eta - \eta'$, we ensure that for the majority of codebooks, the induced state $\hat{\sigma}_{W^n}$ is close to the truncated average:
\begin{align}
    \mathbf{E}_{\mathcal{C}_{X_1}, \mathcal{C}_{Y_1}}\left[\frac{1}{2}\left\|\hat{\sigma}_{W^n}- \tilde{\sigma}_{W^n}\right\|_1\right] \leq \delta''.\label{eq:covert3}
\end{align}

Combining \eqref{eq:covert1}, \eqref{eq:covert2}, and 
\eqref{eq:covert3} via the triangle inequality and applying Markov's inequality, we conclude that a randomly drawn codebook tuple $(\mathcal{C}_{X_1}, \mathcal{C}_{Y_1})$ satisfies the covertness requirement below with high probability:
\begin{align}
    \frac{1}{2}\left\|\hat{\sigma}_{W^n} - \sigma_{0,W}^{\otimes n}\right\|_1 \leq \delta. \label{eq:covertness}
\end{align}

\subsection{Reliability for Layer 2 and Total Error Analysis}
From the Layer 1 analysis (Lemma~\ref{lemm:reliable_resolvable}), the average success probability over random codebooks and keys satisfies:
\begin{align}
    &\mathbf{E}_{\mathcal{C}_{X_1}, \mathcal{C}_{Y_1}, S_X, S_Y} \left[ \frac{1}{M_1K} \sum_{k, m_1} \mathrm{Tr}\left(\Lambda^{x_{m_{X_1}k_X}, y_{m_{Y_1}k_Y}}_{Z^n} \sigma_{Z^n}^{x_{m_{X_1}k_X}, y_{m_{Y_1}k_Y}}\right) \right] \notag\\
    &\geq 1-46\varepsilon_1. \label{eq:avg_success_prob}
\end{align}
Consider a specific codebook realization $(\mathcal{C}_{X_1}, \mathcal{C}_{Y_1})$ and keys $(k_X, k_Y)$ where the conditional success probability for a message pair $(m_{X_1}, m_{Y_1})$ is bounded by:
\begin{align}
    \mathrm{Tr}\left[\Lambda^{x_{m_{X_1}k_X}, y_{m_{Y_1}k_Y}}_{Z^n} \sigma_{Z^n}^{x_{m_{X_1}k_X}, y_{m_{Y_1}k_Y}}\right] \geq 1-46\varepsilon'_1. \label{eq:success_prob}
\end{align}
This implies that for any Layer 2 codewords $\theta^{l_x} \in \Theta_{X}^{\otimes l_x}$ and $\phi^{l_y} \in \Theta_{Y}^{\otimes l_y}$, the joint success probability including the idler systems is similarly bounded:
\begin{align}
    \mathrm{Tr}\left[\left(\Lambda^{x_{m_{X_1}k_X}, y_{m_{Y_1}k_Y}}_{Z^n}\otimes \mathrm{id}_{J_XJ_Y}\right) \sigma_{Z^nJ_X^{l_x}J_Y^{l_y}}^{x_{m_{X_1}k_X}, y_{m_{Y_1}k_Y}, \theta^{l_x}, \phi^{l_y}}\right] \geq 1-46\varepsilon'_1. \label{eq:general_success_prob}
\end{align}
Note that by \eqref{eq:avg_success_prob} and \eqref{eq:success_prob}, the average error parameter satisfies $\mathbf{E}[\varepsilon'_1] \leq \varepsilon_1$.

Let $\sigma'$ denote the post-measurement state conditioned on successful Layer 1 decoding:
\begin{align}
    (\sigma')^{x_{m_{X_1}k_X}, y_{m_{Y_1}k_Y}, \theta^{l_x}, \phi^{l_y}}_{Z^nJ_X^{l_x}J_Y^{l_y}} 
    = \frac{\sqrt{\Lambda} \, \sigma_{Z^nJ_X^{l_x}J_Y^{l_y}}^{x_{m_{X_1}k_X}, y_{m_{Y_1}k_Y}, \theta^{l_x}, \phi^{l_y}} \, \sqrt{\Lambda}}{\mathrm{Tr}\left[\Lambda \, \sigma_{Z^nJ_X^{l_x}J_Y^{l_y}}^{x_{m_{X_1}k_X}, y_{m_{Y_1}k_Y}, \theta^{l_x}, \phi^{l_y}}\right]},
\end{align}
where $\Lambda \triangleq \Lambda^{x_{m_{X_1}k_X}, y_{m_{Y_1}k_Y}}_{Z^n}\otimes \mathrm{id}_{J_XJ_Y}$. By the Gentle Measurement Lemma and \eqref{eq:general_success_prob}, we have for every $\theta^{l_x}, \phi^{l_y}$:
\begin{align}
    \frac{1}{2}\left\|\sigma - \sigma'\right\|_1 \leq 2\sqrt{46\varepsilon'_1}. \label{eq:layer2_sigma_distance}
\end{align}

Conditioned on correct Layer 1 decoding, Bob focuses on the relevant subsystems $Z^l J_X^{l_x} J_Y^{l_y}$. Applying Lemma~\ref{lemm:reliable_resolvable} to the Layer 2 superposed state:
\begin{align}
     \sigma^{x_{m_{X_1}k_X}, y_{m_{Y_1}k_Y}}_{\Theta_{X}^{l_x} \Theta_{Y}^{l_y}Z^lJ_X^{l_x}J_Y^{l_y}} 
     \triangleq \frac{1}{|\Theta_{X}|^{l_x}|\Theta_{Y}|^{l_y}}\sum_{\theta^{l_x}, \phi^{l_y}} \theta^{l_x}\otimes \phi^{l_y} \otimes \sigma^{x_{m_{X_1}k_X}, y_{m_{Y_1}k_Y}, \theta^{l_x}, \phi^{l_y}}_{Z^lJ_X^{l_x}J_Y^{l_y}}, \label{eq:layer_2_condi_state}
\end{align}
we can construct a codebook with average error probability at most $46\varepsilon_2$. However, decoding occurs on the post-measurement state $\sigma'$. Accounting for the perturbation \eqref{eq:layer2_sigma_distance}, the conditional error probability becomes:
\begin{align}
     &\mathbf{E}_{\mathcal{C}_{X_2}, \mathcal{C}_{Y_2}}\left[\mathbf{P}\left((\hat{M}_{X_2}, \hat{M}_{Y_2}) \neq (M_{X_2},M_{Y_2}) \mid (\hat{m}_{1}) = (m_{1})\right)\right] \nonumber \\
     &\leq 46\varepsilon_2 + 2\sqrt{46\varepsilon_1'}. \label{eq:layer_2_error_condi}
\end{align}
Averaging over all codebooks and utilizing Jensen's inequality:
\begin{align}
    &\mathbf{E}_{\mathcal{C}}\mathbf{E}_{M_1, S} \left[\mathbf{P}\left((\hat{M}_{X_2}, \hat{M}_{Y_2}) \neq (M_{X_2},M_{Y_2}) \mid (\hat{M}_{1}) = (M_{1})\right)\right] \nonumber \\
    &\leq \mathbf{E}_{\mathcal{C}}\mathbf{E}_{M_1, S} \left[46\varepsilon_2 + 2\sqrt{46\varepsilon_1'}\right] \leq 46\varepsilon_2 + 2\sqrt{46\varepsilon_1}.
\end{align}
Thus, the total probability of error is bounded by:
\begin{align}
    & \mathbf{E}\left[\mathbf{P}\left((\hat{M}_{1}) \neq (M_{1}) \vee (\hat{M}_{2}) \neq (M_{2})\right)\right] \nonumber \\
    & \leq \mathbf{E}\bigg[\mathbf{P}\left((\hat{M}_{1}) \neq (M_{1})\right) + \mathbf{P}\left((\hat{M}_{2}) \neq (M_{2})\mid (\hat{M}_{1}) = (M_{1})\right)\bigg] \nonumber \\
    &\leq 46\varepsilon_1 + 46\varepsilon_2 + 2\sqrt{46\varepsilon_1},\label{eq:error}
\end{align}
where the expectation $\mathbf{E}$ is taken over all codebooks $\mathcal{C}_{X_1}, \mathcal{C}_{Y_1}, \mathcal{C}_{X_2}, \mathcal{C}_{Y_2}$, keys, and messages.

\subsection{Throughput Analysis}\label{sec:throughput}
\subsubsection{Layer 1}
We begin by analyzing the throughput of Layer 1. Consider the average projected c-q state $\tilde\sigma_{X^nY^nZ^nW^n}$ defined via \eqref{eq:projected_state}. For this state, we apply the one-shot bounds from Lemma~\ref{lemm:reliable_resolvable}. The sizes of the message and secret key sets are bounded as follows:
\begin{align}
    &\log M_{X_1} = I_H^{(\varepsilon_1-\eta'')^2}(X^n:Y^nZ^n)_{\tilde{\sigma}} - 3 \log_2\left(\frac{46}{3}\eta''\right),\\
    & \log M_{Y_1} = I_H^{(\varepsilon_1-\eta'')^2}(Y^n:X^nZ^n)_{\tilde{\sigma}} - 3 \log_2\left(\frac{46}{3}\eta''\right),\\
    & \log M_{X_1} + \log M_{Y_1} = I_H^{(\varepsilon_1-\eta'')^2}(X^nY^n:Z^n)_{\tilde{\sigma}} - 3 \log_2\left(\frac{46}{3}\eta''\right),\label{eq:log_M}
\end{align}
and for the secret keys:
\begin{align}
    &\log M_{X_1}S_X = I^{\frac{\delta''-\delta'''}{6}}_{\max}(X^n:W^n)_{\rho} - 2 \log\left(\frac{\delta'''}{3}\right),\\
    &\log M_{Y_1}S_Y = I^{\frac{\delta''-\delta'''}{6}}_{\max}(Y^n:W^n)_{\rho} - 2 \log\left(\frac{\delta'''}{3}\right),\\
    & \log M_{X_1}S_X + \log M_{Y_1}S_Y = I^{\frac{\delta''-\delta'''}{12}}_{\max}(X^nY^n:W^n)_{\rho} - 2 \log\left(\frac{\delta'''}{6}\right).\label{eq:log_MK}
\end{align}

Furthermore, the covertness requirement imposed by Lemma~\ref{lemma:covertstate} (Section~\ref{sec:covert}) constrains the effective signal power. Specifically, using \eqref{eq:covert_mid_constraint}, we have:
\begin{align}
    &(\alpha_n\tau + \beta_n(1-\tau))s_n \nonumber\\
    &\leq \frac{2\sqrt{\kappa N_B(1+ \kappa N_B)}}{1-\kappa}Q^{-1}\left(\frac{1- (\delta-\eta)}{2}\right)n^{-\frac{1}{2}} \nonumber\\
    &\quad - \frac{\bar{D}(s_n^2+\alpha_n+\beta_n)}{\sqrt{n}}- O(n^{-1}),\\
    &= \frac{2\sqrt{\kappa N_B(1+ \kappa N_B)}}{1-\kappa}Q^{-1}\left(\frac{1- (\delta-\eta)}{2}\right)n^{-\frac{1}{2}} - o(n^{-\frac{1}{2}}).\label{eq:sparse_signal_limit}
\end{align}
For any $\eta>0$, there exists a sufficiently large $N_\eta$ such that \eqref{eq:sparse_signal_limit} holds for all $n \geq N_\eta$. Consequently, we obtain the inequality \eqref{eq:covert_constraint} stated in the main theorem:
\begin{align}
    &(\alpha_n\tau + \beta_n(1-\tau))s_n\\
    &\leq \frac{2\sqrt{\kappa N_B(1+ \kappa N_B)}}{1-\kappa}\left(Q^{-1}\left(\frac{1- \delta}{2}\right) - o(1)\right)n^{-\frac{1}{2}} - o(n^{-\frac{1}{2}})\nonumber\\
    &= \frac{2\sqrt{\kappa N_B(1+ \kappa N_B)}}{1-\kappa}Q^{-1}\left(\frac{1- \delta}{2}\right) n^{-\frac{1}{2}} - o(n^{-\frac{1}{2}}).
\end{align}

To expand the information quantities, recall from \eqref{eq:projection_layer_1} that for each $n$, we can choose a projection such that the trace distance is arbitrarily small:
\begin{align}
     \frac{1}{2}\left\|\tilde\sigma_{X^nY^nB^n} - \tilde\sigma_{X^nY^nZ^n}\right\|_1 \leq \delta_n. 
\end{align}
We set $\delta_n \leq \exp(-\omega\sqrt{n})$ and apply Lemma~\ref{lemma:typical_product} to the state $\tilde\sigma_{X^nY^nB^n}$. the two prerequisites for Lemma~\ref{lemm:expansion} are thus satisfied:
\begin{align}
    &\frac{1}{2}\left\|\tilde\sigma_{X^nY^nZ^n} - \tilde\sigma_{X^nY^nB^n}\right\|_1 \leq \delta_n \leq \exp(-\omega(\sqrt{n})), \\
    &\frac{1}{2}\left\|\tilde\sigma_{X^nY^nB^n} - \sigma_{XYB}^{\otimes n}\right\|_1 \leq 2\exp\left(-\frac{1}{2}\bar\mu^2n\alpha_n\right) + 2\exp\left(-\frac{1}{2}\bar\mu^2n\beta_n\right).
\end{align}
Setting $\eta'' = \frac{1}{2\varepsilon_1\sqrt{n}}$ in \eqref{eq:log_M} and invoking Lemma~\ref{lemm:expansion}, we obtain the second-order expansion for the sum rate:
\begin{align}
    \log M_{X_1} + \log M_{Y_1} 
    &\geq nD(\sigma_{XYB}\|\sigma_{XY}\otimes \sigma_{B}) - \sqrt{nV(\sigma_{XYB}\|\sigma_{XY}\otimes \sigma_{B})}Q^{-1}(\varepsilon_1^2)\nonumber\\
    &\quad+ O\left(\frac{R(\sigma_{XYB}\|\sigma_{XY}\otimes \sigma_{B})^\frac{3}{4}}{V(\sigma_{XYB}\|\sigma_{XY}\otimes \sigma_{B})}\right) + O(\log n).
\end{align}
Applying the throughput bounds from Lemma~\ref{lemma:throughput}, this simplifies to:
\begin{align}
    \log M_{X_1} + \log M_{Y_1} 
    &\geq \frac{n(\alpha_n\tau^2+\beta_n(1-\tau)^2)\kappa^2 s_n^2}{2(1-\kappa)N_B(1+ (1-\kappa)N_B)} + o(n(\alpha_n+ \beta_n)s_n^2).
\end{align}
Similarly, the individual rates are bounded by:
\begin{align}
    \log M_{X_1} &\geq \frac{n\alpha_n\tau^2\kappa^2 s_n^2}{2(1-\kappa)N_B(1+ (1-\kappa)N_B)} + o(n\alpha_ns_n^2),\\
    \log M_{Y_1} &\geq \frac{n\beta_n(1-\tau)^2\kappa^2 s_n^2}{2(1-\kappa)N_B(1+ (1-\kappa)N_B)} + o(n\beta_ns_n^2).
\end{align}
These inequalities correspond to the achievable Layer 1 message rates asserted in Theorem~\ref{thm:main}.

Finally, we address the resolvability rates. Applying Lemma~\ref{lemma:typical_product} to $\tilde\sigma_{X^nY^nW^n}$, we confirm the prerequisite for the second part of Lemma~\ref{lemm:expansion}:
\begin{align}
    \frac{1}{2}\left\|\tilde\sigma_{X^nY^nW^n} - \sigma_{XYW}^{\otimes n}\right\|_1 \leq \exp\left(-\frac{1}{2}\bar\mu^2n\alpha_n\right) + \exp\left(-\frac{1}{2}\bar\mu^2n\beta_n\right).
\end{align}
Setting $\delta'''= \frac{12}{\sqrt{n}}$ in \eqref{eq:log_MK}, we apply Lemma~\ref{lemm:expansion} to obtain:
\begin{align}
    &\log M_{X_1}S_X + \log M_{Y_1}S_Y \nonumber\\
    &\leq nD(\sigma_{XYW}\|\sigma_{XY}\otimes \sigma_{W}) + \sqrt{nV(\sigma_{XYW}\|\sigma_{XY}\otimes \sigma_{W})}Q^{-1}\left(\frac{\delta''^2}{144}\right)\nonumber\\
    &\quad + O\left(\frac{R(\sigma_{XYW}\|\sigma_{XY}\otimes \sigma_{W})^\frac{3}{4}}{V(\sigma_{XYW}\|\sigma_{XY}\otimes \sigma_{W})}\right).
\end{align}
Applying Lemma~\ref{lemma:throughput}, the sum key rate is bounded by:
\begin{align}
    &\log M_{X_1}S_X + \log M_{Y_1}S_Y \nonumber\\
    &\leq \frac{n(\alpha_n\tau^2+\beta_n(1-\tau)^2)(1-\kappa)^2 s_n^2}{2\kappa N_B(1+ \kappa N_B)} + o(n(\alpha_n+ \beta_n)s_n^2).
\end{align}
Similarly, the individual key rates satisfy:
\begin{align}
    &\log M_{X_1}S_X \leq \frac{n\alpha_n\tau^2(1-\kappa)^2 s_n^2}{2\kappa N_B(1+ \kappa N_B)} + o(n\alpha_ns_n^2),\\
    &\log M_{Y_1}S_Y \leq \frac{n\beta_n(1-\tau)^2(1-\kappa)^2 s_n^2}{2\kappa N_B(1+ \kappa N_B)} + o(n\beta_ns_n^2).
\end{align}
These inequalities confirm the sufficient resolvability rates required by Theorem~\ref{thm:main}.

\subsubsection{Layer 2}\label{sec:throughput_layer_2}
We now analyze Layer 2, conditioning on the successful decoding of Layer 1 messages $(x_{m_{X_1}k_X}, y_{m_{Y_1}k_Y})$. Let $l$ denote the number of relevant $Z$-systems correlated with the idlers $J_X, J_Y$. The corresponding conditional state for Layer 2 is:
\begin{align}
     &\sigma^{x_{m_{X_1}k_X}, y_{m_{Y_1}k_Y}}_{\Theta_{X}^{l_x} \Theta_{Y}^{l_y}Z^lJ_X^{l_x}J_Y^{l_y}}\notag \\
     &= \frac{1}{|\Theta_{X}|^{l_x}|\Theta_{Y}|^{l_y}}\sum_{\theta^{l_x}, \phi^{l_y}} \ket{\theta^{l_x}}\bra{\theta^{l_x}}\otimes \ket{\phi^{l_y}}\bra{\phi^{l_y}} \otimes \sigma^{x_{m_{X_1}k_X}, y_{m_{Y_1}k_Y}, \theta^{l_x}, \phi^{l_y}}_{Z^lJ_X^{l_x}J_Y^{l_y}}.
\end{align}
Applying Lemma~\ref{lemm:reliable_resolvable} yields the following one-shot lower bounds on the Layer 2 codebook sizes for any $\eta \in (0, \varepsilon_2)$:
\begin{align}
    \log M_{X_2} &\geq I_H^{(\varepsilon_2-\eta)^2}(\Theta_{X}^{l_x}:\Theta_{Y}^{l_y}Z^lJ_X^{l_x}J_Y^{l_y})_{\sigma} - 3 \log_2\left(\frac{46}{3}\eta\right),\\
    \log M_{Y_2} &\geq I_H^{(\varepsilon_2-\eta)^2}(\Theta_{Y}^{l_y}:\Theta_{X}^{l_x}Z^lJ_X^{l_x}J_Y^{l_y})_{\sigma} - 3 \log_2\left(\frac{46}{3}\eta\right),\\
    \log M_{X_2}M_{Y_2} &\geq I_H^{(\varepsilon_2-\eta)^2}(\Theta_{X}^{l_x}\Theta_{Y}^{l_y}:Z^lJ_X^{l_x}J_Y^{l_y})_{\sigma} - 3 \log_2\left(\frac{46}{3}\eta\right).
\end{align}
Recall that the projection error in \eqref{eq:projection_layer_2} is bounded by $\delta_n$. By selecting $\delta_n$ sufficiently small and setting $\eta = O(n^{-1/2})$, we employ the second-order expansion (Lemma~\ref{lemm:expansion}) to recover the infinite-dimensional mutual information quantities:
\begin{align}
    \log M_{X_2} &\geq I(\Theta_{X}^{l_x}:\Theta_{Y}^{l_y}B^lJ_X^{l_x}J_Y^{l_y})_{\sigma} - \sqrt{V_{\sigma}} Q^{-1}(\varepsilon_2^2) + h.o.t.,
\end{align}
with analogous expansions for $\log M_{Y_2}$ and the sum rate. In the covert regime, the finite blocklength penalties are asymptotically negligible compared to the dominant $O(\sqrt{n} \log n)$ scaling of the mutual information. Consequently, the achievable rate is governed by the first-order term.

With Lemma~\ref{lemma_D_theta} below, we con substitute the explicit expansions for the mutual information.
\begin{lemma} \label{lemma_D_theta} 
    The mutual information terms satisfy:
    \begin{align}           
    I(\Theta_{X}^{l_x}:\Theta_{Y}^{l_y}B^lI_X^{l_x}I_Y^{l_y})_{\sigma} &= -(1-\bar\mu)\frac{n\kappa\tau\alpha_n}{1+(1-\kappa)N_B}s_n\log s_n + O(n\alpha_ns_n),\label{eq:theta_X}\\
    I(\Theta_{Y}^{l_y}:\Theta_{X}^{l_x}B^lI_X^{l_x}I_Y^{l_y})_{\sigma} &= -(1-\bar\mu)\frac{n\kappa (1-\tau)\beta_n}{1+(1-\kappa)N_B}s_n\log s_n + O(n\beta_ns_n),\label{eq:theta_Y}\\
    I(\Theta_{X}^{l_x}\Theta_{Y}^{l_y}:B^lI_X^{l_x}I_Y^{l_y})_{\sigma} &= -(1-\bar\mu)\frac{n\kappa(\tau\alpha_n + (1-\tau)\beta_n)}{1+(1-\kappa)N_B}s_n\log s_n \notag\\
    & \quad + O(n(\alpha_n+\beta_n)s_n).\label{eq:theta_XY}
    \end{align}
\end{lemma}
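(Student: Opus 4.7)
The plan is to reduce the computation to a per-position analysis via the product structure of the conditional state. Fix a typical Layer 1 codeword pair $(x^n,y^n)$ in the support of $\tilde P_{X^n}\times \tilde P_{Y^n}$, and let $\mathcal{T}_X \triangleq \{i:x_i=x_1\}$ and $\mathcal{T}_Y \triangleq \{i:y_i=y_1\}$, whose sizes are at least $l_x=(1-\bar\mu)n\alpha_n$ and $l_y=(1-\bar\mu)n\beta_n$ respectively. Because $A_X$ places a phase-modulated TMSV only at positions in $\mathcal{T}_X$ and vacuum elsewhere (and symmetrically for $A_Y$), and because both the MAC channel and the input distributions are i.i.d.\ across positions, the conditional state $\sigma^{x^n,y^n}_{\Theta_X^{l_x}\Theta_Y^{l_y}B^nI_X^{l_x}I_Y^{l_y}}$ factorizes as a tensor product across the $n$ channel positions.

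First, I would invoke additivity of mutual information for product states together with the chain rule and the independence of local phases to obtain
\begin{equation*}
I(\Theta_X^{l_x}:\Theta_Y^{l_y}B^nI_X^{l_x}I_Y^{l_y})_\sigma = \sum_{i\in \mathcal{T}_X} I(\Theta_{X,i}:\Theta_{Y,i}B_i I_{X,i} I_{Y,i})_{\sigma_i},
\end{equation*}
where $\sigma_i$ is the single-position marginal, and systems at positions $j\neq i$ contribute nothing because they are quantum-decoupled from $\Theta_{X,i}$. Second, for every $i\in\mathcal{T}_X$ the state $\sigma_i$ coincides, up to the vacuous presence of $\Theta_{Y,i}$ when $i\notin\mathcal{T}_Y$, with the single-letter c-q state \eqref{eq:cq_sigma} underlying Lemma~\ref{lemma_D_theta_1}; that lemma then supplies the lower bound $\tfrac{-\kappa\tau}{1+(1-\kappa)N_B}s_n\log s_n + O(s_n)$ on each per-position term. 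The matching upper bound at the same leading order, required to turn the inequality into the asserted asymptotic equality, follows from the Gaussian Holevo expressions of Propositions~\ref{prop:info_1}--\ref{prop:info_2}, which pin down the exact coefficient in the low-photon expansion. Summing $l_x$ such per-position contributions yields \eqref{eq:theta_X} with residue $l_x\cdot O(s_n) = O(n\alpha_n s_n)$; \eqref{eq:theta_Y} follows by swapping the roles of $X$ and $Y$, and \eqref{eq:theta_XY} is obtained either by pooling both active sets or via the chain rule $I(\Theta_X^{l_x}\Theta_Y^{l_y}:\cdot) = I(\Theta_X^{l_x}:\cdot) + I(\Theta_Y^{l_y}:\cdot|\Theta_X^{l_x})$, with the conditional term admitting the same product-state reduction.

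The main subtlety I anticipate lies at the doubly-active positions $i\in \mathcal{T}_X\cap \mathcal{T}_Y$: there $\sigma_i$ is a genuine two-sender MAC state, and one must verify that Lemma~\ref{lemma_D_theta_1} still produces the coefficient $\tfrac{-\kappa\tau}{1+(1-\kappa)N_B}$ rather than a smaller one due to Y's concurrent signal. Fortunately, for typical codewords $|\mathcal{T}_X\cap \mathcal{T}_Y| = O(n\alpha_n\beta_n) = o(n\alpha_n)$, so even a crude per-position bound at collision positions contributes only to the $O(n\alpha_n s_n)$ remainder; nevertheless, making this argument fully rigorous, i.e.\ controlling the typicality event under $\tilde P_{X^n}\times \tilde P_{Y^n}$ and ensuring the averaging over it does not inflate the remainder, is where I would expect to spend most of the technical effort.
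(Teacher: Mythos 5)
Your proposal follows essentially the same route as the paper: exploit the tensor-product structure across channel positions, split the active slots into singly- and doubly-active sets, apply the single-letter expansions of Lemma~\ref{lemma_D_theta_1} per position, and sum. The only difference is at the collision positions $\mathcal{T}_X\cap\mathcal{T}_Y$: rather than bounding their number, the paper evaluates the joint relative entropy there and observes it equals the sum of the two single-sender contributions up to $O(s_n)$, so the collision count $\tilde l$ cancels exactly from the leading term — a slightly cleaner resolution of the subtlety you flag, since $n\alpha_n\beta_n s_n|\log s_n|$ need not be $O(n\alpha_n s_n)$ in general, though it is always $o(n\alpha_n s_n\log s_n)$ and hence harmless to the leading coefficient.
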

\begin{proof}
Note that $\alpha_n, \beta_n, s_n = o(1) = \omega(n^{-\frac{1}{2}})$ and $(\alpha_n+\beta_n)s_n \leq O(n^{-\frac{1}{2}})$, 
and that we have the following expressions of relative entropy shown in Lemma~\ref{lemma_D_theta_1} or in similar fashion:
\begin{align}
    &D\left(\sigma_{\Theta_{X}\Theta_{Y}BI_XI_Y} \right\|\left.\sigma_{\Theta_{X}\Theta_{Y}}\otimes \sigma_{BI_XI_Y}\right) = -\frac{\kappa s_n\log s_n}{1+ (1-\kappa)N_B} + O(s_n)\\
    & D\left(\sigma_{\Theta_{X}\Theta_{Y}BI_XI_Y} \right\|\left.\sigma_{\Theta_{X}}\otimes \sigma_{\Theta_{Y}BI_XI_Y}\right)  \notag\\
    & \approx D\left(\bar\sigma_{\Theta_{X}BI_X} \right\|\left.\bar\sigma_{\Theta_{X}}\otimes \bar\sigma_{BI_X}\right) 
     = -\frac{\kappa\tau s_n\log s_n}{1+ (1-\kappa)N_B} + O(s_n);\\
     & D\left(\sigma_{\Theta_{X}\Theta_{Y}BI_XI_Y} \right\|\left.\sigma_{\Theta_{Y}}\otimes \sigma_{\Theta_{X}BI_XI_Y}\right)  \notag\\
    &\approx D\left(\bar\sigma_{\Theta_{Y}BI_Y} \right\|\left.\bar\sigma_{\Theta_{Y}}\otimes \bar\sigma_{BI_Y}\right) 
     =  -\frac{\kappa(1-\tau) s_n\log s_n}{1+ (1-\kappa)N_B} + O(s_n).
\end{align}
Here $\approx$ means equal up to order $O(s_n)$, and $\bar\sigma_{\Theta_{X} B I_X}$ (resp. $\bar\sigma_{\Theta_{Y} B I_Y}$)  describes the state when only $A_1$ (resp. $A_2$) transmits the message $\theta$ to B while $A_2$ (resp. $A_1$) does not. 
Denote $\tilde{l} \triangleq l_x + l_y - l$ the number of channels where both senders transmit simultaneously.
For \eqref{eq:theta_XY}, we obtain
\begin{align}
    &D\left(\sigma_{\Theta_{X}^{l_x}\Theta_{Y}^{l_y}B^lI_X^{l_x}I_Y^{l_y}} \right\|\left.\sigma_{\Theta_{X}^{l_x}\Theta_{Y}^{l_y}}\otimes \sigma_{B^lI_X^{l_x}I_Y^{l_y}}\right) \notag\\
    & = (l_x - \tilde{l})D\left(\bar\sigma_{\Theta_{X}BI_X} \right\|\left.\bar\sigma_{\Theta_{X}}\otimes \bar\sigma_{BI_X}\right) + (l_y - \tilde{l})D\left(\bar\sigma_{\Theta_{Y}BI_Y} \right\|\left.\bar\sigma_{\Theta_{Y}}\otimes \bar\sigma_{BI_Y}\right)\notag\\
    &\quad + \tilde{l}D\left(\sigma_{\Theta_{X}\Theta_{Y}BI_XI_Y} \right\|\left.\sigma_{\Theta_{X}\Theta_{Y}}\otimes \sigma_{BI_XI_Y}\right)\notag\\
    & = (1-\bar\mu) \frac{-n\kappa(\tau\alpha_n + (1-\tau)\beta_n)s_n\log s_n}{1+(1-\kappa)N_B} + O(n(\alpha_n+\beta_n)s_n)\notag\\
    & =  (1-\bar\mu) \frac{-n\kappa(\tau\alpha_n + (1-\tau)\beta_n)s_n\log s_n}{1+(1-\kappa)N_B} - o(\sqrt{n} \log n).
\end{align}
As for \eqref{eq:theta_X}, we have
\begin{align}
    &D\left(\sigma_{\Theta_{X}^{l_x}\Theta_{Y}^{l_y}B^lI_X^{l_x}I_Y^{l_y}} \right\|\left.\sigma_{\Theta_{X}^{l_x}}\otimes \sigma_{\Theta_{Y}^{l_y}B^lI_X^{l_x}I_Y^{l_y}}\right) \notag\\
    & = (l_x - \tilde{l})D\left(\bar\sigma_{\Theta_{X}BI_X} \right\|\left.\bar\sigma_{\Theta_{X}}\otimes \bar\sigma_{BI_X}\right) + \tilde{l}D\left(\sigma_{\Theta_{X}\Theta_{Y}BI_XI_Y} \right\|\left.\sigma_{\Theta_{X}}\otimes \sigma_{\Theta_{Y} BI_XI_Y}\right)\notag\\
    & = (1-\bar\mu) \frac{-n\kappa\tau\alpha_n s_n\log s_n}{1+(1-\kappa)N_B} + O(n\alpha_ns_n).
\end{align}
Equation \eqref{eq:theta_Y} can be bounded similarly and the proof is done.
\end{proof}
Combining these expansions with the rate bounds, we obtain the final Layer 2 rate guarantees stated in Theorem~\ref{thm:main}:
\begin{align}           
    \log M_{X_2} &\geq -(1-\bar\mu)\frac{n\kappa\tau\alpha_n}{1+(1-\kappa)N_B}s_n\log s_n + o(\sqrt{n}\log n),\\
    \log M_{Y_2} &\geq -(1-\bar\mu)\frac{n\kappa (1-\tau)\beta_n}{1+(1-\kappa)N_B}s_n\log s_n + o(\sqrt{n}\log n),\\
    \log M_{X_2}M_{Y_2} &\geq -(1-\bar\mu)\frac{n\kappa(\tau\alpha_n + (1-\tau)\beta_n)}{1+(1-\kappa)N_B}s_n\log s_n + o(\sqrt{n}\log n).
\end{align}
\subsection{Existence Argument}
To establish the existence of a valid coding scheme, we first bound the ensemble-average error probability. Recalling the total error bound derived in~\eqref{eq:error}, we have:
\begin{align}
    &\mathbf{E}_{\mathcal{C}_{X_1}, \mathcal{C}_{Y_1}, \mathcal{C}_{X_2}, \mathcal{C}_{Y_2}} \mathbf{E}_{S_X, S_Y} \left[
    \mathbf{P}\left(\hat{M}_1 \neq M_1 \vee \hat{M}_2 \neq M_2\right)
    \right] \notag\\
    &\quad\leq 46\varepsilon_1 + 46\varepsilon_2 + 2\sqrt{46\varepsilon_1}.
\end{align}
To demonstrate the existence of a codebook tuple $(\mathcal{C}_{X_1}, \mathcal{C}_{Y_1}, \mathcal{C}_{X_2}, \mathcal{C}_{Y_2})$ that is $\varepsilon$-reliable, we require the expected error probability over the keys to satisfy \eqref{eq:reliable_equa}:
\begin{align}
    P_e = \mathbf{E}_{S_X, S_Y} \left[
    \mathbf{P}\left(\hat{M}_1 \neq M_1 \vee \hat{M}_2 \neq M_2\right)
    \right] \leq \varepsilon. \label{eq:error_bound}
\end{align}
This condition is satisfied by choosing the parameters $\varepsilon_1$ and $\varepsilon_2$ sufficiently small and taking the blocklength $n$ large enough such that the bound becomes negligible compared to the target error:
\begin{align}
    46\varepsilon_1 + 46\varepsilon_2 + 2\sqrt{46\varepsilon_1} \ll \varepsilon.
\end{align}
By applying Markov's inequality, we deduce that a randomly generated codebook tuple satisfies the reliability condition~\eqref{eq:error_bound} with high probability.

Simultaneously, regarding the covertness constraint, the analysis in Section~\ref{sec:covert} establishes that a randomly selected Layer 1 codebook tuple $(\mathcal{C}_{X_1}, \mathcal{C}_{Y_1})$ is $\delta$-covert (satisfying \eqref{eq:covert_equa}) with probability approaching one, provided that $\delta'' \ll \eta - \eta'$.

Finally, by the union bound, the probability that a random codebook fails either the reliability or the covertness criterion can be made strictly less than one. Consequently, the intersection of the sets of reliable and covert codebooks is non-empty, proving the existence of a tuple $(\mathcal{C}_{X_1}, \mathcal{C}_{Y_1}, \mathcal{C}_{X_2}, \mathcal{C}_{Y_2})$ that is simultaneously $\varepsilon$-reliable and $\delta$-covert.

\section*{Acknowledgment}
Parts of this document have received assistance from generative AI tools to aid in the composition; the authors have reviewed and edited the content as needed and take full responsibility for it.

\bibliographystyle{IEEEtran}
\bibliography{references.bib}

@String{IEEE_ITW    = {Proc. of IEEE Information Theory Workshop}}

@String{IEEE_J_IT   = {IEEE Transactions on Information Theory}}

@String{IEEE_J_SAC  = {IEEE Journal of Selected Areas in Communications}}

@String{IEEE_J_SAIT = {IEEE Journal on Selected Areas in Information Theory}}

@Article{Arumugam2018a,  author   = {Keerthi Suria Kumar Arumugam and Matthieu R Bloch},  title    = {Covert Communication over a K-User Multiple Access Channel},  journal  = IEEE_J_IT,  year     = {2019},  volume   = {65},  number   = {11},  pages    = {7020-7044},  month    = nov,  issn     = {0018-9448},  abstract = {We consider a scenario in which K transmitters attempt to communicate covert messages reliably to a legitimate receiver over a discrete memoryless multiple-access channel (MAC) while simultaneously escaping detection from an adversary who observes their communication through another discrete memoryless MAC. We assume that each transmitter may use a secret key that is shared only between itself and the legitimate receiver. We show that each of the K transmitters can transmit on the order of sqrt(n) reliable and covert bits per n channel uses, exceeding which, the warden will be able to detect the communication. We identify the optimal pre-constants of the scaling, which leads to a complete characterization of the covert capacity region of the K-user binary-input MAC. We show that, asymptotically, all sum-rate constraints are inactive unlike the traditional MAC capacity region. We also characterize the channel conditions that have to be satisfied for the transmitters to operate without a secret key.},  doi      = {10.1109/TIT.2019.2930484},  eprint   = {1803.06007},  file     = {:2019-Arumugam-TransIT.pdf:PDF},}

@Article{Yen2005Multiple,  author    = {Yen, Brent J. and Shapiro, Jeffrey H.},  journal   = {Physical Review A},  title     = {Multiple-access bosonic communications},  year      = {2005},  issn      = {1094-1622},  month     = dec,  number    = {6},  pages     = {062312},  volume    = {72},  doi       = {10.1103/physreva.72.062312},  file      = {:2005-Yen-PRA-Multiple-access bosonic communications.pdf:PDF},  groups    = {Quantum networking},  publisher = {American Physical Society (APS)},}

@Article{Bash2015a,  author       = {Bash, Boulat A. and Gheorghe, Andrei H. and Patel, Monika and Habif, Jonathan L. and Goeckel, Dennis and Towsley, Don and Guha, Saikat},  journal      = {{N}ature {C}ommunications},  title        = {Quantum-secure covert communication on bosonic channels},  year         = {2015},  month        = {October},  pages        = {--},  volume       = {6},  abstract     = {Computational encryption, information-theoretic secrecy and quantum cryptography offer progressively stronger security against unauthorized decoding of messages contained in communication transmissions. However, these approaches do not ensure stealth[mdash]that the mere presence of message-bearing transmissions be undetectable. We characterize the ultimate limit of how much data can be reliably and covertly communicated over the lossy thermal-noise bosonic channel (which models various practical communication channels). We show that whenever there is some channel noise that cannot in principle be controlled by an otherwise arbitrarily powerful adversary[mdash]for example, thermal noise from blackbody radiation[mdash]the number of reliably transmissible covert bits is at most proportional to the square root of the number of orthogonal modes (the time-bandwidth product) available in the transmission interval. We demonstrate this in a proof-of-principle experiment. Our result paves the way to realizing communications that are kept covert from an all-powerful quantum adversary.},  comment      = {Supplementary information available for this article at http://www.nature.com/ncomms/2015/151019/ncomms9626/suppinfo/ncomms9626_S1.html},  creationdate = {2016-01-27T00:00:00},  doi          = {10.1038/ncomms9626},  file         = {:2015-Bash-NatureComms.pdf:PDF},  groups       = {Steganography and covert communications, Quantum covert communications},  owner        = {mattbloch},  publisher    = {Nature Publishing Group},}

@Article{Bullock2020,  author       = {Michael S. Bullock and Christos N. Gagatsos and Saikat Guha and Boulat A. Bash},  journal      = IEEE_J_SAC,  title        = {Fundamental Limits of Quantum-Secure Covert Communication over Bosonic Channels},  year         = {2020},  issn         = {1558-0008},  month        = {mar},  number       = {3},  pages        = {471--482},  volume       = {38},  doi          = {10.1109/JSAC.2020.2968995},  file         = {:2020-Bullock-JSAC-Fundamental Limits of Quantum-Secure Covert Communication Over Bosonic Channels.pdf:PDF},  groups       = {Steganography and covert communications, Quantum covert communications},  issue        = {99},  journaltitle = {IEEE Journal on Selected Areas in Communications},  publisher    = {IEEE},}

@Article{Gagatsos2020Covert,  author    = {Christos N. Gagatsos and Michael S. Bullock and Boulat A. Bash},  journal   = IEEE_J_SAIT,  title     = {Covert Capacity of Bosonic Channels},  year      = {2020},  month     = {aug},  number    = {2},  pages     = {555--567},  volume    = {1},  doi       = {10.1109/jsait.2020.3017199},  file      = {:2020-Gagatsos-IEEEJSAIT-Covert capacity of bosonic channels.pdf:PDF},  groups    = {Quantum covert communications},  publisher = {Institute of Electrical and Electronics Engineers ({IEEE})},}

@Article{Hsieh2008Entanglement,
  author    = {Min-Hsiu Hsieh and Igor Devetak and Andreas Winter},
  journal   = {IEEE Transactions on Information Theory},
  title     = {Entanglement-Assisted Capacity of Quantum Multiple-Access Channels},
  year      = {2008},
  month     = {jul},
  number    = {7},
  pages     = {3078--3090},
  volume    = {54},
  doi       = {10.1109/TIT.2008.924726},
  file      = {:2008-Hsieh-IEEETIT-Entanglement assisted capacity of quantum multiple access channels.pdf:PDF},
  groups    = {Quantum information},
  publisher = {Institute of Electrical and Electronics Engineers ({IEEE})},
}

@Article{Shi2021Entanglement,
  author    = {Haowei Shi and Min-Hsiu Hsieh and Saikat Guha and Zheshen Zhang and Quntao Zhuang},
  journal   = {npj Quantum Information},
  title     = {Entanglement-assisted capacity regions and protocol designs for quantum multiple-access channels},
  year      = {2021},
  month     = {may},
  number    = {1},
  pages     = {74},
  volume    = {7},
  doi       = {10.1038/s41534-021-00412-3},
  file      = {:2021-Shi-npjQI-Entanglement assisted capacity regions and protocol designs for quantum multiple access channels.pdf:PDF},
  groups    = {Quantum information},
  publisher = {Nature Publishing Group},
}

@Article{Wang2024Resource,
  author    = {Sheng-Yueh Wang and Shih-Jie Su and Matthieu R. Bloch},
  journal   = {2024 IEEE International Symposium on Information Theory (ISIT)},
  title     = {Resource-Efficient Entanglement-Assisted Covert Communications over Bosonic Channels},
  year      = {2024},
  month     = {jul},
  pages     = {3106--3111},
  doi       = {10.1109/ISIT57864.2024.10619663},
  file      = {:2024-Wang-ISIT-Resource efficient entanglement assisted covert communications over bosonic channels.pdf:PDF},
  groups    = {Quantum covert communications},
  publisher = {Institute of Electrical and Electronics Engineers ({IEEE})},
}

@Article{Sen2018Unions,
  author    = {Pranab Sen},
  journal   = {Sādhana},
  title     = {Unions, intersections and a one-shot quantum joint typicality lemma},
  year      = {2021},
  month     = {sep},
  number    = {188},
  pages     = {1--25},
  volume    = {46},
  doi       = {10.1007/s12046-021-01692-7},
  file      = {:2018-Sen-Sadhana-Unions intersections and a one shot quantum joint typicality lemma.pdf:PDF},
  groups    = {Quantum information},
  publisher = {Springer India},
}

@Article{Cheng2023Quantum,
  author    = {Hao-Chung Cheng and Li Gao and Mario Berta},
  journal   = {arXiv preprint arXiv:2304.12056},
  title     = {Quantum Broadcast Channel Simulation via Multipartite Convex Splitting},
  year      = {2023},
  month     = {apr},
  file      = {:2023-Cheng-arXiv-Quantum broadcast channel simulation via multipartite convex splitting.pdf:PDF},
  groups    = {Quantum information},
  publisher = {arXiv},
  doi       = {10.48550/arXiv.2304.12056},
}

@Article{George2024Coherent,
  author    = {Ian George and Hao-Chung Cheng},
  journal   = {2024 IEEE International Symposium on Information Theory (ISIT)},
  title     = {Coherent Distributed Source Simulation as Multipartite Quantum State Splitting},
  year      = {2024},
  month     = {jul},
  pages     = {1221--1226},
  doi       = {10.1109/ISIT57864.2024.10619329},
  file      = {:2024-George-ISIT-Coherent distributed source simulation as multipartite quantum state splitting.pdf:PDF},
  groups    = {Quantum information},
  publisher = {Institute of Electrical and Electronics Engineers ({IEEE})},
}

@Article{Su2024Achievable,
  author    = {Shang-Jen Su and Shi-Yuan Wang and Matthieu R. Bloch and Zheshen Zhang},
  journal   = {arXiv preprint arXiv:2410.17181},
  title     = {Achievable Entanglement-Assisted Communication Rate using Phase-Modulated Two-Mode Squeezed Vacuum},
  year      = {2024},
  month     = {oct},
  doi       = {10.48550/arXiv.2410.17181},
  file      = {:2024-Su-arXiv-Achievable entanglement assisted communication rate using phase modulated two mode squeezed vacuum.pdf:PDF},
  groups    = {Quantum information},
  publisher = {arXiv},
}

@Article{Bash2015Hiding,
  author    = {Boulat A. Bash and Dennis Goeckel and Don Towsley and Saikat Guha},
  journal   = {IEEE Communications Magazine},
  title     = {Hiding information in noise: fundamental limits of covert wireless communication},
  year      = {2015},
  month     = {dec},
  number    = {12},
  pages     = {26--31},
  volume    = {53},
  doi       = {10.1109/MCOM.2015.7355562},
  file      = {:2015-Bash-IEEECommMag-Hiding information in noise fundamental limits of covert wireless communication.pdf:PDF},
  groups    = {Quantum covert communications},
  publisher = {Institute of Electrical and Electronics Engineers ({IEEE})},
}

@Article{Bash2013Limits,
  author    = {Boulat A. Bash and Dennis Goeckel and Don Towsley},
  journal   = {IEEE Journal on Selected Areas in Communications},
  title     = {Limits of Reliable Communication with Low Probability of Detection on AWGN Channels},
  year      = {2013},
  month     = {sep},
  number    = {9},
  pages     = {1921--1930},
  volume    = {31},
  doi       = {10.1109/JSAC.2013.130923},
  file      = {:2013-Bash-IEEEJSAC-Limits of reliable communication with low probability of detection on AWGN channels.pdf:PDF},
  groups    = {Quantum covert communications},
  publisher = {Institute of Electrical and Electronics Engineers ({IEEE})},
}

@Article{Wang2022Towards,
  author    = {Sheng-Yueh Wang and Turancan Erdoğan and Matthieu R. Bloch},
  journal   = {2022 IEEE International Symposium on Information Theory (ISIT)},
  title     = {Towards a Characterization of the Covert Capacity of Bosonic Channels under Trace Distance},
  year      = {2022},
  month     = {jun},
  pages     = {318--323},
  doi       = {10.1109/ISIT50566.2022.9834394},
  file      = {:2022-Wang-ISIT-Towards a characterization of the covert capacity of bosonic channels under trace distance.pdf:PDF},
  groups    = {Quantum covert communications},
  publisher = {Institute of Electrical and Electronics Engineers ({IEEE})},
}

@Article{Zlotnick2025Entanglement,
  author    = {Elyakim Zlotnick and Boulat A. Bash and Uzi Pereg},
  journal   = {IEEE Transactions on Information Theory},
  title     = {Entanglement-Assisted Covert Communication via Qubit Depolarizing Channels},
  year      = {2025},
  volume    = {71},
  number    = {5},
  pages     = {3693--3706},
  doi       = {10.1109/TIT.2025.3539987},
  file      = {:2025-Zlotnick-IEEETIT-Entanglement assisted covert communication via qubit depolarizing channels.pdf:PDF},
  groups    = {Quantum covert communications},
  publisher = {Institute of Electrical and Electronics Engineers ({IEEE})},
}

@Book{Serafini2023Quantum,
  author    = {Alessio Serafini},
  title     = {Quantum Continuous Variables: A Primer of Theoretical Methods},
  year      = {2023},
  month     = {sep},
  edition   = {2},
  publisher = {CRC Press},
  address   = {Boca Raton, FL, USA},
  doi       = {10.1201/9781003245438},
  isbn      = {978-1-032-15723-8},
  file      = {:2023-Serafini-CRCPress-Quantum continuous variables a primer of theoretical methods 2nd ed.pdf:PDF},
  groups    = {Quantum information},
}

@Article{Grace2022Perturbation,
  author    = {Michael R. Grace and Saikat Guha},
  journal   = {2022 IEEE Information Theory Workshop (ITW)},
  title     = {Perturbation Theory for Quantum Information},
  year      = {2022},
  month     = {nov},
  pages     = {500--505},
  doi       = {10.1109/ITW54588.2022.9965836},
  file      = {:2022-Grace-ITW-Perturbation theory for quantum information.pdf:PDF},
  groups    = {Quantum information},
  publisher = {Institute of Electrical and Electronics Engineers ({IEEE})},
}

@Article{Shi2020Practical,
  author    = {Haowei Shi and Zheshen Zhang and Quntao Zhuang},
  journal   = {Physical Review Applied},
  title     = {Practical route to entanglement-assisted communication over noisy bosonic channels},
  year      = {2020},
  month     = {mar},
  number    = {3},
  pages     = {034029},
  volume    = {13},
  doi       = {10.1103/PhysRevApplied.13.034029},
  file      = {:2020-Shi-PhysRevApplied-Practical route to entanglement assisted communication over noisy bosonic channels.pdf:PDF},
  groups    = {Quantum information},
  publisher = {American Physical Society ({APS})},
}

@Article{Oskouei:2018oeu,
    author = "Oskouei, Samad Khabbazi and Mancini, Stefano and Wilde, Mark M.",
    title = "{Union bound for quantum information processing}",
    journal = "Proc. R. Soc. A",
    volume = "475",
    number = "2221",
    pages = "20180612",
    year = "2019",
    doi = "10.1098/rspa.2018.0612",
    eprint = "1804.08144",
    archivePrefix = "arXiv",
    primaryClass = "quant-ph"
}

@Article{Khatri2021Second,
  author    = {Sumeet Khatri and Eneet Kaur and Saikat Guha and Mark M. Wilde},
  journal   = {IEEE Transactions on Information Theory},
  title     = {Second-Order Coding Rates for Key Distillation in Quantum Key Distribution},
  year      = {2021},
  month     = {aug},
  number    = {8},
  pages     = {5286--5314},
  volume    = {67},
  doi       = {10.1109/TIT.2021.3087222},
  file      = {:2021-Khatri-IEEETIT-Second order coding rates for key distillation in quantum key distribution.pdf:PDF},
  groups    = {Quantum information},
  publisher = {Institute of Electrical and Electronics Engineers ({IEEE})},
}

@Article{Kaur2017Upper,
  author    = {Eneet Kaur and Mark M. Wilde},
  journal   = {Physical Review A},
  title     = {Upper bounds on secret-key agreement over lossy thermal bosonic channels},
  year      = {2017},
  month     = {dec},
  number    = {6},
  pages     = {062318},
  volume    = {96},
  doi       = {10.1103/PhysRevA.96.062318},
  file      = {:2017-Kaur-PhysRevA-Upper bounds on secret key agreement over lossy thermal bosonic channels.pdf:PDF},
  groups    = {Quantum information},
  publisher = {American Physical Society ({APS})},
}

@article{Weedbrook2012Gaussian,
  title = {Gaussian quantum information},
  author = {Weedbrook, Christian and Pirandola, Stefano and Garc\'{\i}a-Patr\'on, Ra\'ul and Cerf, Nicolas J. and Ralph, Timothy C. and Shapiro, Jeffrey H. and Lloyd, Seth},
  journal = {Rev. Mod. Phys.},
  volume = {84},
  issue = {2},
  pages = {621--669},
  numpages = {49},
  year = {2012},
  month = {May},
  publisher = {American Physical Society},
  doi = {10.1103/RevModPhys.84.621},
  url = {https://link.aps.org/doi/10.1103/RevModPhys.84.621}
}

@article{Anderson2022Fundamental,
  title={Fundamental Limits of Thermal-noise Lossy Bosonic Multiple Access Channel},
  author={Anderson, Evan J. D. and Bash, Boulat A.},
  journal={arXiv preprint arXiv:2207.00139},
  year={2022},
  url={https://arxiv.org/abs/2207.00139}
}

@article{Shi_2022, title={Computable limits of optical multiple-access communications}, volume={105}, ISSN={2469-9934}, url={http://dx.doi.org/10.1103/PhysRevA.105.022429}, DOI={10.1103/physreva.105.022429}, number={2}, journal={Physical Review A}, publisher={American Physical Society (APS)}, author={Shi, Haowei and Zhuang, Quntao}, year={2022}, month=feb }

@article{Winter_2001, title={The capacity of the quantum multiple-access channel}, volume={47}, ISSN={0018-9448}, url={http://dx.doi.org/10.1109/18.959287}, DOI={10.1109/18.959287}, number={7}, journal={IEEE Transactions on Information Theory}, publisher={Institute of Electrical and Electronics Engineers (IEEE)}, author={Winter, A.}, year={2001}, pages={3059–3065} }

@article{Ahlswede_1974, title={The Capacity Region of a Channel with Two Senders and Two Receivers}, volume={2}, ISSN={0091-1798}, url={http://dx.doi.org/10.1214/aop/1176996549}, DOI={10.1214/aop/1176996549}, number={5}, journal={The Annals of Probability}, publisher={Institute of Mathematical Statistics}, author={Ahlswede, R.}, year={1974}, month=oct }

@Article{Bloch2015b,  author       = {Matthieu R. Bloch},  journal      = IEEE_J_IT,  title        = {Covert Communication over Noisy Channels: A Resolvability Perspective},  year         = {2016},  issn         = {0018-9448},  month        = may,  number       = {5},  pages        = {2334-2354},  volume       = {62},  abstract     = {We consider the situation in which a transmitter attempts to communicate reliably over a discrete memoryless channel, while simultaneously ensuring covertness (low probability of detection) with respect to a warden, who observes the signals through another discrete memoryless channel. We develop a coding scheme based on the principle of channel resolvability, which generalizes and extends prior work in several directions. First, it shows that irrespective of the quality of the channels, it is possible to communicate on the order of  \sqrt {n} reliable and covert bits over  n channel uses if the transmitter and the receiver share on the order of  \sqrt {n} key bits. This improves upon earlier results requiring on the order of  \sqrt {n}\log n key bits. Second, it proves that if the receiver's channel is better than the warden's channel in a sense that we make precise, it is possible to communicate on the order of  \sqrt {n} reliable and covert bits over  n channel uses without a secret key. This generalizes earlier results established for binary symmetric channels. We also identify the fundamental limits of covert and secret communications in terms of the optimal asymptotic scaling of the message size and key size, and we extend the analysis to Gaussian channels. The main technical problem that we address is how to develop concentration inequalities for low-weight sequences. The crux of our approach is to define suitably modified typical sets that are amenable to concentration inequalities.t},  creationdate = {2015-03-31T00:00:00},  doi          = {10.1109/TIT.2016.2530089},  eprint       = {1503.08778},  file         = {:2016-Bloch-IEEETransIT.pdf:PDF},  groups       = {Steganography and covert communications},  keywords     = {AWGN channels;Encoding;Memoryless systems;Noise measurement;Reliability theory;Covert communications;Shannon theory;low probability of detection;physical-layer security},  owner        = {mattbloch},}

@Article{Wang2016b,  author       = {L. Wang and G. W. Wornell and L. Zheng},  journal      = IEEE_J_IT,  title        = {Fundamental Limits of Communication With Low Probability of Detection},  year         = {2016},  issn         = {0018-9448},  month        = jun,  number       = {6},  pages        = {3493--3503},  volume       = {62},  abstract     = {This paper considers the problem of communication over a discrete memoryless channel (DMC) or an additive white Gaussian noise (AWGN) channel subject to the constraint that the probability that an adversary who observes the channel outputs can detect the communication is low. In particular, the relative entropy between the output distributions when a codeword is transmitted and when no input is provided to the channel must be sufficiently small. For a DMC whose output distribution induced by the {\textquotedblleft}off{\textquotedblright} input symbol is not a mixture of the output distributions induced by other input symbols, it is shown that the maximum amount of information that can be transmitted under this criterion scales like the square root of the blocklength. The same is true for the AWGN channel. Exact expressions for the scaling constant are also derived.},  creationdate = {2016-10-24T00:00:00},  doi          = {10.1109/TIT.2016.2548471},  file         = {:2016-Wang-IEEETransIT.pdf:PDF},  groups       = {Steganography and covert communications},  keywords     = {AWGN channels, probability, signal detection, telecommunication channels, telecommunication security, AWGN channel, DMC, additive white Gaussian noise channel, detection probability, discrete memoryless channel, Entropy, Memoryless systems, Monte Carlo methods, Receivers, Switches, Transmitters, Fisher information, Low probability of detection, covert communication, information-theoretic security},  owner        = {mattbloch},}

@article{Bouette_2025, title={Covert Communication Over Additive-Noise Channels}, volume={71}, ISSN={1557-9654}, url={http://dx.doi.org/10.1109/TIT.2024.3522151}, DOI={10.1109/tit.2024.3522151}, number={3}, journal={IEEE Transactions on Information Theory}, publisher={Institute of Electrical and Electronics Engineers (IEEE)}, author={Bouette, Cécile and Luzzi, Laura and Wang, Ligong}, year={2025}, month=mar, pages={2157–2169} }

@InProceedings{Wang2016c,  author    = {L. Wang},  booktitle = IEEE_ITW,  title     = {Optimal throughput for covert communication over a classical-quantum channel},  year      = {2016},  address   = {Cambridge, UK},  month     = {September},  pages     = {364--368},  abstract  = {This paper considers the problem of communication over a memoryless classical-quantum wiretap channel subject to the constraint that the eavesdropper on the channel should not be able to learn with high confidence whether the legitimate parties are using the channel to communicate or not. Specifically, the relative entropy between the output quantum states at the eavesdropper when a codeword is transmitted and when no input is provided must be sufficiently small. Extending earlier works, this paper proves the {\textquotedblleft}square-root law{\textquotedblright} for a broad class of classical-quantum channels: the maximum amount of information that can be reliably and covertly transmitted over n uses of such a channel scales like $\sqrt{}$n. The scaling constant is also determined.},  doi       = {10.1109/ITW.2016.7606857},  file      = {:2016-Wang-ITW.pdf:PDF},  groups    = {Steganography and covert communications},  keywords  = {entropy, quantum communication, telecommunication channels, telecommunication security, classical-quantum channels, covert communication, eavesdropper, memoryless classical-quantum wiretap channel, optimal throughput, Conferences, Entropy, Hilbert space, Quantum entanglement, Receivers, Throughput},}

@InProceedings{Bounhar2025Capacity,  author    = {Bounhar, Abdelaziz and Sarkiss, Mireille and Wigger, Michèle},  booktitle = IEEE_ITW,  title     = {Capacity-Key Tradeoff in Covert Communication},  year      = {2025},  address   = {Sydney, Australia},  month     = sep,  pages     = {1--6},  publisher = {IEEE},  doi       = {10.1109/itw62417.2025.11240296},  groups    = {Steganography and covert communications},}

@Article{Cox2023Transceiver,  author    = {Ali Cox and Quntao Zhuang and Christos N. Gagatsos and Boulat Bash and Saikat Guha},  journal   = {Physical Review Applied},  title     = {Transceiver Designs Approaching the Entanglement-Assisted Communication Capacity},  year      = {2023},  month     = {jun},  number    = {6},  pages     = {064015},  volume    = {19},  doi       = {10.1103/physrevapplied.19.064015},  file      = {:2023-Cox-PRApplied-Transceiver Designs Approaching the Entanglement-Assisted Communication Capacity.pdf:PDF},  groups    = {Quantum communication & detection},  publisher = {American Physical Society ({APS})},}

@article{Leditzky_2020, title={Playing games with multiple access channels}, volume={11}, ISSN={2041-1723}, url={http://dx.doi.org/10.1038/s41467-020-15240-w}, DOI={10.1038/s41467-020-15240-w}, number={1}, journal={Nature Communications}, publisher={Springer Science and Business Media LLC}, author={Leditzky, Felix and Alhejji, Mohammad A. and Levin, Joshua and Smith, Graeme}, year={2020}, month=mar }

@article{Pereg_2025, title={The Multiple-Access Channel With Entangled Transmitters}, volume={71}, ISSN={1557-9654}, url={http://dx.doi.org/10.1109/TIT.2024.3516507}, DOI={10.1109/tit.2024.3516507}, number={2}, journal={IEEE Transactions on Information Theory}, publisher={Institute of Electrical and Electronics Engineers (IEEE)}, author={Pereg, Uzi and Deppe, Christian and Boche, Holger}, year={2025}, month=feb, pages={1096–1120} }

\end{document}